\numberwithin{equation}{section}
\newtheorem{letterthm}{Theorem}
\newtheorem{theo}{Theorem}[section]
\newtheorem{lemma}[theo]{Lemma}
\newtheorem{corollary}[theo]{Corollary}
\newtheorem{prop}[theo]{Proposition}
\newtheorem{proposition}[theo]{Proposition}
\theoremstyle{definition} 
\newtheorem{definition}[theo]{Definition}
\newtheorem{notation}[theo]{Notation}
\newtheorem{remark}[theo]{Remark}
\newtheorem{example}[theo]{Example}
\DeclareMathOperator{\Leb}{Leb}
\DeclareMathOperator{\Gr}{Gr}
\DeclareMathOperator{\id}{id}
\newcommand{\act}{\curvearrowright}
\newcommand{\al}{\alpha}
\newcommand{\oscrA}{{\overline{\mathscr{A}}}}
\DeclareMathOperator{\Ad}{Ad}
\DeclareMathOperator{\alg}{alg}
\DeclareMathOperator{\Aut}{Aut}
\newcommand{\scrB}{\mathscr B}
\DeclareMathOperator{\Calg}{C*-alg}
\newcommand{\scrC}{\mathscr C}
\newcommand{\cC}{\mathcal C}
\newcommand{\C}{\mathbf C}
\newcommand{\Ch}{\textup{Char}}
\newcommand{\D}{\mathbf D}
\newcommand{\cD}{\mathcal D}
\newcommand{\dn}{d_1=0<d_2<\cdots<d_n<d_{n+1}=1}
\newcommand{\de}{\delta}
\DeclareMathOperator{\dis}{dis}
\newcommand{\cF}{\mathcal F}
\newcommand{\cAF}{\mathcal{AF}}
\newcommand{\fH}{\mathfrak H}
\newcommand{\scrH}{\mathscr H}
\DeclareMathOperator{\Hom}{Hom}
\DeclareMathOperator{\Hilb}{Hilb}
\newcommand{\ga}{\gamma}
\newcommand{\Ga}{\Gamma}
\newcommand{\Gd}{\oplus_\D G} 
\newcommand{\cI}{\mathcal I}
\newcommand{\la}{\lambda}
\newcommand{\scrM}{\mathscr M}
\newcommand{\mbD}{m_\beta^\D}
\newcommand{\scrN}{\mathscr N}
\newcommand{\N}{\mathbf{N}}
\newcommand{\ot}{\otimes}
\newcommand{\ovt}{\overline\otimes}
\DeclareMathOperator{\Prob}{Prob}
\newcommand{\scrQ}{\mathscr Q}
\newcommand{\R}{\mathbf{R}}
\DeclareMathOperator{\Rot}{Rot}
\newcommand{\fS}{\mathbf{S}}
\DeclareMathOperator{\Set}{Set}
\newcommand{\cSF}{\mathcal{SF}}
\DeclareMathOperator{\supp}{supp}
\newcommand{\fT}{\mathfrak T}
\newcommand{\utriv}{\textup{triv}}
\DeclareMathOperator{\Span}{span}
\DeclareMathOperator{\target}{target}
\DeclareMathOperator{\tr}{tr}
\DeclareMathOperator{\Tr}{Tr}
\newcommand{\cU}{\mathcal U}
\newcommand{\varep}{\varepsilon}
\newcommand{\wh}{\widehat}
\newcommand{\Z}{\mathbf{Z}}
\begin{document}

\title[Gauge theory and Thompson's groups]{Operator-algebraic construction of gauge theories and Jones' actions of Thompson's groups}
\thanks{Both authors were partially supported by European Research Council Advanced Grant 669240 QUEST. AS was supported by Alexander-von-Humboldt Foundation through a Feodor Lynen Research Fellowship.
AB is supported by a University of New South Wales Sydney starting grant.}
\author{Arnaud Brothier and Alexander Stottmeister}
\address{Arnaud Brothier\\ School of Mathematics and Statistics, University of New South Wales, Sydney NSW 2052, Australia\\
Tel.: +612 9385 7077}
\email{arnaud.brothier@gmail.com\endgraf
\url{https://sites.google.com/site/arnaudbrothier/}}
\address{Alexander Stottmeister\\Department of Mathematics, University of Rome Tor Vergata, Via della Ricerca Scientifica 00133 Roma, Italy}
\email{alexander.stottmeister@gmail.com}

\thispagestyle{empty}

\begin{abstract}
Using ideas from Jones, lattice gauge theory and loop quantum gravity, we construct 1+1-dimensional gauge theories on a spacetime cylinder.
Given a separable compact group $G$, we construct localized time-zero fields on the spatial torus as a net of C*-algebras together with an action of the gauge group that is an infinite product of $G$ over the dyadic rationals and, using a recent machinery of Jones, an action of Thompson's group $T$ as a replacement of the spatial diffeomorphism group. \\
Adding a family of probability measures on the unitary dual of $G$ we construct a state and obtain a net of von Neumann algebras carrying a state-preserving gauge group action.
For abelian $G$, we provide a very explicit description of our algebras. 
For a single measure on the dual of $G$, we have a state-preserving action of Thompson's group and semi-finite von Neumann algebras.
For $G=\fS$ the circle group together with a certain family of heat-kernel states providing the measures, we obtain hyperfinite type III factors with a normal faithful state providing a nontrivial time evolution via Tomita-Takesaki theory (KMS condition). 
In the latter case, we additionally have a non-singular action of the group of rotations with dyadic angles, as a subgroup of Thompson's group $T$, for geometrically motivated choices of families of heat-kernel states.
\end{abstract}

\maketitle

\section*{Introduction}
Jones recently discovered a very general process of constructing actions of Thompson's groups $F<T<V$ and more generally for any group of fractions, see \cite{Jones17-Thompson,Jones16-Thompson}.
This discovery arose from an attempt to build a conformal field theory directly from a subfactor using the formalism of Jones' planar algebras:
Thompson's group $T$ (a group of local scale transformations and translations) is seen as a replacement of the positive diffeomorphisms of the circle that is the classical symmetry group of a chiral conformal field theory. A similar construction has been invoked in a particular approach to the canonical quantization of general relativity, namely loop quantum gravity, to implement in parts the concept of background independence, i.e.~covariance w.r.t.~(spatial) diffeomorphisms,  see \cite{Thiemann-07-QuantumGR, Baez-96-Spin-Networks, Ashtekar-95-diffeomorphism-invariant-theories} and references therein.
{Note that Thompson's group $T$ is only a weak substitute of the diffeomorphism group. Indeed, the action of the rotation subgroup of $T$ (rotations with dyadic rationals for angle) is generically very discontinuous, see \cite{Jones16-Thompson} and \cite{Brot-Jones18-2} for explicit computations.}
This project, which the present article is part of, has been motivated by the aforesaid similarities between Jones' machinery and constructions in loop quantum gravity, in particular the operator-algebraic treatment by one of the authors \cite{Stottmeister-Thiemann-16-Coherent2, Stottmeister-Thiemann-16-Coherent3}. As the latter are partially motivated by classical ideas in lattice gauge theory, see \cite{Kogut-83-Lattice-Gauge, Creutz-85-Quarks}, we start our construction from much simpler representation-theoretical data, i.e.~a compact group, than the more general approach by Jones involving subfactors in a rather different direction, see \cite{Jones18-Hamiltonian}. Nevertheless, this leads to nontrivial results: the construction of one of the simplest gauge theories, Yang-Mills theory in 1+1 dimensions (YM$_{1+1}$), cf.~for example \cite{Witten-91-Quantum-Gauge, Sengupta-01-Yang-Mills}, together with an action by Thompson's group $V$ or its rotation subgroup.
In this article, we present a precise operator-algebraic construction of the model and focus on a detailed exposition of mathematical results following closely the notation used by Jones. In our companion article \cite{Brot-Stottmeister-Phys}, we give a complementary description of the construction in the context of lattice gauge theory and focus on physics related aspects of the model, especially renormalization, and its relation to existing treatments of YM$_{1+1}$, see \cite{Dimock-96-Yang-Mills, Ashtekar-97-Yang-Mills, Driver-99-Yang-Mills}. Moreover, we point out that our rather complete operator-algebraic formulation of YM$_{1+1}$ allows to understand the differences among those previous accounts in a very precise manner. Our construction also bears certain similarities as well as critical differences to more recent operator-algebraic treatments of lattice gauge theory \cite{Grundling-Rudolph-13-Infinite-Lattice-QCD, Grundling-Rudolph-17-Dynamics-QCD, Arici-Stienstra-Suijekom18}, which we explain in detail in our companion article.\\[0.25cm]
Let us give a more detailed summary of the main construction and the article's content:\\[0.1cm]
Given a separable compact group $G$ (e.g.~a compact Lie group), we consider the weakly dense C*-subalgebra $M\subset B(L^2(G))$ generated by the continuous function $\scrC(G)$ acting by pointwise multiplications and the group ring $\C[G]$ acting by convolution.
The choice of $M$ is driven by the construction of a kinematical C*-algebraic theory and the necessity of having unital completely positive maps between algebras (a condition that is natural w.r.t.~renormalization and spatial locality, see {\cite[Sections 2.3 and 4]{Brot-Stottmeister-Phys}} for details).
If we only desired to have a von Neumann-algebraic model, we could directly pass to $M=B(L^2(G))$ or any weakly dense *-subalgebra, {see Remark \ref{rem:subalg}.}
For a standard dyadic partition (s.d.p.~in short) $t$ of the torus $\R/\Z$, we define the C*-algebra at level $t$ as $M_t:=\ot_I M$ a (minimal) tensor product of copies of $M$ indexed by the intervals $I$ associated with the partition.
This comes along with a gauge group $\prod_{\partial I} G$ acting on $M_t$, where the product is indexed by the left boundary $\partial I$ of the intervals. 
We fix a unital embedding $R:M\to M\ot M$ which provides unital embeddings $\iota_t^s:M_t\to M_s$ for $s$ a refinement of the partition $t$, i.e.~$t\leq s.$
The map $R$ is chosen such that it reflects the group structure of $G$, or more precisely the commutation relations between $\scrC(G)$ and $\C[G]$, and is equivariant w.r.t.~the gauge group action. This way, we obtain an inductive-limit C*-algebra $\scrM_0$ and, using Jones' machinery, an action of Thompson's group $V$ on $\scrM_0$ which is one of the main point of our construction.
Additionally, we have an action of the gauge group $\prod_\D G$ that is the infinite product of $G$ indexed by the dyadic rationals $\D$. 
The C*-algebra $\scrM_0$ is precisely described by a (discrete) crossed-product $\scrC(\oscrA)\rtimes \oplus_\D G$, where $\oscrA$ can be interpreted as a compact Hausdorff space of generalized connections \cite{Ashtekar-Lewandowski-95-Projective-Differential} and $\oplus_\D G$ is the infinite direct sum of $G$ over the dyadic rationals serving as a space of compatible conjugate momenta. 
Additionally, there are explicit formulas for the actions of Thompson's group $V$ and the gauge group $\prod_\D G$ on $\scrM_{0}$, see sections \ref{sec:M_0} and \ref{sec:gauge}.
Moreover, because of the aforementioned unitality, it is possible to localize our construction w.r.t.~connected open subsets $O$ of the torus, thus giving us a net of C*-algebras $\scrM_0(O)$ corresponding to time-zero fields localized in some region of space.
The localization is realized by considering at each finite level $t$ the C*-subalgebra $M_t(O)\subset M_t:= \ot_I M$ that is given by the unital diagonal embedding of the tensor product of $M$ indexed by those intervals $I$ contained inside $O$.
It follows that the localized algebras are stable under the action of the gauge group and geometrically covariant w.r.t.~Thompson's group $T$ that is the restriction to $V$ of elements that are homeomorphisms of the torus. Elements of $V$ only act continuously on the torus up to finitely many points and, thus, $vO$ is not necessarily connected for $v\in V$.
The C*-algebraic framework provides a kinematical model with a notion of (infinitely divisible, homogeneous, Cantor-like) space subject to spatial covariance given by Thompson's group $T$ and compatible inner degrees of freedom subject to the action of the gauge group. Similar to the point of view advocated in \cite{Jones17-Thompson}, we consider $\scrM_{0}$ as a \textit{semi-continuum} limit of field algebras, which fits with the Cantor-like structure of the dyadic rationals $\D$.\\[0.1cm]
Notably, the net of C*-algebras $\scrM_0(O)$ satisfies axioms similar to those of a conformal net \cite{Gabbiani-Froehlich-93-Conformal}, see Proposition \ref{prop:observables}.\\[0.25cm]
As hinted at above, our choice of unital embeddings $\iota_t^s:M_t\to M_s$, $t\leq s$, is partially motivated by considerations from Wilson's approach to the renormalization group \cite{Wilson-71-Renormalization1, Wilson-71-Renormalization2}, see also \cite{Fernandez-92-Random-Walks}, and, thus, with guidance from the setting of statistical and Euclidean field theory, we assume that a \textit{full continuum limit} requires additional data in terms of a collection of states $\omega_{t}$ as replacement for the measures in the commutative setting, see {\cite[Section 4]{Brot-Stottmeister-Phys}} for further details. Furthermore, we should be able to obtain a natural notion of time evolution in the (semi-)continuum and, thus, a 1+1 dimensional theory via Tomita-Takesaki theory whenever the states give rise to a limit state $\varpi$ on $\scrM_{0}$ satisfying the Kubo-Martin-Schwinger (KMS) condition.\\[0.1cm]
Thus, our next step consists in providing said states which give us von Neumann algebras together with a faithful state at every level $t$ via the Gelfand-Naimark-Segal (GNS) construction.
To this end we consider a family of strictly positive probability measures $m_d\in\Prob(\wh G)$ on the discrete, unitary dual of $G$ indexed by the dyadic rational $d\in\D$.
Here, our leading example involves a compact Lie group $G$ together with a family of heat-kernel states providing the measures, which are just the Gibbs (or KMS) states associated with the Kogut-Susskind Hamiltonian \cite{Kogut-Susskind75} of YM$_{1+1}$ at level $t$ (viewing the partition $t$ as a 1-dimensional lattice). The family of measures $m_d$ provides a family of faithful states $\omega_t$ on $M_t$ such that the inclusion maps $\iota_t^s:M_t\to M_s$ for $t\leq s$ are state-preserving. By standard reasoning, we have a limit state $\varpi:=\varprojlim_t \omega_t$ on the C*-limit $\scrM_0:=\varinjlim_t M_t$. Performing the GNS construction, we obtain a von Neumann algebra equipped with a normal state $(\scrM,\varpi)$. Although, it is not obvious under which conditions on the family of states $\omega_t$ the state $\varpi$ extends to a faithful state on the von Neumann algebra $\scrM$, a simple argument shows that the GNS representation of $\varpi$ is faithful, see Section \ref{sec:state-prelim}. 
Moreover, the gauge group acts in a state-preserving way for any choice of measures $m_d$.
If each measure $m_d$ equals a fixed one, we can show that the Jones' action $V\act\scrM_0$ on the C*-algebra $\scrM_0$ extends to a state-preserving action on the von Neumann algebra $(\scrM,\varpi)$. However, besides this restricted case, there appears to be in general no simple way to decide whether the Jones' action $V\act \scrM_0$ extends to an action on $\scrM$.\\[0.1cm]
Picking up the renormalization theme once more, it interesting to observe that it is possible to cast our construction into a form that is very similar to the \textit{multi-scale entanglement renormalization ansatz} (MERA) \cite{Vidal-08-MERA, Evenbly-Vidal-15-TNR, Evenbly-Vidal-16-TNR-MERA} by mapping our inductive limit $\iota_t^s:(M_t, \omega_t)\to (M_s,\omega_s)$, $t\leq s$, to its standard form corresponding to its Bratteli diagram \cite{Evans_Kawahigashi_book}. In the language of \cite{Vidal-08-MERA}, the system of unitaries $U_t$ associated with this mapping serves as disentanglers while the maps induced by the GNS construction from the standard unital embeddings of the Bratteli diagram yield the isometries, see {\cite[Sections 3.4 and 4]{Brot-Stottmeister-Phys}} for further explanations.\\
Following this condensed outline of the main construction, let us outline the article's main results:\\[0.1cm]
In order to further analyze the issues raised above, we specialize in this article to the case when $G$ is abelian and leave the general case for future research. Thereby, we are able to give a precise description of $(\scrM,\varpi)$ and to check when the state $\varpi$ is faithful and when Thompson's group acts on $\scrM$.\\[0.1cm]
Firstly, we consider the case, when all measures $m_d$ are equal to a single one $m\in\Prob(\wh G)$. Then, we obtain a precise statement about the structure of $(\scrM,\varpi)$ that is a crossed-product von Neumann algebra, see Theorem \ref{theo:singlemes}.
We recall here two opposite cases where $\wh G$ is torsion free and where $\wh G$ is finite equipped with its Haar measure. 
\begin{letterthm}\label{theo:intro-I}
Assume that $G$ is a separable compact abelian group and $m_d$ is equal to a fixed measure $m\in\Prob(\wh G)$ for any $d\in\D.$\\
The von Neumann algebra $\scrM$ is semifinite.
Both, the action of the gauge group and the action of Thompson's group $V$ extend to state-preserving actions on $(\scrM,\varpi)$.\\
If $\wh G$ is torsion free, $(\scrM,\varpi)$ will be isomorphic to a type I$_\infty$ von Neumann algebra with diffuse center equipped with a non-faithful state.\\
If $\wh G$ is finite and $m=m_d$ is the Haar measure, $(\scrM,\varpi)$ will be the hyperfinite II$_1$ factor equipped with its trace.
\end{letterthm}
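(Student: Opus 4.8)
The plan is to route everything through the explicit crossed-product presentation of Theorem \ref{theo:singlemes}. In the homogeneous abelian situation I expect that identification to read $(\scrM,\varpi)\cong(A\rtimes\Ga,\varpi)$ as a von Neumann crossed product, where $A=L^\infty(\oscrA,\mu)$ is the weak closure of $\scrC(\oscrA)$ in the GNS representation of $\varpi$, $\mu$ is the limit probability measure on $\oscrA$ determined by $m$, and $\Ga=\oplus_\D G$ (equivalently $\oplus_\D\wh G$ after Pontryagin duality) acts as the translations dual to the conjugate momenta. Every assertion below is then read off from the pair $(A,\Ga\act A)$, so the first step is simply to make this identification and the formula for $\mu$ precise.

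For semifiniteness I would exploit that a single measure is placed at every dyadic rational: homogeneity forces $\mu$ to be genuinely $\Ga$-invariant (not just quasi-invariant), hence the canonical trace on the abelian algebra $A$ is $\Ga$-invariant. Semifiniteness of $\scrM=A\rtimes\Ga$ is then the standard statement that the crossed product of a semifinite algebra by a trace-preserving action of a discrete group carries a dual normal semifinite faithful trace; this is the only point at which the coincidence of all the $m_d$ is really needed for semifiniteness. The extension of the two actions I would handle separately. The gauge group acts in a state-preserving manner at each finite level $t$ by construction, so it passes to a normal state-preserving action on the GNS closure with nothing further to check. For Thompson's group $V$ I would invoke the single-measure extension already announced above: when $m_d\equiv m$ the disentangler unitaries implementing Jones' action intertwine the states $\omega_t$, so $V\act\scrM_0$ is spatially implemented on the GNS space by $\varpi$-preserving unitaries and therefore extends to a state-preserving action of $V$ on $\scrM$.

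The type classification is where the work lies, and the dichotomy should be governed by the qualitative behaviour of $\mu$, i.e.~of the convolution structure generated by $m$ on $\wh G$. In the \emph{torsion-free} case I would show that torsion-freeness of $\wh G$ makes the $\Ga$-action on $(\oscrA,\mu)$ smooth, with standard and nonatomic orbit space $Z=\oscrA/\Ga$; for an (essentially free) action of a discrete group with standard diffuse orbit space the crossed product is $L^\infty(Z,\nu)\ovt B(\ell^2(\Ga))$, that is type I$_\infty$ with diffuse center $L^\infty(Z,\nu)$. Non-faithfulness of $\varpi$ I would then extract from its central disintegration: the fibrewise densities are assembled from an infinite product of copies of $m$, whose tail makes them degenerate on a set of positive $\nu$-measure, so that the support projection $s(\varpi)$ is a proper central projection.

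Pinning down this support projection — equivalently, controlling the tail of the product/convolution of the $m$'s, which is exactly what distinguishes torsion-free $\wh G$ (escaping, diffuse charge, smooth action) from the recurrent finite case — is the main obstacle. In the \emph{finite-Haar} case the computation is cleaner: when $\wh G$ is finite and $m$ is Haar the single-site state on $M\cong M_{|G|}(\C)$ is the normalized trace, hence each $\omega_t$ is the normalized trace on the matrix algebra $M_t$ and the unital maps $\iota_t^s$ are trace-preserving; thus $\scrM_0$ is AF with a distinguished trace and $(\scrM,\varpi)$ is its trace-GNS closure, an infinite-dimensional AFD, hence hyperfinite, finite von Neumann algebra. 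It then remains only to see it is a factor, which I would obtain from the asymptotic triviality of the relative commutants of the $M_t$ forced by the mixing of the Jones connecting maps, so that $\varpi$ is a factor trace; by uniqueness of the hyperfinite II$_1$ factor this yields $(\scrM,\varpi)\cong(R,\tau)$.
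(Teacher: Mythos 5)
Your plan hinges on the identification $(\scrM,\varpi)\cong\bigl(L^\infty(\oscrA,\mu)\rtimes\Ga,\varpi\bigr)$ with $\Ga=\oplus_\D G$, and this identification is false; everything downstream of it (semifiniteness, the torsion-free case) inherits the error. The limit state does \emph{not} vanish on the group part of $\scrC(\oscrA)\rtimes_{\alg}\oplus_\D G$: by Theorem \ref{omegaNC}, $\varpi(b u_g)=\bigl(\int_\oscrA b\, dm_\oscrA\bigr)\prod_{d\in\supp(g)}h(g(d)^{-1})$, which is nonzero for $g\neq e$. Consequently the GNS representation is not the regular (crossed-product) representation, and the weak closure is not the von Neumann crossed product of $L^\infty(\oscrA,m_\oscrA)$ by $\oplus_\D G$ --- the paper flags exactly this point just before Theorem \ref{omegaNC} (``the state is nontrivial on the group acting part''). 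You can also see the failure internally: the action $\oplus_\D G\act(\oscrA,m_\oscrA)$ is free, ergodic and p.m.p.~for \emph{every} compact $G$ and \emph{every} family $(m_d)$, because $m_\oscrA$ is the projective limit of Haar measures and does not depend on $m$ at all; so your model would always be a finite von Neumann algebra carrying a faithful normal trace, contradicting the type I$_\infty$, diffuse-center, non-faithful-state conclusion you are trying to prove in the torsion-free case. The same observation shows that invariance of $\mu$ cannot be ``the only point at which the coincidence of all the $m_d$ is really needed'': non-constant families produce type III factors (Theorem \ref{theo:type}) even though $m_\oscrA$ is still invariant there.

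The missing ideas are the ones the paper's proof of Theorem \ref{theo:singlemes} is built on. After Fourier transform the state does become the canonical one (vanishing off the identity), but the acting group is not $\oplus_\D\wh G$ as your parenthetical asserts: since $\wh R(\la_\chi)=\la_\chi\ot\la_\chi$ is the \emph{diagonal} embedding, the limit of the groups $\wh G_t$ is the group $\wh G_{fr}$ of locally constant maps $\D\to\wh G$, whose nontrivial elements have infinite support. This is decisive: the translation action of $\wh G_{fr}$ on $({\wh G}^\D,m^\D)$ is in general not even nonsingular; by Kakutani's theorem (Lemma \ref{lem:Kakutani}) each translate $g_*m^\D$ is either \emph{equal} to $m^\D$ or mutually \emph{singular} to it, and this equal-or-singular dichotomy --- not measure invariance on $\oscrA$ --- is where the hypothesis $m_d\equiv m$ actually enters. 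The paper then sets $N=\{\chi\in\wh G:\chi_*m=m\}$, builds the measure-preserving model $\scrB=L^\infty({\wh G}^\D\times\wh G_{fr}/N_{fr},\,m^\D\ot\mu_c)\rtimes\wh G_{fr}$, and identifies $\scrM\cong\scrB$ via a disjointness-of-representations argument that crucially uses the mutual singularity; semifiniteness follows from invariance of the infinite measure $m^\D\ot\mu_c$, and torsion-freeness gives $N=\{e\}$, hence $\scrM\cong L^\infty({\wh G}^\D,m^\D)\ovt B(\ell^2(\wh G_{fr}))$ with the non-faithful vector state at $\de_e$ on the type I factor. Your ``smooth action with standard orbit space'' mechanism cannot substitute for this: for the action you actually wrote down, orbits are dense and the action is ergodic, not smooth. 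By contrast, two pieces of your proposal are essentially right and agree with the paper: the extension of the gauge and Thompson-$V$ actions (state preservation at the C*-level plus GNS, Theorem \ref{omegaNC}), and the finite-Haar case, where $M_t$ is a full matrix algebra, $\omega_t$ its normalized trace, and the trace-GNS closure of the resulting UHF algebra is the hyperfinite II$_1$ factor.
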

We note that it is not possible to obtain nontrivial time evolution using Tomita-Takesaki theory in the setup of Theorem \ref{theo:intro-I} since we do not have a faithful non-tracial state $\varpi$ even though the states $\omega_{t}$ are faithful at any finite level $t$. Moreover, all relevant compressions will provide either a non-faithful state or a faithful tracial state resulting in trivial modular time evolutions. Nevertheless, it is possible to obtain a sensible Hamiltonian $H$ affiliated with the algebra $\scrM$ when the state $\varpi$ is constructed from the limiting case $h_{d} = 1$ that results from the degenerate probability measure $m(\{\pi\}) = \delta_{\pi,\pi_{\utriv}}$, see {\cite[Section 3.4]{Brot-Stottmeister-Phys}}. This particular limit state $\varpi$ yields an irreducible representation of $\scrM_{0}$ and is known as the strong-coupling vacuum in lattice gauge theory or Ashtekar-Isham-Lewandowski state in loop quantum gravity, see \cite{Kogut-83-Lattice-Gauge, Ashtekar-Lewandowski-94-Representation-Theory}. \\[0.1cm]
Secondly, we consider a more involved construction with a non-constant family of measures.
For clarity of the presentation, we assume that $G=\fS$ is the circle group (physically corresponding to a pure electromagnetic field) but many of the arguments apply to any compact, separable abelian group, see Remark \ref{rem:generalM}.
As indicated above, we choose a physically motivated class of probability measures on the Pontyagin dual $\Z$ such that the corresponding states are given by the Gibbs or KMS states, with possibly inhomogeneous temperature distribution, of the Kogut-Susskind Hamiltonian \cite{Kogut-Susskind75}. In the 1+1-dimensional setting, the latter is essentially given by bi-invariant Laplacian or quadratic Casimir $\Delta_{G}$ of $G$, and, thus, in the specific case at hand the measures are discrete Gaussians with Fourier transform given by Jacobi theta functions.
Indeed, we have $m_d(\{n\}) = Z_{\beta(d)}^{-1} \exp(-n^2 \beta(d)/2)$ for $d\in\D$, where $\beta:\D\to \R_{>0}$ is a map typically tending to zero for $d$ going to infinity, and $Z_{\beta(d)}=\sum_{n\in\Z}\exp(-n^2 \beta(d)/2)$ is the normalization constant (or state sum). 
We deduce the following result, see Theorem \ref{theo:type}.
\begin{letterthm}\label{theo:intro-type}
Consider the circle group $G=\fS$ and a family of heat-kernel states $m_{d}\in\Prob(\Z)$ parametrized by a function $\beta:\D\to\R_{>0}$.
Assume that $\beta$ is $p$-summable (i.e.~$\sum_{d\in\D} \beta(d)^{p} <\infty$) for some $0<p<1/2$.\\ 
Then, $(\scrM,\varpi)$ is a hyperfinite type III factor equipped with a normal faithful state.
Moreover, the gauge group acts in a state-preserving way on it.
\end{letterthm}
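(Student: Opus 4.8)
The plan is first to make $(\scrM,\varpi)$ completely explicit, exploiting that $G=\fS$ is abelian, and then to extract its type from the modular data of the heat-kernel state. Fourier transform identifies $L^2(\fS)$ with $\ell^2(\Z)$ and sends the generators of $M$ (multiplication by characters and convolution) to the bilateral shift together with the diagonal group of characters; a short computation shows that these generate all of $B(\ell^2(\Z))$, diagonalised by the dual variable $\Z$, and that the state induced by $m_d$ becomes the diagonal density with eigenvalue list $(m_d(\{n\}))_{n\in\Z}=(Z_{\beta(d)}^{-1}e^{-n^2\beta(d)/2})_{n\in\Z}$. Passing to the standard form of the inductive system $\iota_t^s\colon(M_t,\omega_t)\to(M_s,\omega_s)$ --- i.e.\ to the Bratteli diagram underlying the MERA reformulation mentioned above --- exhibits $\scrM$ as a weakly dense increasing union of type~I von Neumann algebras carrying a product/Markov state assembled from the $m_d$. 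Injectivity (hence hyperfiniteness) is then immediate, and the GNS construction already provides the normal state $\varpi$ on $\scrM$.

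Next I would compute the modular theory of $\varpi$. Because the densities are diagonal in the dual variable, the modular operator $\Delta_\varpi$ is explicit and $\log\Delta_\varpi$ has spectrum generated by the differences $\tfrac12(n^2-m^2)\beta(d)$ with $n,m\in\Z$ and $d\in\D$; equivalently, the Araki--Woods asymptotic ratio set of the associated ITPFI-type factor is generated by the eigenvalue ratios $e^{\pm\tfrac12(n^2-m^2)\beta(d)}$. This reduces the classification to an asymptotic analysis of $\{\beta(d)\}_{d\in\D}$, which is exactly where the heat-kernel/theta-function asymptotics and the summability hypothesis enter: from $Z_{\beta(d)}\sim\sqrt{2\pi/\beta(d)}$ one gets $m_d(\{0\})\to 0$, already ruling out type~I, while $p$-summability for some $0<p<1/2$ controls the square-root (fidelity) overlaps between successive levels, so that the infinite product of GNS vectors converges to a separating vector, giving faithfulness of $\varpi$.

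To obtain factoriality and type~III I would then prove triviality of the centre and $S(\scrM)=[0,\infty)$. The only candidate for a nontrivial centre is the global gauge-invariant holonomy around the spatial circle, which is precisely what produces the diffuse centre in the single-measure situation of Theorem~\ref{theo:intro-I}; here, since small $\beta(d)$ corresponds to an edge distribution sharply concentrated near the identity, $p$-summability makes the accumulated fluctuations finite and forces this holonomy to degenerate to a scalar in the limit, collapsing the centre. For the type, $p$-summability forces $\beta(d)\to 0$ along the dyadics, so the ratios $e^{\pm\tfrac12(n^2-m^2)\beta(d)}$ accumulate at $1$ from both sides while also attaining arbitrarily large values; the subgroup of $\R_{>0}$ they generate is dense, whence $S(\scrM)=[0,\infty)$ and $\scrM$ is type~III (the computation in fact pointing to III$_1$), in sharp contrast with the semifinite, diffuse-centre outcome of Theorem~\ref{theo:intro-I}. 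Finally, state-invariance of $\varpi$ under the gauge group is inherited from the level-wise invariance of the $\omega_t$ together with normality.

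The step I expect to be the main obstacle is the sharp and simultaneous use of $0<p<1/2$. One must establish, with matching estimates on the Jacobi theta functions $Z_{\beta(d)}$ and on the overlaps between consecutive levels, both that the limit state is faithful with separating GNS vector and that the centre collapses, while the very same asymptotics push $S(\scrM)$ up to all of $[0,\infty)$. Making these theta-function/Gaussian overlap estimates work precisely at the threshold $1/2$ --- the natural scale, since the width of $m_d$ grows like $\beta(d)^{-1/2}$ --- rather than under a cruder summability assumption, is the delicate technical core of the argument.
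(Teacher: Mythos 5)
Your overall picture has several sound components: the passage to the Fourier/dual picture, hyperfiniteness from an increasing union of type I algebras, the relevance of Kakutani-type square-root overlap estimates, and the inheritance of gauge invariance from the level-wise states. However, the core of your argument rests on an identification that is false: $(\scrM,\varpi)$ is \emph{not} an ITPFI/Araki--Woods factor built from the eigenvalue lists $(m_d(\{n\}))_{n\in\Z}$, so neither its type nor its factoriality can be read off from the ratio set generated by $e^{\pm\frac12(n^2-m^2)\beta(d)}$, and ``convergence of the infinite product of GNS vectors'' does not apply in that form. The reason is that the refining embeddings do not respect the tensor factorizations in the ITPFI way: by Lemma \ref{lem:whR} one has $\wh R(b\la_\chi)=b\la_\chi\ot\la_\chi$, i.e.~the unitaries $\la_\chi$ embed \emph{diagonally} under refinement. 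In the limit this produces the group-measure-space algebra $L^\infty(\Z^\D,\mbD)\rtimes\Z_{fr}$, where $\Z_{fr}$ is the group of locally constant maps $\D\to\Z$, and not the infinite tensor product $\bigotimes_{d\in\D}(B(\ell^2(\Z)),\varphi_d)$; the latter is the crossed product by the \emph{different} group $\oplus_\D\Z$ of finitely supported maps, and neither group contains the other. A sanity check that your reduction cannot be correct: an infinite tensor product of type I factors with respect to strictly positive product states is automatically a factor, for \emph{any} choice of the measures $m_d$; yet Theorem \ref{theo:singlemes} shows that for constant $m_d$ (and $\wh G$ torsion free) $\scrM$ has diffuse center. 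So any proof of factoriality must genuinely use the decay of $\beta$, and your argument, as written, does not --- your ``collapse of the global holonomy'' paragraph is a physical heuristic, not a proof of ergodicity of the $\Z_{fr}$-action, which is what factoriality amounts to here.

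This missing step is precisely the technical core of the paper's proof, and it is where $0<p<1/2$ enters. The paper argues indirectly: Proposition \ref{prop:closurZfr} shows, via concentration estimates for the discrete Gaussians on windows of width roughly $\beta(d)^{p-1}$, that $\oplus_\D\Z$ lies in the closure of $\Z_{fr}$ inside $\Aut(\Z^\D,\mbD)$ for the weak topology. Ergodicity, and the nonexistence of an equivalent invariant measure, are then \emph{transferred} by continuity from the auxiliary ITPFI factor $N_\beta=L^\infty(\Z^\D,\mbD)\rtimes\oplus_\D\Z$ to the $\Z_{fr}$-action; it is only for $N_\beta$ that an Araki--Woods-type eigenvalue criterion (divergence of $\sum_{d\in\D}\bigl(1-|\Tr(h_d^{1+it})|\bigr)$ at $t=1$, via the theta-function asymptotics you also invoke) is legitimately applied. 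Faithfulness of $\varpi$ then comes for free from the crossed-product picture --- a faithful measure composed with the canonical faithful normal expectation --- requiring only $\beta\in\ell^1(\D)$, which follows from $p$-summability. Finally, your conclusion $S(\scrM)=[0,\infty)$, i.e.~type III$_1$, overclaims: precisely because the argument is indirect, the paper leaves the determination of $\lambda$ in III$_\lambda$ as an open problem.
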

This is drastically different from Theorem \ref{theo:intro-I} where we only consider a single probability measure $m=m_d.$
In particular, we have a nontrivial modular time evolution which commutes with the action of the gauge group, i.e.~we consider the modular flow $\theta:\R\act \scrM, \theta_t(x):= \Delta^{it} x\Delta^{-it}$, where $\Delta$ is the modular operator associated with the GNS representation of $(\scrM,\varpi)$, see for instance \cite{Connes-94-Noncommutative-Geometry, Bratteli-Robinson-87-Operator-Algebras1, Bratteli-Robinson-97-Operator-Algebras2, Borchers-00-Modular-Theory} for details and mathematical as well as physical applications.\\[0.1cm]
We note that, as might be guessed from the use of an abelian group and its Pontryagin dual, the construction of the heat-kernel states can be dualized. In the particular case of the circle group $G=\fS$, this is achieved by defining measures as Fourier transforms of Bessel functions of the second kind, which solve the discrete heat equation on $\Z$. Thereby, we obtain $O(2)$-vector models at each level $t$. We provide further details on this duality and its relation to the strong- and weak-coupling limits of Yang-Mills theory in {\cite[Sections 3.3 and 5.1]{Brot-Stottmeister-Phys}}.\\[0.1cm]
In addition to the previous theorems, we also provide some geometrical choices of $\beta$ for which the restriction of the Jones' action $V\act \scrM_0$ to its rotation subgroup $\Rot$ of $V$ (i.e.~all the rotations of angles $\tfrac{k}{2^n}$) extends to an action on the von Neumann algebra $\scrM$, see Corollary \ref{cor:Rot}.
\begin{letterthm}
Adopting the hypothesis of Theorem \ref{theo:intro-type}, we consider for any dyadic rational $d\in\D$ the length of the largest s.d.i.~starting at $d$ and assume that $\beta(d)$ only depends on this length.
Then, the restriction of the Jones' action to the rotation subgroup $\Rot\act \scrM_0$ extends to an action on the von Neumann algebra $\scrM$.
\end{letterthm}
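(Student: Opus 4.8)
The plan is to run the whole argument through the explicit Araki--Woods picture underlying the proof of Theorem \ref{theo:type}, which realises $(\scrM,\varpi)$ as an infinite tensor product $\bigotimes_{d\in\D}(N_d,\phi_d)$ of type I factors, where $\phi_d$ is the faithful normal state on $N_d$ determined by the heat-kernel measure $m_d$, hence by $\beta(d)$. Recall the standard criterion: a $*$-automorphism $\alpha$ of $\scrM_0$ extends (then uniquely, by $\sigma$-weak density and normality) to an automorphism of $\scrM=\pi_\varpi(\scrM_0)''$ if and only if the GNS representations of $\varpi$ and of $\varpi\circ\alpha$ are quasi-equivalent, i.e.~iff $\varpi\circ\alpha$ lies in the folium of $\varpi$. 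So I would fix $r\in\Rot$ with induced automorphism $\alpha_r$, identify $\varpi\circ\alpha_r$ as a product state on $\bigotimes_d(N_d,\phi_d)$, and show it differs from $\varpi$ in only finitely many tensor factors; quasi-equivalence then follows from the Powers--St\o rmer/Araki--Woods criterion for product states. This refines the state-preserving mechanism of Theorem \ref{theo:intro-I}, the new point being that here $\varpi$ is invariant only up to a finite discrepancy.

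The geometric heart is an elementary $2$-adic count. Writing $d\in\D\cap[0,1)$ in lowest terms as $d=k/2^{n}$, the length of the largest s.d.i.~starting at $d$ equals $2^{-n}$; set $v(d):=n$, so that by hypothesis $\beta(d)=f(v(d))$ for a function $f$. A rotation $r=r_\theta$ by $\theta=k_0/2^{m}$ acts on indices by $d\mapsto d+\theta\ (\mathrm{mod}\ 1)$. If $v(d)>m$ then, writing $d+\theta=(k+k_02^{n-m})/2^{n}$, the numerator is odd (odd plus even), and reduction $\mathrm{mod}\ 1$ preserves this parity; hence $v(d+\theta)=v(d)$ and $\beta(d+\theta)=\beta(d)$. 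The only indices on which $v$, and thus $\beta$, can change are the $2^{m}$ multiples of $2^{-m}$, i.e.~those with $v(d)\le m$. Therefore $\beta\circ r_\theta^{-1}$ agrees with $\beta$ off a finite subset $F\subset\D$.

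I would then feed this into the explicit form of the Jones action. Using the description of $\scrM$ together with the formulas for $V\act\scrM_0$ from Sections \ref{sec:M_0} and \ref{sec:gauge}, after the Araki--Woods identification $\alpha_r$ is implemented by the permutation of factors $d\mapsto r(d)$ composed with automorphisms of the individual type I factors $N_d$; since every automorphism of a type I factor is inner, these local pieces preserve each folium and are irrelevant to quasi-equivalence. Hence $\varpi\circ\alpha_r=\bigotimes_{d}\phi_{\beta(r^{-1}(d))}$, which by the previous paragraph coincides with $\varpi=\bigotimes_d\phi_{\beta(d)}$ in all but the factors indexed by $F$. Splitting off the common infinite tail and using that two faithful normal states on the single type I factor $\bigotimes_{d\in F}N_d$ are automatically quasi-equivalent, I obtain that $\varpi$ and $\varpi\circ\alpha_r$ are quasi-equivalent, so $\alpha_r$ extends to $\tilde\alpha_r\in\Aut(\scrM)$. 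Uniqueness of the normal extension forces $\tilde\alpha_{r}\tilde\alpha_{r'}=\tilde\alpha_{rr'}$, giving the action $\Rot\act\scrM$.

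The hard part will be the step relating the abstract Jones action to this concrete factor picture: because the embeddings $\iota_t^s$ are built from $R$ and do not respect a fixed tensor decomposition, one must argue at the level of the hyperfinite limit that $\alpha_r$ genuinely reduces to a permutation of weights modulo inner automorphisms, rather than introducing non-local distortions of the product structure. This is also exactly where the contrast with the generic (discontinuous, non-extending) rotation action is localised. Once this form of $\alpha_r$ is secured, the remainder is the $2$-adic count above combined with the routine product-state quasi-equivalence criterion.
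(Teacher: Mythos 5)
Your $2$-adic count is correct and is exactly the geometric content of the paper's argument: writing $d=k/2^{n}$ in lowest terms, a rotation by $k_{0}/2^{m}$ can change the level function only at the $2^{m}$ multiples of $2^{-m}$. This is the paper's Lemma \ref{lem:ell-function} together with the proposition following it, and your parity argument is if anything cleaner than the paper's recursive computation. The abstract criterion you invoke (normal extendability of $\al_{r}$ to $\scrM=\pi_\varpi(\scrM_0)''$ amounts to quasi-equivalence of $\pi_\varpi$ and the GNS representation of $\varpi\circ\al_{r}$) is also sound, and some such criterion is genuinely needed here since the rotation action is not state-preserving.

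The gap is your structural premise. $(\scrM,\varpi)$ is \emph{not} realised, in the proof of Theorem \ref{theo:type} or anywhere else, as an Araki--Woods product $\bigotimes_{d\in\D}(N_{d},\phi_{d})$ on which rotations act by permuting factors modulo inner automorphisms. What the paper proves is $\scrM\cong L^\infty(\Z^\D,m_\beta^\D)\rtimes\Z_{fr}$, where $\Z_{fr}$ is the group of \emph{locally constant} (not finitely supported) $\Z$-valued functions on $\D$. Since $\Z_{fr}$ does not split as a direct sum over $d\in\D$, this crossed product carries no $\D$-indexed tensor factorization: the connecting maps act diagonally on one set of generators, $\la_\chi\mapsto\la_\chi\ot\la_\chi$ (Lemma \ref{lem:whR}), so the limit unitaries $u_{g}$, $g\in\Z_{fr}$, are non-local, and no ``disentangling'' can simultaneously make the embeddings tensorial and keep $\varpi$ a product state. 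The ITPFI factor that does appear in the proof of Theorem \ref{theo:type} is the auxiliary algebra $L^\infty(\Z^\D,m_\beta^\D)\rtimes\oplus_\D\Z$ (entering only through the closure argument of Proposition \ref{prop:closurZfr}); it is not $\scrM$ and does not carry the Jones action. So the step you label ``the hard part'' is not a verification one can expect to push through: it is precisely where the proposed reduction fails. The repair is to run your quasi-equivalence idea in the crossed-product picture, which is what the paper does in Corollary \ref{cor:Rot}: $\varpi$ restricts on $L^\infty(\Z^\D,m_\beta^\D)$ to the product measure $m_\beta^\D$; a rotation $r$ transports it to $\ot_{d\in\D}m_{\beta(r^{-1}d)}$, which by your count differs from $m_\beta^\D$ in only finitely many factors, each pair of factors being equivalent because all heat-kernel measures on $\Z$ are strictly positive; hence the two product measures are equivalent, the action $\Rot\act(\Z^\D,m_\beta^\D)$ is nonsingular, and a nonsingular transformation, together with the compatible automorphism $g\mapsto g\circ r^{-1}$ of $\Z_{fr}$, induces a normal automorphism of the crossed product.
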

We observe that the action of the rotation subgroup $\Rot$ is not state-preserving. Moreover, we prove that under the given assumptions Thompson's group $F$ will fail to have an action on the von Neumann algebra $\scrM$, see Proposition \ref{prop:SingularF}.\\[0.25cm]
Let us conclude the introduction with the following general observations:\\[0.1cm]
This article is a first effort to use Jones' machinery in the context of lattice gauge theory and to provide a unification with ideas used in loop quantum gravity as well as operator-algebraic techniques from algebraic quantum field theory \cite{Haag-96-Local-Quantum-Physics}.
The degree of generality presented here already allows us to recover several results but there are many cases and generalization that are open today.
The ultimate goal being to replace the compact  group $G$ by a subfactor.
In future work we plan to settle the noncommutative compact group case and, more generally, to replace the group $G$ by a separable compact quantum group as it is a natural framework for our construction in the noncommutative situation. A first step in the treatment of the compact quantum group case can be found in our companion article {\cite[Remark 4.3]{Brot-Stottmeister-Phys}}.
This will potentially allow us to deal with richer kinds of tensor categories than representation categories of compact (abelian) groups.
Moreover, we plan to investigate higher dimensional models invoking a suitable replacement of Thompson's groups as well as necessary generalizations of our constructions from the point of view of renormalization group theory, see also {\cite[Section 5]{Brot-Stottmeister-Phys}}.\\[0.1cm]
{A different kind of potential generalization sticking to the 1+1-dimensional setting would involve the alteration of the states $\omega_{t}$ to account for more complicated interactions from a physical perspective. In view of the latter, our leading example, the heat-kernel state, contains only a ``kinetic'' term (given by the group Laplacian $\Delta_{G}$) which could be supplemented by a suitable ``potential'' term, for example, given by a multiplication operator corresponding to a function on $G$. One natural choice of such a function is a character, $\chi_{\pi} = \Tr_{V_{\pi}}(\pi(\ .\ ))$, w.r.t.~some representation $\pi:G\rightarrow\cU(V_{\pi})$. Up to an offset and a normalization, this choice directly reflects the ``potential'' (or magnetic) term of higher dimensional versions of the Kogut-Susskind Hamiltonian. Moreover, if we consider the spatial circle $\fS$ as embedded into two dimensions, we can interpret this choice as the restriction of the two-dimensional Kogut-Susskind Hamiltonian to a single plaquette. The latter interpretation is invoked in \cite{HuebschmannRudolphSchmidt09} where a detailed analysis of the gauge-invariant sector of the model including the exact solution of the spectral problem is given. Additionally, special emphasis is put on the singular nature of the classical reduced phase space and its counterpart on the quantum side. It is interesting to point out that the Gibbs states of this model a related to the Trotter-Kato product of the heat-kernel and the dual heat-kernel state on the trivial tree $t_{0}$ (one root and one leaf) in the weak-coupling limit, see \cite[Section 3]{Brot-Stottmeister-Phys}.}\\[0.1cm]
{Concerning gauge invariance and the generically singular reduced phase spaces of classical gauge theories, we point out the notion of Rieffel induction and its application to constrained quantum systems \cite{Landsman95}. As observed in \cite{Wren98}, Rieffel induction provides an operator-algebraic method to implement constraints in a quantum theoretical model that avoids potential ambiguities in the definition of domains of (unbounded) observables due to singularities in the reduced phase space. If the constraints are given in terms of an action by a compact group, as in the models discussed here, Rieffel induction will yield the gauge-invariant observables via restriction to an invariant subspace. But, if the constraints generate a non-compact group, the procedure will still apply making it a very interesting candidate to handle the gauge-invariant sector in the case of generalized symmetries.}\\[0.1cm]
An open problem, we have not investigated so far, is to determine which type III$_\lambda, \ 0\leq \lambda\leq 1$ factor we obtain in Theorem \ref{theo:intro-type}. 
But, we expect that there is a fairly precise correlation between $\lambda$ and the map $\beta$.
In this regard, we remark that the proof of $\scrM$ being a factor is rather indirect using a continuity argument, which works when $\beta$ is in $\ell^p(\D)$ for some $0<p<1/2.$
It appears very plausible that the correct assumption should be that $\beta$ is summable rather than $p$-summable.\\[0.1cm]
Finally, let us briefly comment on the relation of our construction to the one proposed by Jones in \cite{Jones18-Hamiltonian}, a more exhaustive discussion can be found in our companion article. In \cite{Jones18-Hamiltonian}, the concept of a scale-invariant respectively weakly scale-invariant transfer matrix given by collection of operators $T_t$ is used as additional data instead of that of a (projective) family of states $\omega_t$. Both concepts can be compared if the states $\omega_{t}$ are normal and, thus, are given by trace-class operators $T_{\omega_{t}}$: using the correspondence between the inverse temperature in statistical mechanics and Euclidean time, a formula for the Hamiltonian of the system can be deduced in both cases. The difference between (weak) scale invariance and projective consistency is further elucidated by renormalization group theory, see {\cite[Section 4]{Brot-Stottmeister-Phys}}. Let us also remark that (weak) scale invariance is known as projective consistency in the loop quantum gravity literature \cite{Thiemann-07-QuantumGR} and is exactly the property that implies the existence of the strong-coupling Hamiltonian $H$ referred to after theorem \ref{theo:intro-I}. 
Interestingly, weak scale invariance also appears as an essential ingredient in a recent analysis of Hamiltonian renormalization \cite{Lang-Liegener-Thiemann17}.\\[0.1cm] 
The article is organized in the following way:\\[0.1cm]
Beside the introduction this paper has three other sections.
In section \ref{sec:prelim}, we recall the correspondence between standard dyadic partitions, binary forests and dyadic rationals.
We explain Jones' machinery in our specific case giving a lattice theory and actions of Thompson's groups. 
In section \ref{sec:Calg}, we explain the general construction of lattice gauge theories in the C*-algebraic setting together with the action of Thompson's group $V$.
In section \ref{sec:VNA}, we introduce states and von Neumann algebras as GNS completions of C*-algebras.
After describing the general case, we specialize to abelian compact groups and work in a dual picture.
We obtain a precise structure theorem when we only deal with one measure.
We finally focus on the circle group case and a family of heat-kernel states.

\section{Preliminaries}\label{sec:prelim}

\subsection{The category of binary forests} 

A binary forest is an isotopy class of a disjoint union of finitely many planar rooted trees embedded in the strip $\R\times [0,1]$ of $\R^2$ with roots on $\N\times \{0\}$ and leaves on $\N\times \{1\}.$
Isotopies preserve the strip but can act on the boundary.
The trees are binary meaning that each vertex has zero or two descendants.
Note that there is a natural order on the roots and the leaves of a forest.
We will always count them from left to right.
%
\newcommand{\forest}{
\begin{tikzpicture}[baseline = .4cm]
\draw (0,0)--(0,1);
\draw (1,0)--(1,2/3);
\draw (1,2/3)--(2/3,1);
\draw (1,2/3)--(4/3,1);
\draw (2,0)--(2,1/3);
\draw (2,1/3)--(5/3,1);
\draw (2,1/3)--(7/3,1);
\draw (13/6,2/3)--(2,1);
\end{tikzpicture}
}
{Here is an example of a forest with three roots and six leaves:
$$f = \ \forest \ .$$
}
We confer the structure of a tensor category to these forests as follows:
 Let $\cF$ be the category with set of objects the natural numbers $\{1,2,3,\cdots\}$ and morphisms $\cF(n,m)$ from $n$ to $m$ equal to the set of binary forests with $n$ roots and $m$ leaves.
Consider two forests $g,f$ such that $f$ has $n$ leaves and $g$ has $n$ roots. 
Consider a representative of $f$ (resp. $g$) in the strip $\R \times [0,1]$ and assume that its $i$th leaf (resp. the $i$th root) is at the point $(i,1)$ (resp. $(i,0)$).
Shift the representative of $g$ by one unit vertically and consider the union of those two representatives that we contract vertically by a factor 2 to a subset of the strip $\R\times [0,1]$.
This gives us a new forest and its isotopy class is the composition $g\circ f$ sometimes denoted $gf$.
Roughly speaking, $gf$ is the vertical concatenation of $g$ on top of $f$ where the $i$th leaf of $f$ is lining up with the $i$th root of $g$. 
\newcommand{\funfun}{
\begin{tikzpicture}[baseline=.2cm, scale = .4]
\draw (0,0)--(0,-.5);
\draw (0,0)--(-1,2);
\draw (-.5,1)--(0,2);
\draw (0,0)--(1,2);
\end{tikzpicture}
}
\newcommand{\compo}{
\begin{tikzpicture}[baseline = -.2cm, scale = .6]
\draw (1,-2)--(1,-2.5);
\draw (1,-2)--(0,0);
\draw (.5,-1)--(1,0);
\draw (1,-2)--(2,0);
\draw (0,0)--(0,1);
\draw (1,0)--(1,2/3);
\draw (1,2/3)--(2/3,1);
\draw (1,2/3)--(4/3,1);
\draw (2,0)--(2,1/3);
\draw (2,1/3)--(5/3,1);
\draw (2,1/3)--(7/3,1);
\draw (13/6,2/3)--(2,1);
\end{tikzpicture}
}
{Consider the forest $f$ of the preceding figure and the tree
$$t = \ \funfun\ .$$
The composition $f\circ t$ is equal to:
$$\begin{small}\compo\ .\end{small}$$
}
\begin{notation}We denote by $I$ and $Y$ the trees with one and two leaves respectively, and we write $\fT$ for the set of all trees.
\end{notation}
We define a tensor product structure on this category:
The tensor product of objects is the addition of natural numbers, i.e.~$n\otimes m := n+m$, and the tensor product of forests (i.e.~morphisms) is the horizontal concatenation.
Meaning that if $f,g$ are forests with representatives inside $\R\times [0,1]$, we consider the union of those two representatives when the one of $f$ is placed on the left and say that $f\otimes g$ is the isotopy class of the union of those two representatives.

{For example, if $f$ and $t$ are the forest and tree of above, we obtain that 
$$t\otimes f = \funfun \quad \forest \ .$$}
Equipped with this tensor product we obtained a tensor category.
If $n\leq 1$ and $1\leq j\leq n$, we will write $f_{j,n}$ for the forest with $n$ roots and $n+1$ leaves whose $j$th tree is $Y$ and all the others are equal to $I$, hence $f_{j,n} = I^{\ot j-1} \ot Y \ot I^{n-j}.$
It is easy to see that any morphism is the composition of some $f_{j,n}$ and, thus, the tensor category $\cF$ is singly generated by $Y$ in the sense that any morphism is the composition of tensor products of $Y$ and the trivial morphism $I$.

\subsection{Dyadic rationals and partitions}\label{sec:Dyadic}
Let $t_\infty$ be the infinite binary rooted tree.
We decorate its vertices by intervals such that the root corresponds to the open interval $(0,1)$ and the successors of a vertex decorated by $(d,d')$ are decorated by $(d,\tfrac{d+d'}{2})$ to the left and $(\tfrac{d+d'}{2} , d' )$ to the right.
Note that the collection of all these decorations is given by the intervals of the form $(\tfrac{a}{2^n}, \tfrac{a+1}{2^n})$ with $a,n$ natural numbers and such that $\tfrac{a+1}{2^n}\leq 1.$
We call such an interval a standard dyadic interval (s.d.i.).
\newcommand{\tinf}{
\begin{tikzpicture}
\node at (0,-.25) {$(0,1)$};
\draw (0,0)--(-2,1);
\draw (0,0)--(2,1);
\node at (-2,1.25) {$(0,1/2)$};
\draw (-2,1.5)--(-3,2.5);
\draw (-2,1.5)--(-1,2.5);
\node at (2,1.25) {$(1/2,1)$};
\draw (2,1.5)--(1,2.5);
\draw (2,1.5)--(3,2.5);
\node at (-3,2.75) {$(0,1/4)$};
\node at (-1,2.75) {$(1/4,1/2)$};
\node at (1,2.75) {$(1/2,3/4)$};
\node at (3,2.75) {$(3/4,1)$};
\node at (-3,3.25) {$\cdots$};
\node at (-1,3.25) {$\cdots$};
\node at (1,3.25) {$\cdots$};
\node at (3,3.25) {$\cdots$};
\end{tikzpicture}
}
{Here is the beginning of this labelled tree:
$$\tinf\ .$$}
Let $\D$ be the set of all dyadic rationals inside the half open interval $[0,1)$ that is $$\D=\left\{ \dfrac{a}{2^n} :\ n\geq 1, 0\leq a\leq 2^n-1 \right\}.$$
We identify $[0,1)$ with the torus $\R/\Z$ and $\D$ with the set of dyadic rationals inside of it which explains why we remove the point $1$.

A standard dyadic partition (s.d.p.) of the unit interval is a finite sequence of dyadic rationals $0=d_1< d_2<\cdots<d_{n} <1=d_{n+1}$ such that 
$(d_j,d_{j+1})$ is a s.d.i.~for any $1\leq j\leq n.$
Note that this is strictly speaking not a partition since the union of those intervals does not contain the boundary points $d_j, 1\leq j\leq n+1$. 
But, with the aforesaid identification in mind, we can, for example, equally well consider half-open intervals with left boundaries included.

Consider a finite rooted binary tree $t\in\fT$ with $n$ leaves and view it as a rooted sub-tree of $t_\infty.$
We assign the decoration $(e_j,e_j')$ to the $j$th leaf of $t$ associated with the vertex in $t_\infty$ equal said leaf.
Note that we necessarily have that $e_j = e_{j-1}'=:d_j$ for $2\leq j\leq n+1$ and $e_1=0=:d_1$, $e_n'=1 =: d_{n+1}.$
Therefore, we obtain a sequence of dyadic rationals $d_2 < d_3 < \cdots < d_{n}$ which gives a partition of $(0,1)$ (minus finitely many points) given by $(d_1,d_2), (d_2, d_3),\cdots, (d_{n},d_{n+1})$ with each interval being a s.d.i.~and, thus, resulting in a s.d.p.
One can check that this process gives a bijection from $\fT$ to the set of s.d.p.~of the unit interval.
{For example, if 
$$t=\funfun,$$ 
then the associated s.d.p.~is 
$$\{ (0,1/4), (1/4,1/2), (1/2,1)\}.$$}

If it is convenient, we will consider the set of trees $\fT$ as a partially ordered set: $s\leq t$ iff there exists a forest $f$ such that $t=fs$.
This even makes $\fT$ a direct set since any two trees $s,t$ are smaller than the complete tree $t_m$ for sufficiently large $m$, where $t_m$ is the rooted binary tree with $2^m$ leaves all at distance $m$ from its root.
Note that this partial order corresponds to the binary relation between pairs of partitions where one is finer than the other.

\subsection{Thompson's groups}
Thompson's group $F$ is the group of piecewise linear homeomorphisms of the standard interval $[0,1]$ fixing the boundary points with integer powers of 2 as slopes and dyadic rationals for breakpoints.
We recall the description of $F$ as a fraction group, see \cite{Cannon-Floyd-Parry96} for this specific example and \cite{Belk04, Jones16-Thompson} -- we stick to the same formalism as the latter.
Note that similar constructions of groups in the formalism of cancellative semigroups were developed by Ore, see for instance \cite{Maltsev53}.
Consider the set of all couples $(t,s)$ of trees with the same number of leaves that we quotient by the equivalence relation $\sim$ generated by $(ft,fs)\sim (t,s)$ where $f$ is any forest having the same number of roots as the number of leaves of $s$ (and thus of $t$).
Denote by $\tfrac{t}{s}$ the equivalence class of $(t,s)$.
This quotient set, denoted by $G_\cF$, has a group structure with multiplication
$$\dfrac{t}{s} \dfrac{v}{u} = \dfrac{pt}{qu} \text{ where } ps = qv,$$
neutral element $e=\tfrac{s}{s}$ for any $s\in\fT$ and $\tfrac{t}{s}^{-1} = \tfrac{s}{t}.$
We say that $G_\cF$ is the group of fractions associated to the category $\cF$ with fixed object $1\in ob\cF.$
This group is isomorphic to Thompson's group $F$.
The reason is that an element of $F$ sends an adapted s.d.p.~onto another s.d.p.~as it is affine (w.r.t.~$\D$) on each of the s.d.i.~of the initial partition and will remain unchanged if we refine both partitions in the same way. 
Using the correspondence between s.d.p.~and trees we obtain a description of $F$ in terms of $G_\cF$.
\newcommand{\funfdeux}{
\begin{tikzpicture}[baseline = .4cm]
\draw (2,0)--(2,1/3);
\draw (2,1/3)--(5/3,1);
\draw (2,1/3)--(7/3,1);
\draw (13/6,2/3)--(2,1);
\end{tikzpicture}
}
{For example, the fraction $\tfrac{t}{s}$ where 
$$t=\funfun$$ and $$s=\funfdeux$$ is the unique element of $F$ sending $[0,1/2)$ onto $[0,1/4)$, $[1/2,3/4)$ onto $[1/4,1/2)$ and $[1/2,1)$ onto $[3/4,1)$ where the restriction of each of the three initial interval is linear with positive slope.}

Now, we consider the category of symmetric forests $\cSF$ with objects $\N$ and morphisms $\cSF(n,m)=\cF(n,m)\times S_m$ where $S_m$ is the symmetric group of $m$ elements. 
Graphically we interpret a morphism $(p,\tau)\in\cSF(n,m)$ as the concatenation of two diagrams.
On the bottom we have the diagram previously explained for the forest $p$ in the strip $\R\times [0,1]$. 
The diagram of $\tau$ is the union of $m$ segments $[x_i , x_{\tau(i)}+(0,1)], i=1,\cdots,m$, in $\R\times [1,2]$ where the $x_i$ label $m$ distinct points in $\R\times \{1\}$ such that $x_i$ is on the left of $x_{i+1}.$
The full diagram of $(p,\tau)$ is obtained by stacking the diagram of $\tau$ on top of the diagram of $p$ such that $x_i$ agrees with the $i$th leaf of $p$.
\newcommand{\perm}{
\begin{tikzpicture}[baseline = .4cm]
\draw (0,0)--(1,1);
\draw (1,0)--(2,1);
\draw (2,0)--(0,1);
\end{tikzpicture}
}
\newcommand{\taut}{
\begin{tikzpicture}[baseline = .4cm, scale=.5]
\draw (0,0)--(0,-.5);
\draw (0,0)--(-1,2);
\draw (-.5,1)--(0,2);
\draw (0,0)--(1,2);
\draw (-1,2)--(0,3);
\draw (0,2)--(1,3);
\draw (1,2)--(-1,3);
\end{tikzpicture}
}
{Here is one example. Consider the cyclic permutation $\tau$ of $\{1,2,3\}$ sending $1$ to $2$, $2$ to $3$ and $3$ to $1$.
We represent this permutation by the diagram:
$$\perm\ .$$
Let $t$ be the tree $\funfun$.
We have that $(t,\tau)$ is described by the diagram:
$$\taut \ .$$
}
Given two symmetric forests $(q,\tau)\in\cSF(n,m), (p, \sigma )\in \cSF(m,l)$, let $l_i$ be the number of leaves of the $i$th tree of $p$, then we define the composition of morphisms as follows:
$$(p,\sigma)\circ (q,\tau) := ( \tau(p) \circ q , \sigma S( p , \tau ) ),$$
where $\tau(p)$ is the forest obtained from $p$ by permuting its trees such that the $i$th tree of $\tau(p)$ is the $\tau(i)$th tree of $p$ and $S(p,\tau)$ is the permutation corresponding to the diagram obtained from $\tau$ where the $i$th segment $[x_i , x_{\tau(i)} + (0,1)]$ is replaced by $l_{\tau(i)}$ parallel segments.
\newcommand{\permtilde}{
\begin{tikzpicture}[baseline = .4cm]
\draw (0,0)--(1,1);
\draw (1,0)--(2,1);
\draw (2,0)--(3,1);
\draw (3,0)--(4,1);
\draw (4,0)--(5,1);
\draw (5,0)--(6,1);
\draw (6,0)--(0,1);
\end{tikzpicture}
}
\newcommand{\foresttilde}{
\begin{tikzpicture}[baseline = .4cm]
\draw (2,0)--(2,1);
\draw (0,0)--(0,2/3);
\draw (0,2/3)--(-1/3,1);
\draw (0,2/3)--(1/3,1);
\draw (1,0)--(1,1/3);
\draw (1,1/3)--(2/3,1);
\draw (1,1/3)--(4/3,1);
\draw (7/6,2/3)--(1,1);
\end{tikzpicture}
}
{For example, if we consider the forest 
$$f = \forest$$ and the permutation $$\tau=\perm\ ,$$ then 
$$(f,\id)\circ (I\ot I \ot I , \tau) = (\tau(f) , S(f,\tau))$$ 
where 
$$\tau(f) = \foresttilde$$ \text{ and } $$S(f,\tau) = \permtilde \ .$$
}

Thompson's group $V$ is isomorphic to the group of fractions of the category $\cSF$.
Hence, any element of $V$ is an equivalence class of a pair of \emph{symmetric} trees.
The equivalence relation being generated by $( (t,\tau) , (s,\sigma) ) \sim ((f,\phi)\circ (t,\tau) , (f,\phi)\circ (s,\sigma) ) )$ where $(f,\phi)$ is a symmetric forest.
Given an element $g=\tfrac{(t,\tau)}{(s,\sigma)} \in V$ and the s.d.p.'s $(I_1,\cdots,I_n)$ and $(J_1,\cdots,J_n)$ associated to $s$ and $t$ respectively, we have that $g$, viewed as a transformation of the unit interval, is the unique piecewise linear function with constant slope on each $I_k$ that maps $I_{\sigma^{-1}(i)}$ onto $J_{\tau^{-1}(i)}$ for any $1\leq i\leq n.$
Roughly speaking, $g$ is a  element of $F$ together with a permutation of the intervals.

If we consider the cyclic group $\Z/m\Z$ as a subgroup of the symmetric group $S_m$, we will have the subcategory $\cAF\subset \cSF$ of \emph{affine} forests for $\cAF(n,m)=\cF(n,m)\times \Z/m\Z$.
The group of fractions of $\cAF$ is isomorphic to Thompson's group $T$.
Recall that it is the the group of piecewise \text{affine-linear} homeomorphisms of the torus $\R/\Z$ with integer powers of 2 as slopes and dyadic rationals for breakpoints.
In particular, $T$ contains the subgroup generated by rotations about angles $2^{-n}$, $n\in\N$.
We typically treat $\cF$ and $\cAF$ as subcategories of $\cSF$ resulting in embeddings at the level of groups $F<T<V$.

Clearly, the classical action of $V$ on $[0,1]$ corresponds to an action on the set of dyadic rationals $\D$, which we freely make use.

\subsection{Actions of Thompson's groups}\label{sec:Actions}
If $(\cC,x)$ is a category together with a distinguishes object satisfying certain axioms, we can construct the group of fractions $G_\cC$.
If we additionally have a functor $\Psi:\cC\to \cD$, we can construct an action of the group of fractions $G_\cC$ depending on $\Psi$ following Jones \cite{Jones16-Thompson}.
Moreover, this action will inherit properties from the category $\cD$, e.g.~if $\cD=\Hilb$ is the category of Hilbert spaces with isometries as morphisms, the action will be conveyed by a unitary representation of $G_\cC$.\\[0.25cm]
We present this construction in detail for the category of forests $\cC=\cF$ with the distinguished object $1\in\N$ and, thus, $G_\cF$ is Thompson's group $F$.
Due to the focus of our article, we mainly work with the target category of Hilbert spaces (equipped with isometries) $\Hilb$ or of C*-algebras (equipped with injective *-morphisms) $\Calg$.

Consider a covariant functor $\Phi:\cF\to \cD$ where $\cD$ is a concrete category. 
Given a tree $t\in\fT$ with $n=\target(t)$ leaves, we define $X_t:=\{ (t,\xi):\ \xi \in \Phi(n)\}$ which is a copy of $\Phi(n)$ indexed by $t$.
We set 
$$X_\Phi:=\{(t,\xi):\ \xi\in \Phi(\target(t)) \}/\sim$$ where $(t,\xi)\sim (s,\eta)$ iff there exists forests $p,q$ such that $(pt,\Phi(p)\xi) = (qs,\Phi(q)\eta).$
It follows that $X_\Phi$ is the inductive limit $\varinjlim_{t\in\fT} X_t$ of the directed system of objects $(X_t,\ t\in\fT)$ over the directed set $(\fT,\leq)$ with connecting maps 
$$\iota_t^{ft}:X_t\to X_{ft}, (t,\xi)\mapsto (ft,\Phi(f)\xi).$$
We write $\tfrac{t}{\xi}$ or simply $(t,\xi)$ by slightly abusing notation for the equivalence class of $(t,\xi)$ inside the quotient space $X_\Phi$.
If the connecting maps are injective, we will identify $X_t$ with a subspace of $X_\Phi$ for any $t\in\fT$.
The set $X_\Phi$ admits an action $\alpha_\Phi:F\act X_\Phi$ that is given by the formula:
$$\alpha_\Phi\left( \dfrac{t}{s} \right) \dfrac{u}{\xi} := \dfrac{pt}{\Phi(q)\xi} \text{ where } {qu = ps},$$
for a pair of trees $(s,t)$ with the same number of leaves, $(u,\xi)\in X_{\Phi}$ and suitable forests $q,p$. 
We call $\alpha_{\Phi}$ the {\it Jones' action} (associated to $\Phi$).

\begin{remark}
Note that this construction can be interpreted as a certain Kan extension where the fraction group is viewed as a Quillen homotopy group, see {\cite[Appendix]{Brothier19WP}}. 
We thank Sergei Ivanov for pointing out this description of the Jones' action.\\[0.1cm]

A similar construction can be performed when the functor $\Phi:\cF\to \cD$ is contravariant instead of covariant. 
In that case, we obtain an inverse system of sets $X_t$ and an action of the fraction group (i.e.~Thompson's group $F$ in this case) on the inverse limit $\varprojlim_{t\in\fT} X_t$.
We use this construction for describing the space $\oscrA$ in Section \ref{sec:Commutative}.\end{remark}

In this article, we are only interested in tensor functors (or monoidal functors).
These have the advantage to be defined by very few data and that any Jones' action of $F$ will extend to an action of $V$ as explained below.\\[0.25cm]
Consider a concrete tensor category $\cD$ and let $\Phi:\cF\to \cD$ be a tensor functor.
We have that $\Phi(n)=\otimes_{k=1}^n \Phi(1)$ and thus $\Phi(n)$ is characterized by $\Phi(1).$
If $f_{j,n}$ is the forest with $n$ roots, $n+1$ leaves and with its $j$th tree having two leaves, we may deduce $\Phi(f_{j,n}) = \id^{\ot j-1}\ot \Phi(Y)\ot \id^{\ot n-j}$, where $\id$ is the identity of $\Phi(1).$
Since any morphism $f$ is a composition of such $f_{j,n}$, we find that $\Phi$ is characterized by the morphism $R:=\Phi(Y)\in \Hom_\cD(\Phi(1),  \Phi(1)\ot \Phi(1)).$
Conversely, an object $A$ in a concrete tensor category $\cD$ together with a morphism $R:A\to A\ot A$ defines a tensor functor from $\cF$ to $\cD.$
Given such a tensor functor, we obtain a Jones' action $\alpha_\Phi:F\act X_\Phi.$

In the context of tensor functors, we can always extend the Jones' action of $F$ to an action of the largest of Thompson's groups $V$ as follows.
Let $\theta:S_n\to \Aut(\Phi(n))$ be the action of the symmetric group defined by permutations of the tensors factors when $\Phi(n)$ is identified with $\ot_{i=1}^n \Phi(1)$.
More precisely, if $\eta=\eta_1\ot\cdots\ot\eta_n\in \ot_{j=1}^n\Phi(1) \simeq \Phi(n)$ and $\kappa\in S_n$, then $\theta( \kappa )(\eta) = \eta_{ \kappa^{-1} (1) } \ot\cdots\ot \eta_{ \kappa^{ -1 } ( n ) }.$
Consider $g:=\tfrac{(t , \tau)}{(s,\sigma)}\in V$ with trees $t,s$ having $n$ leaves and permutations $\tau,\sigma\in S_n$.
If $\tfrac{s}{\xi}$ is in $X_\Phi$, then we put
$$\alpha_{\Phi} \left( \dfrac{(t,\tau)}{(s,\sigma)} \right) \dfrac{s}{\xi} := \dfrac{t}{\theta(\tau^{-1}\sigma) \xi}.$$
This provides an action of Thompson's group $V$ on $X_\Phi$ which extends the Jones' action $\al_\Psi:F\act X_\Psi.$

\subsubsection{The category of Hilbert spaces}
Consider the category $\Hilb$ of (complex) Hilbert spaces with isometries as morphisms and a functor $\Phi:\cF\to \Hilb.$
Write $\fH=\Phi(1)$ and $\fH_t:=\{(t,\xi) : \ \xi\in\Phi(n)\}$ instead of $X_t.$
The inductive limit of (complex) vector spaces $\varinjlim_{t\in\fT} \fH_t$ has a pre-Hilbert structure given by the sesquilinear form
$$\langle \dfrac{t}{\xi} , \dfrac{s}{\eta} \rangle := \langle \Phi(p)\xi , \Phi(q) \eta \rangle \text{ where } {pt = qs}$$
with the inner product on the right hand side performed in the Hilbert space $\Phi(\target(p))$. 
This sesquilinear form is well-defined because the morphisms of $\Hilb$ are isometries.
We complete this pre-Hilbert space to a Hilbert space that we denote by $\scrH$, and we observe that the action of Thompson's group $F$ extends to a unitary representation $\pi_\Phi:F\to \cU(\scrH)$.
\\[0.1cm]
We may add a tensor product structure to $\Hilb$ with the classical notions of tensor products of Hilbert spaces and of operators.
Then, a Hilbert space $\fH$ together with an isometry $R:\fH\to \fH\ot \fH$ defines a tensor functor $\Phi_R:\cF\to\Hilb$ which gives an action of $V$ on $\varinjlim_{t\in\fT} \fH_t$ that extends to a unitary representation $\pi_{\Phi_R}:V\to\cU(\scrH)$.
See \cite{Brot-Jones18-1} for explicit examples of such functors and representations.
\\[0.25cm]
We can also define a different tensor product structure $\odot$ on $\Hilb$ by setting $H\odot K$ equal to the direct sum of the Hilbert spaces $H$ and $K$ and $A\odot B$ equal to the direct sum of the operators $A\in B(H), B\in B(K).$
Using this structure, a tensor functor $\Phi:\cF\to (\Hilb,\odot)$ is equivalent to the data consisting of a Hilbert space $\fH$ together with a pair of operators $A,B\in B(\fH)$ satisfying the Pythagorean equality:
$$A^*A+B^*B = \id_\fH.$$
Note that if $A=B^*$, then this is the CAR condition.
By defining tensor functors as follows: $\Phi(n)= \fH^{\oplus n}$ and $\Phi(Y)= A\oplus B$, this will induce a family of representations of Thompson's group $V$ called Pythagorean representations which were introduced and studied in \cite{Brot-Jones18-2}.

\subsubsection{The category of C*-algebras}
Consider the category of unital C*-algebras $\Calg$ with injective *-morphisms.
Let $\Psi:\cF\to \Calg$ be a functor and denote by $(B_t ,\ t\in\fT)$ the associated directed system with connecting maps $\iota_t^{ft}:B_t\to B_{ft}$ and (algebraic) inductive limit $B_\Psi$.
Since $\iota_t^{ft}$ is an injective *-morphism between C*-algebras it is automatically an isometry.
This implies that $B_\Psi$ is a *-algebra with a C*-norm $\| \cdot \|$, i.e.~$\| a^*a \| = \| aa^*\| = \|a\|^2$ for any $a\in B_\Psi$.
Therefore, the completion of $B_\Psi$ w.r.t.~this norm is a C*-algebra that we denote by $\mathscr B_0.$
Since the directed system of C*-algebras $(B_t,\ t\in\fT)$ admits a unique C*-limit, we may write $\varinjlim_{t\in\fT} B_t$ for the completion $\scrB_0$ and call it the direct limit of C*-algebras.
The general construction implies that we have an action $\al_\Psi:F\act B_\Psi$ by *-automorphism.
But, $\al_\Psi(g)$ is necessarily an isometry for $g\in F$ w.r.t.~the C*-norm and thus extends into an automorphism of the C*-algebra $\scrB_0.$
Therefore, the action $\al_\Psi$ extends in a unique way to an action $\al$ of $F$ on the C*-algebra $\mathscr B_0$ such that $\al(g)$ is a *-automorphism for any $g\in F$.

We equip the category $\Calg$ with the minimal tensor product $\ot_{\min}$ for the object and the associated tensor product of morphisms.  
A tensor functor $\Psi:\cF\to (\Calg,\ot_{\min})$ is defined by a C*-algebra $B$ and an injective *-morphism $R: B\to B\ot_{\min} B.$
By the above reasoning, this provides an automorphic action of Thompson's group $V$ on a C*-algebra $\scrB_0.$\\[0.25cm]
The following example will be useful later:\\[0.1cm]
Denote by $\otimes_{d\in\D}^{\min} B$ the unique C*-completion of the inductive limit of C*-algebras $\ot_{d\in E}^{\min} B$ where $E$ runs over the finite subsets of $\D$.

\begin{proposition}\label{prop:TensorProd}
Consider a unital C*-algebra $B$ and the map $$R:B\to B\ot_{\min} B, b\mapsto b\ot 1.$$
Let $\Psi:\cF\to (\Calg,\ot_{\min})$ and $\alpha_\Psi:V\to \Aut(\mathscr B_0)$ be the associated tensor functor and its Jones' action respectively.

There exists an isomorphism of C*-algebras 
$$J:\mathscr B_0 \to \otimes_{d\in\D}^{\min} B$$
such that $\al_\Psi$ is conjugated by $J$ to the generalized Bernoulli shift action associated with $V\act \D$, i.e.
$$\Ad(J) \alpha_\Psi(v) (\ot_{d\in\D} b_d) = \ot_{d\in\D} b_{v^{-1}d},$$
for any elementary tensor $\ot_{d\in\D} b_d$ and $v\in V$.
\end{proposition}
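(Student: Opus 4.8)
The plan is to track, for each tree, the finite set of dyadic rationals given by the \emph{left endpoints} of its associated standard dyadic intervals, and to show that under this indexing the inductive system $(B_t)_{t\in\fT}$ literally becomes the defining system of $\otimes_{d\in\D}^{\min}B$. Concretely, for a tree $t$ with $\target(t)=n$ and associated s.d.p.\ $(d_1,d_2),\dots,(d_n,d_{n+1})$ I set $L(t):=\{d_1,\dots,d_n\}\subset\D$ and identify $B_t=\Psi(n)=B^{\ot n}$ with $\bigotimes_{d\in L(t)}B$, the $j$-th tensor leg sitting at the $j$-th smallest element $d_j$ of $L(t)$. The key local computation concerns an elementary refinement $f_{j,n}$: since $\Psi$ is a tensor functor with $\Psi(Y)=R$ and $R(b)=b\ot 1$, we have $\Psi(f_{j,n})=\id^{\ot j-1}\ot R\ot\id^{\ot n-j}$, which replaces the $j$-th leg $b_j$ by $b_j\ot 1$. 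Geometrically, splitting $(d_j,d_{j+1})$ keeps $d_j$ as the left endpoint of the left child and introduces the midpoint $\tfrac{d_j+d_{j+1}}{2}$ as a new left endpoint. Hence $L(t)\subseteq L(ft)$, with $L(ft)\setminus L(t)$ consisting of the newly created midpoints, and under the identification above $\iota_t^{ft}$ becomes exactly the canonical inclusion $\bigotimes_{L(t)}B\hookrightarrow\bigotimes_{L(ft)}B$ placing $1$ at each new index, i.e.\ precisely the connecting map in the system defining $\otimes_{d\in\D}^{\min}B$.

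Next I would invoke cofinality. The complete trees satisfy $L(t_m)=\{k/2^m:0\le k\le 2^m-1\}$, so $\bigcup_{t\in\fT}L(t)=\D$ and every finite $E\subset\D$ lies in some $L(t_m)$; thus $\{L(t):t\in\fT\}$ is cofinal among the finite subsets of $\D$ and the two directed systems share the same limit. The compatible family of faithful inclusions $B_t\cong\bigotimes_{L(t)}B\hookrightarrow\otimes_{d\in\D}^{\min}B$ then induces, by the universal property of the inductive limit, a $*$-homomorphism $J:\mathscr B_0\to\otimes_{d\in\D}^{\min}B$; it is injective (each leg inclusion into a minimal tensor product is faithful, and injectivity passes to the limit) and has dense range (its image contains every $\bigotimes_{d\in E}B$), so it extends to the asserted isomorphism of C*-algebras.

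Finally, for the equivariance it suffices, by continuity and multiplicativity, to verify the intertwining relation on elementary tensors. Given such a tensor with finite support $E$ and $v\in V$, I would first refine a representative $v=\tfrac{(t,\tau)}{(s,\sigma)}$ so that $E\subseteq L(s)$ and write the tensor as $\tfrac{s}{\xi}$ with $\xi=\xi_1\ot\cdots\ot\xi_n$. By the $V$-action formula, $\alpha_\Psi(v)\tfrac{s}{\xi}=\tfrac{t}{\theta(\kappa)\xi}$ with $\kappa=\tau^{-1}\sigma$, so the leg at the $k$-th left endpoint $b_k$ of $t$ carries $\xi_{\kappa^{-1}(k)}$. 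Since $v$ is affine increasing on each interval, it maps the $j$-th left endpoint $a_j$ of $s$ to the left endpoint $b_{\kappa(j)}$ of $t$, i.e.\ $v^{-1}(b_k)=a_{\kappa^{-1}(k)}$; hence the leg value $\xi_{\kappa^{-1}(k)}$ placed at $b_k$ equals the original value at $v^{-1}(b_k)$, and the legs outside $L(t)$ carrying $1$ correspond under $v^{-1}$ to legs outside $L(s)$, which also carry $1$. This gives exactly $\Ad(J)\alpha_\Psi(v)(\ot_{d\in\D}b_d)=\ot_{d\in\D}b_{v^{-1}d}$. The step I expect to be the main obstacle is precisely this last reconciliation: matching the combinatorial permutation $\theta(\tau^{-1}\sigma)$ of tensor legs with the geometric point map $v^{-1}$ on $\D$, while checking that refining the representative (which modifies $\tau,\sigma$ through the symmetric-forest composition rule) stays compatible with the left-endpoint relabelling so that the identity holds on the nose and is independent of the chosen representative.
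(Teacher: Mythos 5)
Your proposal is correct and follows essentially the same route as the paper's proof: indexing the tensor legs by the left endpoints $\D(t)$ of the s.d.p., checking that the connecting maps $\iota_t^{ft}$ for the elementary refinements $f_{j,n}$ become the canonical inclusions placing $1$ at the new midpoints, deducing the isomorphism $J$ from injectivity plus dense range, and then verifying the intertwining relation on elementary tensors. The only cosmetic difference is that the paper checks equivariance first for elements of $F$ and then for affine permutations of a fixed s.d.p.\ (which together generate $V$, the formula being multiplicative in $v$), whereas you treat a general representative $\tfrac{(t,\tau)}{(s,\sigma)}$ directly; both amount to the same reconciliation of $\theta(\tau^{-1}\sigma)$ with the point map $v^{-1}$ on $\D$, and representative independence is already guaranteed by Jones' general construction.
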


\begin{proof}
We prove this proposition in detail to illustrate the formalism.
Consider a tree $t$ with $n$ leaves and associated s.d.p.~$d_1=0<d_2<\cdots<d_n<d_{n+1}=1$.
Denote by $\D(t)$ the set $\{d_1,\cdots,d_n\}$ and $J_t:B_t\to \otimes_{d\in\D}^{\min} B$ the unital embedding induced by the inclusion $\D(t)\subset \D.$
Let $f$ be the forest with $n$ roots, $n+1$ leaves and its $j$th tree equal to $Y$.
Observe that $\iota_t^{ft}(x_1\ot \cdots \ot x_n) = x_1\ot \cdots x_{j}\ot 1 \ot x_{j+1}\ot\cdots\ot x_n$ for an elementary tensor $x=x_1\ot\cdots\ot x_n$.
Therefore, $J_{ft}\circ \iota_t^{ft}(x) = J_t(x)$ for any such $x$ which implies that $J_{ft}\circ \iota_t^{ft}=J_t$ by linearity and density.
Since any forest is a finite composition of such $f$, we infer that the family of maps $(J_t,\ t\in\fT)$ is compatible w.r.t.~the directed system $(B_t,\ t\in \fT)$ and, thus, gives rise to a densely defined map $J:\varinjlim_{t\in\fT} B_t\to  \otimes_{d\in\D}^{\min} B$.
As each $J_t$ is an injective C*-morphism and, thus, isometric, this implies that $J$ is isometric and extends to the C*-algebra $\scrB_0$ as an injective C*-morphism.
Now, any dyadic rational $d\in\D$ appears in the s.d.p.~of a certain tree.
Therefore, any elementary tensor of the algebraic tensor product of the $B$ over $\D$ appears in the range of $J$ implying that $J$ has dense range and, thus, is surjective because it is a morphism between C*-algebras.

Consider an element  $v=\tfrac{s}{t}$ of Thompson's group $F$ and $x=x_1\ot\cdots\ot x_n\in B_t\subset \scrB_0$.
Denote by $d'_1=1<d'_2<\cdots<d_n'<d'_{n+1}=1$ the s.d.p.~associated to $s$.
Then $J(\al_\Psi(v) \tfrac{t}{x}) = J(\tfrac{s}{x}) = \ot_{d\in\D} \hat x(d)$ where 
$$\tilde x(d)= \begin{cases} x_k \text{ if } d=d'_k \text{ for some } 1\leq k\leq n\\ 1 \text{ otherwise }\end{cases}.$$
In particular, the vector $\tfrac{t}{x}\in \scrB_0$ and $v= \tfrac{s}{t}$ satisfies the formula of the proposition.
By density and directedness of $\fT$, this formula is valid for any $v\in F.$
To prove it for the largest of Thompson's groups $V$, it is sufficient to consider a permutation of a fixed s.d.p.~and an elementary tensor that can be decomposed w.r.t.~this s.d.p.:\\
For a permutation $\sigma\in S_n$, the unique element $v\in V$ sending $(d_j,d_{j+1})$ onto $(d_{\sigma(j)}, d_{\sigma(j)+1})$ for any $1\leq j\leq n$ in an affine way, and an elementary tensor $x=\ot_{j=1}^n x_j \in B^{\ot n}$, we have that $J(\al_\Psi(v) \tfrac{t}{x}) = \ot_{d\in\D} y(d)$ with $y(d) = \begin{cases} x_k \text{ if } d= d_{\sigma(k)} \text{ for some } 1\leq k\leq n\\ 1 \text{ otherwise }\end{cases}.$
Clearly, the formula of the proposition is satisfied which finishes the proof.
\end{proof}

\subsubsection{States and von Neumann algebras}\label{sec:state-prelim}
Assuming that we have a functor $\Psi:\cF\to \Calg$, we consider the directed system $(B_t,\ t\in\fT)$ with inclusion maps $\iota_t^{ft}:B_t\to B_{ft}$ and the C*-inductive limit $\mathscr B_0$.
We intend to have a state on $\mathscr B_0$ and to consider the von Neumann algebra induced from it.
To this end, we consider a family of states $\omega_t:B_t\to \C, t\in\fT$ such that $\iota_t^{ft}$ is state-preserving for any $t$ and $f$.
Then, there is a unique state $\varpi$ on $\mathscr B_{0}$ such that $\varpi(b) = \omega_t(b)$ if $b\in B_t$ because of uniform boundedness of the family $\omega_t$, where we identify $B_t$ with a C*-subalgebra of $\scrB_0$.
Performing the GNS construction for $(\mathscr B_0,\varpi)$ results in a triple $(\pi,\fH,\Omega)$, where $(\pi,\fH)$ is a *-representation of $\mathscr B_0$ and $\Omega$ is a cyclic vector (the vacuum vector) satisfying that $\langle \pi(b)\Omega , \Omega \rangle = \varpi(b)$ for any $b\in\mathscr B_0.$
Let $\mathscr B$ be the von Neumann algebra given by the weak completion $ \pi(\mathscr B_0)''$. 
We continue to denote by $\varpi$ the unique normal extension of the limit state to $\scrB$ via $\Omega$.\\[0.25cm]
Two important facts to keep in mind are:\\[0.1cm]
First, assume that each state $\omega_t$ is faithful.
This implies that the GNS representation $\pi$ of $\scrB_0$ is faithful.
Indeed, the associated GNS representation $\pi_t$ of $B_t$ is faithful and thus isometric since $B_t$ is a C*-algebra.
Since the GNS representation $\pi$ of $\scrB_0$ restricted to $B_t$ contains $\pi_t$ we obtain that $\pi$ is isometric when restricted to $B_t$ and is thus isometric by density on the whole algebra $\scrB_0$ and hence faithful, cp.~\cite[Proposition II.8.2.4]{Blackadar-06-Operator-Algebras}.
However, we don't have in general that $\varpi$ is faithful on the von Neumann algebra $\scrB.$
We will encounter such an example in Section \ref{sec:Gabelian}.\\[0.1cm]
Second, we always have a Jones' action of $F$ (and, therefore, $V$ in case of a tensor functor) on the C*-algebra $\mathscr B_0$.
However, this action might not be (asymptotically) state-preserving and in general does not extend to the von Neumann algebra $\scrB$. 
In analogy with classical probability, we need a condition on the action of $F$ resembling the quasi-invariance of measures.
This situation happens in Section \ref{sec:Zleaves} where we have that, under certain conditions, the subgroup of dyadic rotations inside Thompson's group $T$ acts on the von Neumann algebra, while Thompson's group $F$ does not.

\section{Gauge theory: construction of a C*-algebra}\label{sec:Calg}

\subsection{Directed systems of C*-algebras and actions of Thompson's groups}

We fix a separable compact group $G$ with Haar measure $m_G.$
Let $B(L^2(G))$ be the von Neumann algebra of all bounded linear operators on the Hilbert space $L^2(G)$ of complex-valued square integrable functions on $G$ {w.r.t.~the Haar measure $m_G$.}
{We define a C*-algebra $M$ associated to $G$ which will be the basic building block of our construction of a kinematic model.}\\
By $\la:G\act L^2(G)$, we denote the left regular unitary representation of $G$, and write $N:=\overline{\Span(\la_g: g\in G ) }$ for the C*-subalgebra of $B(L^2(G))$ generated by the $\la_g$ that we may write $\overline{\C G}.$
{Hence, $N$ is the norm completion in $B(L^2(G))$ of the set of operators of the form $\sum_{g\in G} a_g\la_g$ where $a_g\in\C$ is equal to zero for all but finitely many $g\in G$.}
{The unit is given by $\la_e$ where $e$ is the neutral element of $G$. This operator belongs to the reduced C*-algebra of $G$ precisely when $G$ is discrete.}
We define $Q:=\scrC(G)$, i.e.~$Q$ is the commutative C*-algebra of complex-valued continuous functions on $G$, which we identify with the C*-subalgebra of $B(L^2(G))$ that acts by pointwise multiplication.
Observe that the von Neumann algebra generated by $N$ and $Q$ is equal to $B(L^2(G))$.
Moreover, the unitary operator $\la_g$ normalizes $Q$ for any $g\in G.$
Let $M$ be the C*-subalgebra of $B(L^2(G))$ generated by $\overline{\C G}$ and $\scrC(G)$.
We observe that the latter {is the norm completion inside $B(L^2(G))$ of} the algebraic crossed product 
$$\scrC(G)\rtimes_{\alg} G:= \{\sum_{g\in G} a_g \la_g \vert \ a_g \in \scrC(G), \supp(a) \text{ finite } \},$$ 
where $\supp(a)$ denotes the {the support of $a$} that is the set of $g\in G$ for which $a_g\neq 0.$
{Note that} this crossed product is in general not the classical C*-algebraic reduced crossed product unless $G$ is discrete.
As pointed out by Siegfried Echterhoff, we have the following short exact sequence:
$$0\to \overline{\C G}\to M \to \scrC(G) \to 0.$$
{In the definition of $M$ the group $G$ is involved twice: On the one hand, it underlies the definition of the space of continuous functions $\scrC(G)$ and, on the other hand, it is the basis of the group ring or convolution algebra $\C G$. The first invokes the topology of $G$ while the second only uses the algebraic structure of $G$. For this reason, we use the notation $G_d$ in the second case to express the fact that we forget the topology of $G$ and consider it as a discrete group. To this end, we denote our algebra $M$ as the crossed-product $\scrC(G)\rtimes G_d$.\\
Observe that this is only a notation and in general $M$ is not the reduced crossed-product of the discrete group $G_d$ acting on the C*-algebra $\scrC(G)$. Indeed, the later will not be separable if $G$ is uncountable as the first one $M$ is always separable since it is a C*-subalgebra of $B(L^2(G))$ where $G$ is separable.
Similar constructions and notations were previously considered, see \cite{Grundling96}.}

{Note that in those cases in which we deal with (normal) states and von Neumann algebras we may choose any weakly dense *-subalgebra of $B(L^2(G))$, see Remark \ref{rem:subalg}. In this sense, our choice of C*-algebra $M$ is minimal to include all unitaries $\lambda_{g}$, $g\in G$ and multiplication operators $a\in C(G)$.}

From now on, all tensor products of C*-algebras are considered to be minimal tensor products. 
But, as we will mainly be dealing with nuclear C*-algebras, the choice of tensor product will mostly not matter.\\[0.25cm]
We define the morphism 
\begin{equation}\label{equa:R}R:M\to M\ot M, R(b)=\Ad(u)(b\ot \id) \text{ where } u\xi(g,h):=\xi(gh,h) ,\end{equation}
for all $\xi\in L^2(G\times G)$. 
This defines a tensor functor 
$$\Psi:\cF\to (\Calg,\ot_{\min}), \Psi(1):=M, \Psi(Y)=R.$$
By the previous section, we obtain the direct limit of C*-algebras 
$$\scrM_0:=\varinjlim_{t\in\fT} M_t$$ 
together with an action of Thompson's group: $\al_{\Psi}:V\act \scrM_0.$

{The motivation for choosing $M$ as the norm completion of the algebraic crossed-product is threefold: First,
$M$ is a unital C*-algebra. Second, the map $R:M\to M\ot M$ defined in \eqref{equa:R} has a simple description when an operator $a$ is written as a discrete sum $\sum_{g\in G} a_g\la_g$ with $a_g\in\scrC(G)$. Third, the inductive limit algebra $\scrM_0$ admits a very explicit description in terms of a discrete crossed product such as $M$, see Section \ref{sec:M_0}.}

\begin{lemma}\label{lem:R}
We have the following equalities:
\begin{equation}
\label{eq:ymrefine}
R(\sum_{g\in G} a_g \la_g) = \sum_{g\in G} ( a_g\circ \mu_G) \la_{g,e},
\end{equation}
where $\mu_G$ is the group multiplication of $G$ and $g\mapsto a_g$ has finite support with values in $\scrC(G).$

In particular, we have that $R(\scrC(G))\subset \scrC(G)\ot \scrC(G), \ R(\overline{\C G})\subset \overline{\C G}\ot \overline{\C G}$ and $R(a)=a\circ\mu_G, R(b) = b\ot \id$ for any $a\in\scrC(G), b\in \overline{\C G}.$
\end{lemma}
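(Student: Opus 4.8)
The plan is to unwind the definition $R(b)=\Ad(u)(b\ot\id)$ with $u\xi(g,h)=\xi(gh,h)$ and compute its action separately on the two types of generators, $\la_g\in\overline{\C G}$ and $a\in\scrC(G)$, then assemble the general formula by linearity and multiplicativity. Since $R$ is a $*$-homomorphism, it suffices to verify the claimed images on generators; the formula \eqref{eq:ymrefine} for a general finitely-supported sum $\sum_g a_g\la_g$ then follows because $\sum_g a_g\la_g$ is a finite sum of products $a_g\la_g$ and $R$ respects products and sums.

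First I would compute $R(\la_g)$. Since $\la_g$ acts on $L^2(G)$ by $(\la_g\eta)(h)=\eta(g^{-1}h)$, the operator $\la_g\ot\id$ on $L^2(G\times G)$ sends $\xi$ to $\xi(g^{-1}\cdot,\cdot)$. Conjugating by $u$ and using $u^{-1}\xi(g,h)=\xi(gh^{-1},h)$, a direct substitution should give $\Ad(u)(\la_g\ot\id)\xi(k,h)=\xi(g^{-1}k,h)$, i.e.\ $R(\la_g)=\la_g\ot\id=\la_{g,e}$, confirming $R(\overline{\C G})\subset\overline{\C G}\ot\overline{\C G}$ and the identity $R(b)=b\ot\id$ for $b\in\overline{\C G}$. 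Next I would compute $R(a)$ for a multiplication operator $a\in\scrC(G)$: since $(a\ot\id)\xi(k,h)=a(k)\xi(k,h)$, conjugation by $u$ yields $\Ad(u)(a\ot\id)\xi(k,h)=a(kh)\xi(k,h)$, so $R(a)$ is multiplication by the function $(k,h)\mapsto a(kh)=(a\circ\mu_G)(k,h)$. This is exactly $R(a)=a\circ\mu_G$, and since $k\mapsto a(kh)$ lies in $\scrC(G)$ for each fixed $h$ (and jointly continuous), we get $R(\scrC(G))\subset\scrC(G)\ot\scrC(G)$.

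Finally I would combine these: for $a\in\scrC(G)$ and the convolution generator $\la_g$, multiplicativity of $R$ gives $R(a_g\la_g)=R(a_g)R(\la_g)=(a_g\circ\mu_G)(\la_g\ot\id)=(a_g\circ\mu_G)\la_{g,e}$, and summing over the finite support of $g\mapsto a_g$ produces \eqref{eq:ymrefine}. The only point requiring a little care is identifying that $(a\circ\mu_G)$ genuinely defines an element of the tensor product $\scrC(G)\ot\scrC(G)\cong\scrC(G\times G)$ acting as claimed, which is immediate from the identification of $\scrC(G)$ with multiplication operators on $L^2(G)$ and of $\scrC(G)\ot\scrC(G)$ with $\scrC(G\times G)$.

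The main obstacle, such as it is, is purely computational bookkeeping with the conjugation by $u$: one must track the left-regular convention $(\la_g\eta)(h)=\eta(g^{-1}h)$ and correctly invert $u$ (noting $u^{-1}\xi(g,h)=\xi(gh^{-1},h)$) so that the substitutions land on the right arguments. There is no conceptual difficulty and no convergence issue, since everything reduces to finitely-supported sums and pointwise identities on $L^2(G\times G)$; the result is essentially a restatement of how the twist $u$ implements the group multiplication on the $\scrC(G)$ factor while leaving the $\overline{\C G}$ factor undisturbed.
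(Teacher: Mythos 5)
Your proposal is correct and is exactly the ``routine computation'' that the paper's proof alludes to without writing out: conjugating $\la_g\ot\id$ and $a\ot\id$ by the unitary $u$ (with the correct inverse $u^{-1}\xi(g,h)=\xi(gh^{-1},h)$), identifying $\scrC(G)\ot\scrC(G)$ with $\scrC(G\times G)$, and extending to finite sums by linearity and multiplicativity of the $*$-homomorphism $R$. No gaps; your verification of the generator formulas $R(\la_g)=\la_g\ot\id$ and $R(a)=a\circ\mu_G$ matches the intended argument.
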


\begin{proof}
This is a routine computation.
\end{proof}

\begin{remark}
The identities above are still valid if $a\in L^\infty(G)$ is in the von Neumann algebra of bounded measurable maps on $G$ and $b\in LG$ is in the group von Neumann algebra of $G$. 
Clearly, $M\ot M$ embeds densely (for the weak topology) inside $B(L^2(G\times G))$. 
We also have the following isomorphisms at our disposal: 
$$\otimes_{k=1}^n \scrC(G)\simeq \scrC(G^n) \text{ and } \ot_{k=1}^n \overline{\C G}\simeq \overline{ \C [G^n]  }.$$ 
The unitary $u$ in \eqref{eq:ymrefine} is an analogue of the structure operator in \cite{TakesakiII}, and it appears naturally as multiplicative unitary in the context of quantum groups \cite{Timmermann-08-Quantum-Groups}. \\[0.1cm]
At this point, it is crucial to work with unital C*-algebras and unital maps in order to have $R(Q)\subset Q\ot Q$ and $R(N)\subset N\ot N$. But, as explained in the introduction, unitality is also natural from the point of view of the renormalization group, cp.~{\cite[Section 4]{Brot-Stottmeister-Phys}}.
\end{remark}

The preceding lemma tells us that we can define the tensor functors $\Psi_Q,\Psi_N:\cF\to\Calg$ with the same $R$ but such that $\Psi_Q(n)$ is the $n$th tensor power of $Q:=\scrC(G)$ and $\Psi_N(n)$ the $n$th tensor power of $N:=\overline{\C G}$.
From these two functors we obtain two directed systems of C*-algebras $(Q_t,  \ t\in\fT)$ and $(N_t,  \ t\in\fT)$ with limits $\scrQ_0$ and $\scrN_0$ that we identify with C*-subalgebras of $\scrM_0$.
Moreover, these C*-subalgebras are stable under the action of $V.$
$\scrN_0$ and $\scrQ_0$ form two models that are interesting in their own right, which we refer to as the convolution and the commutative part respectively.

\begin{notation}
If $t$ is a tree with $n$ leaves, we will write $G_t$ for a copy of $G^n$ indexed by leaves of $t$.
Moreover, we will write $Q_t=\scrC(G_t), N_t:=\overline{\C [G_t]},$ and $M_t:=\scrC(G_t)\rtimes G_{t,d}$ {where the label $d$ refers fact that we think of $G_t$ in a discrete way.}
\end{notation}

\subsection{Description of the commutative part $\scrQ_0$.}\label{sec:Commutative}
We start by describing the limit algebra $\scrQ_0$ at the group level.\\[0.1cm]
To this end, we consider the contravariant tensor functor $\Xi:\cF\to \Set$ defined as $\Xi(1)=G$ and $\Xi(Y)=\mu_G$ that is the group multiplication.
Here, $\Set$ denotes the category of sets equipped with the tensor product structure given by cartesian products.
This functor provides an inverse system $(G_t, t\in\fT)$ with maps $p_s^{t}:G_{t}\to G_s$ if $s\leq t$ (i.e.~$t=fs$ for some forest $f$) defined by the group multiplication.
The inverse limit is defined as:
$$\varprojlim_{t\in\fT} G_t := \{ x=(x_t)_{t\in \fT} \in \prod_{t\in \fT} G_t : \ \Xi(f)(x_{ft}) = x_t ,\ t\in\fT, f\in\cF\}.$$
We have the following Jones' action 
$$\alpha_\Xi:V\act \varprojlim_{t\in\fT} G_t, \al_\Xi(v)(x) =y,$$
where $y$ is the unique element satisfying $$y_t(I) = x_t(v^{-1} I) \text{ if } I\in\cI(t) \text{ and } v^{-1}|_I \text{ is affine} ,$$
and $\cI(t)$ is the s.d.p.~associated to the tree $t\in\fT.$
Another way to describe the action for the group $F$ is:
$$(\alpha_\Xi(\dfrac{t}{s})x)_r = \Xi(p)(x_{fs}) \text{ where } ft = pr.$$
Next, we introduce a slightly different description of the inverse limit which we mostly use.
Consider the product space $\prod_{I \ \textup{s.d.i.}} G$, i.e.~the product of copies of $G$ indexed by all s.d.i.~intervals, and let $\oscrA$ be the subset of this product equal to all $x$ satisfying $x(I) = x(I_1)x(I_2)$ for any s.d.i.~$I$ where $I_1$ and $I_2$ are its first and second half respectively.
Then, we have the following bijection 
$$j:\oscrA\to \varprojlim_{t\in\fT} G_t, j(x)_t := x|_{\cI(t)}$$
with inverse map
$$j^{-1}(y)(I) = y_t(I) \text{ if } I\in \cI(t).$$
The Jones' action becomes $$\al_\oscrA:V\act \oscrA, \al_\oscrA(v) x (I) = x(v^{-1} I)$$
if $v^{-1}|_I$ is affine.
If $v^{-1}$ is not affine on $I$, then we break $I$ into smaller s.d.i.~$I_1,\cdots,I_k$ on which $v^{-1}$ is affine and then set $\al_\oscrA(v) x (I) := x(v^{-1} I_1)\cdots x(v^{-1} I_k)$.\\[0.1cm]
Recall that the space $\oscrA$ corresponds to a space of generalized connections of a principal $G$-bundle over the torus $\R/\Z$ in view of classical, geometric models of gauge theory, see, for example, \cite{Velhinho-05-Functorial-Aspects, Thiemann-07-QuantumGR}.\\[0.25cm]
It is possible to introduce additional structures on the inverse system of $G_t$, cp.~\cite{Ashtekar-Lewandowski-94-Representation-Theory, Ashtekar-Lewandowski-95-Projective, Ashtekar-Lewandowski-95-Projective-Differential}.
We consider $G_t$ as a compact topological space (with its usual topology) and we equipped it with its Borel $\sigma$-algebra and its Haar measure $m_{G_t}$.
Since the group multiplication $\mu_G:G\times G\to G$ is continuous and p.m.p., we assert that all maps $\Xi(f)$ are continuous and p.m.p.
Thus, the inverse limit $\varprojlim_{t\in\fT} G_t$ inherits a topology that makes it a compact topological space.
Moreover, let $p_s:\varprojlim_{t\in\fT} G_t\to G_s$ be the projection built from all $p_s^{r}$ with $r \geq s$.
Then, there exists, by the Riesz-Markov-Kakutani theorem, a unique probability measure $\overline m$ on $\varprojlim_{t\in\fT} G_t$ satisfying that each projection $p_s:(\varprojlim_{t\in\fT} G_t , \overline m) \to (G_s,m_{G_s})$ is p.m.p.~because each map $p_s^r, r\geq s$ is p.m.p.
By transferring the topology and the probability measure to $\oscrA$ via the map $j$, we obtain the restriction of the product topology of $\prod_{I\ \textup{s.d.i.}} G$ to $\oscrA$ and a probability measure that we denote by $m_\oscrA$.
In particular, we obtain that Thompson's group $V$ acts by p.m.p.~homeomorphisms on $\oscrA$. 
We now describe the limit and the Jones' action at the C*-algebra level, cp.~\cite{Ashtekar-Lewandowski-95-Projective-Differential} for the gauge theory context as well as \cite{Blackadar-06-Operator-Algebras} for general statements.

\begin{proposition}\label{prop:VQaction}
We have an isomorphism of C*-algebras $J:\scrQ_0\to \scrC(\oscrA)$ such that $J(\scrC(G_t)) = {p_t}_*(\scrC(\oscrA))$ that we continue to identify with $\scrC(G_t).$
Moreover, the Jones' action $\al:V \act \scrM_0$ restricts to $\scrC(\oscrA)$ is given by the following formula:
$$v\cdot b(x) = b(v^{-1} x),\ v\in V, b\in\scrC(\oscrA), x\in\oscrA.$$
\end{proposition}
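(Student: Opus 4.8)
The plan is to realize $J$ as the Gelfand dual of the homeomorphism $\oscrA\cong\varprojlim_{t\in\fT}G_t$. The key structural observation is that the covariant functor $\Psi_Q$ factors through Gelfand duality: by Lemma \ref{lem:R} the connecting map $\iota_s^t:\scrC(G_s)\to\scrC(G_t)$ (for $s\leq t$) is exactly the pullback $(p_s^t)^*$ along the bonding map $p_s^t=\Xi(f):G_t\to G_s$, so that $\Psi_Q=\scrC(\,\cdot\,)\circ\Xi$. Mirroring the proof of Proposition \ref{prop:TensorProd}, I would set $J_t:=p_t^*:\scrC(G_t)\to\scrC(\oscrA)$, $a\mapsto a\circ p_t$, where $p_t:\oscrA\to G_t$ is the canonical projection. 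Since $p_s=p_s^t\circ p_t$ for $s\leq t$, one has $J_t\circ\iota_s^t=J_s$, so the family $(J_t)_{t\in\fT}$ is compatible with the directed system and assembles into a densely defined $*$-homomorphism $J:\scrQ_0\to\scrC(\oscrA)$ with $J|_{\scrC(G_t)}=p_t^*$; this gives the asserted identification of $J(\scrC(G_t))$ with the functions on $\oscrA$ factoring through $p_t$.

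Next I would show that $J$ is a $*$-isomorphism. All bonding maps $p_s^t=\Xi(f)$ are built from the surjective group multiplication $\mu_G$, hence are surjective, so the projections $p_t$ out of the inverse limit of nonempty compact Hausdorff spaces are surjective; therefore each $J_t=p_t^*$ is isometric. As every element of the algebraic inductive limit lies in some $\scrC(G_t)$, the map $J$ is isometric on the algebraic limit and extends to an injective $*$-homomorphism on $\scrQ_0$. For surjectivity, the subalgebra $\bigcup_{t\in\fT}p_t^*(\scrC(G_t))$ is unital, self-adjoint, and separates the points of $\oscrA$, since two distinct points of $\varprojlim_t G_t$ differ in some coordinate $G_t$ and are hence separated by a function in $\scrC(G_t)$. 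By Stone--Weierstrass this subalgebra is dense, so $J$ has dense range; being isometric with complete domain, it is onto, and therefore a $*$-isomorphism.

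Finally, for the equivariance, the conceptual point is that the algebraic Jones' action $\al_\Psi$ on the direct limit $\scrQ_0$ is the Gelfand dual of the geometric Jones' action $\al_\oscrA=\al_\Xi$ on the inverse limit $\oscrA$; this is forced by $\Psi_Q=\scrC(\,\cdot\,)\circ\Xi$ together with the functorial way the Jones' action is built from the defining functor, and Gelfand duality converts the homeomorphism $\al_\oscrA(v)$ into the pullback automorphism $b\mapsto b\circ\al_\oscrA(v)^{-1}$, i.e.~$v\cdot b(x)=b(v^{-1}x)$. To make this precise I would verify $J\circ\al_\Psi(v)=\al_\oscrA(v)_*\circ J$ on generators. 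For $v=\tfrac{t}{s}\in F$ and $a\in\scrC(G_s)$ the action formula gives $\al_\Psi(v)\tfrac{s}{a}=\tfrac{t}{a}$, whence $J(\al_\Psi(v)\tfrac{s}{a})=a\circ p_t$, whereas the right-hand side is $(a\circ p_s)\circ\al_\oscrA(v)^{-1}$; since $v$ maps each interval of $\cI(s)$ affinely onto the corresponding interval of $\cI(t)$, one checks the coordinate identity $p_t(x)=p_s(\al_\oscrA(v)^{-1}x)$ in $G^n$, which yields equality for every $a$. Density and directedness of $\fT$ then promote this from generators to all of $F$, and the permutation part of $V$ is handled by matching the tensor-factor permutation $\theta(\tau^{-1}\sigma)$ with the corresponding coordinate permutation of $G^n$ induced by $\al_\oscrA$. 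I expect this last alignment of the combinatorial data defining $\al_\Psi$ with the geometric interval maps defining $\al_\oscrA$ to be the main obstacle; everything else is a direct transcription of the argument already given for Proposition \ref{prop:TensorProd}.
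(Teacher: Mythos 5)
Your proof is correct and follows essentially the same route as the paper's: both recognize that $\Psi_Q$ factors as the composition of the contravariant functors $\Xi$ and $\scrC$, define $J$ on each level by $\tfrac{t}{b}\mapsto b\circ p_t$, and obtain the action formula by Gelfand duality from the Jones' action on $\oscrA$. The only difference is that the paper states this in two lines, while you supply the standard details it leaves implicit (surjectivity of the projections $p_t$ giving isometry, Stone--Weierstrass giving surjectivity, and the coordinate identity $p_t(x)=p_s(\al_\oscrA(v)^{-1}x)$ verifying equivariance on generators), all of which check out.
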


\begin{proof}
The functor $\Psi_Q:\cF\to \Calg$ is obtained by composition of the functor $\Xi$ and the contravariant functor $\scrC:X\to \scrC(X)$ that sends a compact topological space $X$ to the C*-algebra of continuous function $\scrC(X)$.
This implies that the limit $\scrQ_0$ is isomorphic to $\scrC(\oscrA)$ via the map 
$$J:\scrQ_0\to \scrC(\oscrA), \dfrac{t}{b}\mapsto b\circ p_t.$$
The formula for the Jones' action on $\scrC(\oscrA)$ is found from the expression for the Jones' action on $\oscrA$ and the functor $\scrC$.
\end{proof}

\begin{remark}
It is important to note that $\oscrA$ does not have a natural group structure unless the group $G$ is abelian. 
Clearly, this is due to the fact that the group multiplication $\mu_G$ is a group morphism if and only if $G$ is abelian and, thus, the pointwise multiplication will only provide a group structure on $\oscrA$ in that latter case.
\end{remark}

\subsection{Description of the convolution part $\scrN_0$}\label{sec:Convolution}
This time, we consider the group $\prod_\D G$ of all maps from $\D$ to $G$ with the group law given by the pointwise multiplication. We denote by $\supp(g)$ the support of $g$, i.e.~the set of $d\in\D$ such that $g(d)\neq e$.
Let $\oplus_\D G$ be the subgroup of $\prod_\D G$ of finitely supported elements and $\C[ \oplus_\D G]$ its group ring. 
The action $V\act \D$ gives an action $V\act \oplus_\D G$ via generalized Bernoulli shifts.\\[0.1cm]
For a tree $t\in\fT$ with associated s.d.p.~$d_1=0<d_2<\cdots<d_n<d_{n+1}=1$, we have a embedding of groups $\iota_t :G_t\to \oplus_\D G$ given by the inclusion $\D(t):=\{d_1,\cdots,d_n\}\subset \D$ such that $G_t$ becomes the subgroup of maps supported in $\D(t)$.

\begin{proposition}
There is an isomorphism $J:\scrN_0\to \otimes_\D \overline{\C G}$ where $\otimes_\D \overline{\C G}$ is the infinite minimal tensor product of the C*-algebra $\overline{\C G}$ over the set $\D.$
Moreover, the Jones' action $\Ad(J)\circ\al:V\act \otimes_\D \overline{\C G}$ is given by generalized Bernoulli shifts: 
$$\Ad(J)\circ\al(v)(\ot_{d\in\D} b_d) = \ot_{d\in\D} b_{v^{-1}d} , \text{ for any elementary tensor } \ot_{d\in\D} b_d \text{ and }  v\in V.$$
\end{proposition}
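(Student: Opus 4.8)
The plan is to recognize this proposition as a direct specialization of Proposition \ref{prop:TensorProd}, so that essentially no new work is required. The only point to check is that the tensor functor $\Psi_N$ governing the directed system $(N_t,\ t\in\fT)$ coincides with the functor treated there, applied to the building block $B=\overline{\C G}$. Recall that $\overline{\C G}$ is unital with unit $\la_e$, which is also the unit of $M$; hence the restriction to $N=\overline{\C G}$ of the identity of $M$ is the identity of $N$, and Lemma \ref{lem:R} gives $R(b)=b\ot \la_e=b\ot 1$ for every $b\in\overline{\C G}$. This is exactly the connecting morphism $b\mapsto b\ot 1$ appearing in Proposition \ref{prop:TensorProd}. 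I emphasize that no nuclearity or amenability of $\overline{\C G}$ is needed, since Proposition \ref{prop:TensorProd} holds for an arbitrary unital C*-algebra and the refinement map touches only one tensor leg at a time.

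First I would therefore invoke Proposition \ref{prop:TensorProd} with $B:=\overline{\C G}$. It produces an isomorphism of C*-algebras $J:\scrN_0\to \otimes_\D \overline{\C G}$ onto the minimal infinite tensor product over $\D$, and simultaneously shows that $\Ad(J)\circ\al$ is the generalized Bernoulli shift $\Ad(J)\circ\al(v)(\ot_{d\in\D} b_d)=\ot_{d\in\D} b_{v^{-1}d}$. Since the two assertions of the present proposition are precisely the two conclusions of that result, this already finishes the argument.

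If one prefers a self-contained proof rather than a citation, one reruns the proof of Proposition \ref{prop:TensorProd} verbatim. For a tree $t$ with s.d.p.\ $\dn$ and $\D(t)=\{d_1,\dots,d_n\}$, define $J_t:N_t\to \otimes_\D \overline{\C G}$ as the unital embedding induced by $\D(t)\subset\D$; under the identification $N_t\simeq \overline{\C[G^n]}$ recorded in the Remark after Lemma \ref{lem:R}, this is the map induced by the group embedding $\iota_t:G_t\to \oplus_\D G$. Lemma \ref{lem:R} gives $J_{ft}\circ\iota_t^{ft}=J_t$ for every forest $f$, so the $J_t$ are compatible with the directed system and assemble into a densely defined isometric $*$-morphism $J$ that extends to $\scrN_0$ with dense, hence full, range. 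The Bernoulli-shift intertwining is then verified on $F$ by tracking where the s.d.p.\ of $t$ is sent, and extended to $V$ by treating a single permutation of a fixed s.d.p.\ on a decomposable elementary tensor, exactly as before.

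The main (and essentially only) content is the identity $R(b)=b\ot 1$ for $b\in\overline{\C G}$ from Lemma \ref{lem:R}: it says that refining along $Y$ never couples the two resulting tensor legs on the convolution part, which is what makes the limit a plain infinite tensor product and the action a clean generalized Bernoulli shift. I do not anticipate any genuine obstacle here; the contrast is instructive against the commutative part $\scrQ_0$ of Proposition \ref{prop:VQaction}, where the comultiplication-type map $a\mapsto a\circ\mu_G$ really does couple the legs, forcing the limit to be the more subtle function algebra $\scrC(\oscrA)$ rather than a tensor product.
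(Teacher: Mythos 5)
Your proposal is correct and matches the paper's own proof exactly: the paper also cites Lemma \ref{lem:R} for the identity $R(a)=a\ot\id$ on $\overline{\C G}$ and then invokes Proposition \ref{prop:TensorProd} with $B=\overline{\C G}$ to obtain both the isomorphism $J$ and the Bernoulli-shift form of the Jones' action. The additional self-contained rerun you sketch is unnecessary but harmless.
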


\begin{proof}
Lemma \ref{lem:R} shows that $R(a)=a\ot \id$ for any $a\in N:=\overline{\C G}.$
Therefore, Proposition \ref{prop:TensorProd} implies that we have an isomorphism $J:\scrN_0\to \otimes_\D \overline{\C G}$ such that the conjugate of the Jones' action is the desired action by generalized Bernoulli shifts.
\end{proof}

The group $\oplus_\D G$ together with the action $V\act \oplus_\D G$ can be interpreted as the result of Jones' construction for the tensor functor: 
$$\Upsilon:\cF\to \Gr \text{ such that } \Psi(1)=G \text{ and } \Upsilon(Y)(g) = (g,e), g\in G,$$
where $\Gr$ is the category of groups whose tensor product structure is given by the direct products of groups and group morphisms. 
This way, the classical topology on $\oplus_\D G$ is the final one corresponding to the restriction of the box topology of $\prod_\D G$.
By composing this functor with the functor $H\mapsto \C[H]$ that associates with a group its group ring *-algebra, we obtain a functor similar to $\Psi_N$. The difference being that we do not take a completion in this scheme. 

The group ring $\C [\Gd]$ is a dense *-subalgebra of $\otimes_\D \overline{\C G}$ and we have that $J(\la_{G_t}(g)) = u(\iota_t(g)), g\in G_t, t\in\fT$ where $u:\Gd\to \otimes_\D \overline{\C G}$ is the canonical embedding.
We have the following identity for the action of $V$ on the group ring that is another way to define the Jones' action on $\scrN_0$:
$$[\Ad(J)\al(v)](u_g) = u_{vg}, g\in\Gd, v\in V.$$ 

\subsection{Description of the full inductive limit $\scrM_0$.}\label{sec:M_0}

Following the description of $\scrQ_0$ and $\scrN_0$, we consider $\scrM_0$ and the interplay between these two subalgebras.
We freely identify $\scrQ_0$ with $\scrC(\oscrA)$ and $\scrN_0$ with $\bigotimes_{d\in\D}^{\min} \overline{\C G}$. 
As before, we denote by $u:\Gd\to \scrN_0$, $g\mapsto u_g$ the embedding of the group $\Gd$ inside the unitary group of $\scrN_0.$
We observe that the group $\oplus_\D G$ admits an action on $\oscrA$ by:
\begin{equation}\label{equa:GA}g\cdot x(d,d') = g(d) x(d,d'), \ (d,d') \text{ a s.d.i.},\ g\in \oplus_\D G,\ x\in \oscrA.\end{equation}
This action does not restricts to an action of $\oplus_\D G$ on the spaces $(G_t,m_{G_t})$, $t\in\fT$. Nevertheless, for any $g\in\oplus_\D G$, we find $t_{g}\in\fT$ s.t.~the map $g\cdot:(G_{s},m_{s})\rightarrow (G_{s},m_{s})$ is well-defined for $s\geq t_{g}$ as well as continuous and p.m.p.
This way, we have a family of continuous actions on $\oscrA$ by the subgroups $\oplus_{\D(t)} G$, where $\D(t)=\{d_1,\cdots,d_n\}$ is the set of boundary points associated to the s.d.p.~of the tree $t$, compatible with the inverse system of the $(G_t,m_{G_t})$ given by the contravariant functor $\Xi$ defined in Section \ref{sec:Commutative}.
This implies that the group $\oplus_\D G$ acts by p.m.p.~homeomorphism on $(\oscrA,m_\oscrA)$.

Here, we are only interested in knowing that each element $g\in\oplus_\D G$ acts continuously and p.m.p. on $\oscrA.$ We do not require any statement regarding the continuity of the morphism $\oplus_\D G\to \Aut(\oscrA)$. Therefore, we have so far not defined any topology on $\Aut(\oscrA)$.
We denote by $\scrC(\oscrA)\rtimes_{\alg} \Gd$ the *-algebra equal to the algebraic crossed-product for this action generated by the group ring $\C[\Gd]$ and $\scrC(\oscrA)$.

\begin{theo}\label{theo:MzeroNC}
The *-algebra $\scrC(\oscrA)\rtimes_{\alg} \Gd$ embeds as a dense *-subalgebra of $\scrM_0.$

There exists a unique faithful *-representation $$\pi:\scrM_0\to B(L^2(\oscrA,m_\oscrA))$$ satisfying 
$$\pi(bu_g)\xi(x) = b(x) \xi(g^{-1} x),\ b\in\scrC(\oscrA), g\in\Gd, \xi \in L^2(\oscrA,m_\oscrA), x\in \oscrA.$$

The Jones' action $\al:V\act\scrM_0$ implemented by the functor $\Psi$ satisfies the formula:
$$\al(v) (b u_g) = b(v^{-1} \cdot ) u_{vg},\ v\in V, b\in \scrC(\oscrA), g\in \Gd$$
and is implemented by the unitaries $U_v\in U(L^2(\oscrA,m_\oscrA)),v\in V$ defined by the formula: 
$$U_v\xi(x):= \xi(v^{-1} x),\ \xi \in L^2(\oscrA,m_\oscrA), x\in \oscrA.$$
\end{theo}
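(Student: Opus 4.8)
The plan is to prove the three assertions of Theorem \ref{theo:MzeroNC} in turn, leveraging the structural descriptions of the two subalgebras $\scrQ_0 \cong \scrC(\oscrA)$ and $\scrN_0 \cong \otimes_\D \overline{\C G}$ already established in Propositions \ref{prop:VQaction} and the convolution analogue, together with Lemma \ref{lem:R}. The central observation is that $\scrM_0$ is generated as a C*-algebra by its two commuting-relations-linked subalgebras $\scrQ_0$ and $\scrN_0$, and that the embedding $R$ of \eqref{equa:R} precisely encodes the crossed-product commutation relation at each finite level.

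First I would establish the algebraic embedding. At each finite level $t$, the algebra $M_t = \scrC(G_t)\rtimes G_{t,d}$ is by definition the norm closure of the algebraic crossed product, so $\scrQ_0$ and $\scrN_0$ generate a dense copy of $\bigcup_t \scrC(G_t)\rtimes_{\alg} G_{t,d}$ inside $\scrM_0$. The content is to check that the connecting maps $\iota_t^{ft}$ respect the crossed-product relation, i.e.\ that the action of $G_t$ on $\scrC(G_t)$ by translation is compatible, under $R$, with the action \eqref{equa:GA} of $\oplus_\D G$ on $\oscrA$. This follows directly from Lemma \ref{lem:R}: since $R(a) = a\circ\mu_G$ for $a\in\scrC(G)$ and $R(b) = b\ot\id$ for $b\in\overline{\C G}$, the covariance relation $u_g\, a\, u_g^{-1} = a(g^{-1}\,\cdot\,)$ at level $t$ maps to the corresponding relation at level $ft$ under the identifications. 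Passing to the inductive limit, $\scrC(\oscrA)\rtimes_{\alg}\Gd$ embeds densely in $\scrM_0$.

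Next I would construct the representation $\pi$ on $L^2(\oscrA,m_\oscrA)$. The formula for $\pi$ is the standard (left-regular) covariant representation of the crossed product: multiplication operators for $\scrC(\oscrA)$ and the unitaries induced by the p.m.p.\ action of $\Gd$ on $(\oscrA,m_\oscrA)$, whose existence and measure-preservation were established in Section \ref{sec:M_0}. One verifies the covariance relation $\pi(u_g)\pi(b)\pi(u_g)^{-1} = \pi(b(g^{-1}\,\cdot\,))$ by a direct substitution, so $\pi$ extends to a *-representation of $\scrC(\oscrA)\rtimes_{\alg}\Gd$ and hence, by continuity, of $\scrM_0$. \emph{Faithfulness} is the step I expect to be the main obstacle: the naive regular representation of a crossed product need not be faithful on the full (here, maximal-type) inductive-limit C*-algebra. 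I would address this by exploiting nuclearity and the fact that $\scrM_0$ is an inductive limit of the $M_t\subset B(L^2(G_t))$, each of which is represented faithfully (it is a C*-subalgebra of $B(L^2(G_t))$). Concretely, since $p_t:(\oscrA,m_\oscrA)\to(G_t,m_{G_t})$ is measure-preserving, the restriction of $\pi$ to each $M_t$ is unitarily equivalent (via pullback $p_t^*:L^2(G_t)\to L^2(\oscrA)$) to a multiple of the defining representation of $M_t$ on $L^2(G_t)$, hence faithful and isometric on each $M_t$; by density and the argument used for faithful GNS representations in Section \ref{sec:state-prelim} (cf.\ \cite[Proposition II.8.2.4]{Blackadar-06-Operator-Algebras}), $\pi$ is then faithful on all of $\scrM_0$.

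Finally I would verify the formula for the Jones' action and its spatial implementation. The action formula $\al(v)(bu_g) = b(v^{-1}\cdot)u_{vg}$ combines the already-established formula for $\al$ on $\scrC(\oscrA)$ (Proposition \ref{prop:VQaction}) and on the group ring $u_g\mapsto u_{vg}$ (the convolution-part proposition), once one checks these are compatible, which again reduces to the generating relations via $R$. That the unitaries $U_v\xi(x) = \xi(v^{-1}x)$ are well-defined and unitary is exactly the statement that $V$ acts by p.m.p.\ homeomorphisms on $(\oscrA,m_\oscrA)$, established in Section \ref{sec:Commutative}; a short computation then shows $U_v\,\pi(bu_g)\,U_v^{-1} = \pi(\al(v)(bu_g))$ on both the multiplication generators and the unitary generators, which suffices by density. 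The only subtlety here is bookkeeping when $v^{-1}$ fails to be affine on a given s.d.i., but this is handled by the refinement convention for $\al_\oscrA$ already fixed in Section \ref{sec:Commutative}.
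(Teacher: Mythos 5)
Your proposal is correct in substance and relies on the same ingredients as the paper's proof --- the defining representations $\pi_t:M_t\to B(L^2(G_t))$, the p.m.p.\ projections $p_t$, and density of the level algebras --- but it organizes the construction of $\pi$ in the opposite direction, and this reversal creates the one obligation you treat too lightly. The paper works bottom-up: it forms the directed system of Hilbert spaces $L^2(G_t,m_{G_t})$ with isometries $w_t^{ft}:=(p_t^{ft})^*$, whose limit is $L^2(\oscrA,m_\oscrA)$, verifies the intertwining relation $w_t^{ft}\,\pi_t(a)=\pi_{ft}(\iota_t^{ft}(a))\,w_t^{ft}$, and defines $\pi$ as the inductive limit of the $\pi_t$; since each $\pi_t$ is injective, hence isometric, both the extension of $\pi$ to all of $\scrM_0$ and its faithfulness come for free. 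You instead write the covariant formula for $\pi$ directly on $L^2(\oscrA,m_\oscrA)$ and extend it ``by continuity'' to $\scrM_0$; but the norm on $\scrM_0$ is inherited from the particular completions $M_t\subset B(L^2(G_t))$ (neither the universal nor the reduced crossed-product norm), so a covariant representation of the dense *-algebra $\scrC(\oscrA)\rtimes_{\alg}\Gd$ is not automatically bounded for it. Your remedy --- that $\pi$ restricted to $M_t$ is a \emph{multiple} of $\pi_t$ --- would indeed give the required two-sided norm identity, and it is true; however, the pullback $p_t^*$ you invoke only exhibits one copy of $\pi_t$ as a subrepresentation, which yields the lower bound $\|\pi(x)\|\geq\|x\|$ (hence faithfulness) but not the upper bound needed for the extension. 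To close this, either prove the multiplicity claim by trivializing the fibrations $G_{ft}\to G_t$ compatibly with Haar measures, so that $L^2(\oscrA,m_\oscrA)$ factorizes as $L^2(G_t)$ tensored with a fibre $L^2$-space on which the level-$t$ operators act trivially, or note, as the paper's construction does implicitly, that for $x$ in the level-$t$ algebraic crossed product the operator $\pi(x)$ preserves every subspace $p_s^*L^2(G_s)$, $s\geq t$, acting there as $\pi_s(\iota_t^s(x))$, and that these subspaces have dense union, whence $\|\pi(x)\|\leq\|x\|$. Your handling of the dense embedding of $\scrC(\oscrA)\rtimes_{\alg}\Gd$ and of the Jones' action (combining Proposition \ref{prop:VQaction} with its convolution analogue and checking the unitary implementation on generators) matches the paper's proof.
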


\begin{proof}
Consider a tree $t$ with associated s.d.p.~$\dn$, let $\D(t)=\{d_1,\cdots,d_n\}$ be the set of endpoints and let $\cI(t)$ be the set of s.d.i.~$(d_j,d_{j+1}), 1\leq j\leq n$.
Recall that $M_t=\scrC(G_t)\rtimes G_{t,d}$ and identify the first and second copy of $G_t$ with $\oplus_{\cI(t)} G$ and $\oplus_{\D(t)}G$ respectively, thus $M_t\simeq \scrC(\oplus_{\cI(t)} G)\rtimes_{\alg} \oplus_{\D(t)}G.$
Observe that $\oplus_{\D(t)}G$ acts on $\oplus_{\cI(t)} G$ via the formula \eqref{equa:GA}.
In this way, we can identify the dense *-subalgebra $\scrC(\oplus_{\cI(t)} G)\rtimes_{\alg} \oplus_{\D(t)}G$ as a *-subalgebra of $\scrC(\oscrA)\rtimes_{\alg} \Gd$.
By the preceding results, this collection of embeddings inside $\scrC(\oscrA)\rtimes_{\alg} \Gd$ respects both directed systems which implies that $\scrC(\oscrA)\rtimes_{\alg} \Gd$ embeds inside $\scrM_0$.
Moreover, $\scrC(\oscrA)\rtimes_{\alg} \Gd$ contains the union of the $\scrC(\oplus_{\cI(t)} G)\rtimes_{\alg} \oplus_{\D(t)}G$ for $t\in\fT$ which is a dense subalgebra of $\scrM_0$ and, thus, $\scrC(\oscrA)\rtimes_{\alg} \Gd$ is dense inside $\scrM_0$.\\[0.1cm]
Next, we show that we have a directed system of faithful representations.
Consider $\pi_t:M_t\to B(L^2(G_t))$ the obvious representation coming from the definition of $M_t$, which is faithful by definition.
Observe that we have a directed system of Hilbert spaces $(L^2(G_t,m_{G_t}), t\in\fT)$ with inclusion maps $w_t^{ft}:L^2(G_t,m_{G_t})\to L^2(G_{ft},m_{G_{ft}})$ given by $w_t^{ft}(\xi)(x):= \xi(p_t^{ft}(x)), \xi \in L^2(G_t,m_{G_t}), x\in G_{ft}$.
Moreover, the maps $w_t^{ft}$ are isometries since the projections $p_t^{ft}:(G_{ft},m_{G_{ft}})\to (G_t, m_{G_t})$ are p.m.p.
The limit of this system is $L^2(\oscrA,m_\oscrA)$.
Recall that $\iota_t^{ft}:M_t\to M_{ft}$ is the embedding given by the functor $\Psi:\cF\to \Calg.$
We have the following compatibility condition:
$$w_{t}^{ft} ( \pi_t (a_t) \xi_t ) = \pi_{ft} ( \iota_t^{ft} (a_t) ) w_t^{ft}(\xi_t),$$
for any $a_t\in M_t$ and vector $\xi_t\in L^2(G_t)$.
Therefore, we can define a *-representation on the direct limit:
$\pi:\varinjlim_{t\in\fT} M_t\to B(L^2(\oscrA,m_\oscrA))$ satisfying $\pi_t(a_t)\xi_t = \pi(a_t)\xi_t$ for any $a_t\in M_t$ and $\xi_t\in L^2(G_t).$
Since each $\pi_t$ is injective, it is an isometry and, thus, $\pi$ extends to a faithful representation on the C*-algebra completion $\scrM_0.$
For any $t\in\fT$, $b\in\scrC(\oplus_{\cI(t)} G), g\in \oplus_{\D(t)} G$ and $\xi \in L^2(G_t)$, we have that $\pi(bu_g)\xi = b \xi(g^{-1} \cdot)$ and by density the formula of the theorem.

The properties of the Jones' action are a consequence of the last two subsections. The unitary implementation is checked by a routine computation.
\end{proof}

The construction of $L^2(\oscrA,m_\oscrA)$ and the representation $v\in V\mapsto U_v \in \cU(L^2(\oscrA,m_\oscrA)$ can be interpreted as a Jones' representation. 
Indeed, consider the contravariant tensor functor $\tilde\Xi:\cF\to \Prob$, where $\Prob$ is the category of probability measure spaces with direct product for tensor product satisfying $\tilde\Xi(1) = (G,m_G)$ and $\tilde\Xi(Y)$ being the group multiplication. Compose it with the contravariant tensor functor $L^2:\Prob \to \Hilb$ that sends a probability measure space $(X,\nu)$ to $L^2(X,\nu)$.
Then, $v\in V\mapsto U_v$ is the Jones' action induced by the covariant tensor functor $L^2\circ \tilde\Xi:\cF\to \Hilb.$

\begin{remark}
\label{rem:gausslaw}
The representation $\pi$ of $\scrM_{0}$ is typically called irregular because the time-zero gauge fields are only defined in exponentiated form via $\scrC(\oscrA)$. This irregularity is a consequence of the fact that we work with a Hilbert space representation of an algebra of gauge fields. If we wanted to work with regular representation, we would presumably need to invoke indefinite inner product spaces, see \cite{Loeffelholz-03-Temporal-Gauge}.
\end{remark}

\subsection{Action of the gauge group}\label{sec:gauge}
Let $\Ga:=\prod_\D G$ be the infinite product of $G$ over the set $\D$ equipped with the Tychonoff topology, i.e.~the group of all maps $g:\D\to G$ with the pointwise convergence topology.
It is a separable compact group.

We consider a tree $t$ with $n$ leaves and its associated s.d.p.~$\dn$.
We identify the group $G_t$ with $\oplus_{\cI(t)} G$ where $\cI(t)=\{(d_i,d_{i+1}),\ 1\leq i \leq n \}$ and define the gauge action at level $t$:
$$Z_t:\Ga\act G_t, Z_t(s)(g)(d_i,d_{i+1}) = s(d_i) g(d_i , d_{i+1}) s(d_{i+1})^{-1}.$$
As we work with the torus $\R/\Z$, we have the periodicity condition $s(d_{n+1}) = s(1) = s(0) = s(d_1)$. 
This action is p.m.p.~w.r.t.~the Haar measure $m_{G_t}$ because $G_t$ is unimodular and, thus, provides a unitary representation
$$W_t:\Ga\to \cU(L^2(G_t)), W_t(s)\xi(x):= \xi(Z_t( s^{-1} ) x), s\in \Ga, x\in G_t, \xi \in L^2(G_t).$$
We denote by $\ga_t:\Ga\act B(L^2(G_t))$ the adjoint action $\ga_t(s)(a):= W_t(s) a W_t(s)^*, s\in \Ga, a\in B(L^2(G_t))$.\\[0.1cm]
The following proposition follows from a routine computation but requires several identifications. We provide the proof for the convenience of the reader.
\begin{proposition}\label{prop:gaugeaction}
The C*-subalgebra $M_t=\scrC(G_t)\rtimes G_{t,d} \subset B(L^2(G_t))$ is stable under the action $\ga_t:\Ga\act B(L^2(G_t))$ and satisfies the formula:
$$\ga_t : \Ga\act M_t, \ga_t(s)\left(\sum_{g\in G_t} a_g \la_g\right)\mapsto \sum_{g\in G_t} a_g(Z_t(s)^{-1}\cdot ) \la_{sgs^{-1}},$$
for $s\in \Ga, a_g\in\scrC(G_t)$ and $g\mapsto a_g$ with finite support.
\end{proposition}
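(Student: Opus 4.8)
The plan is to verify both assertions by a direct computation on $L^2(G_t)$, treating multiplication operators and the convolution unitaries $\la_g$ separately and then recombining them via the multiplicativity of $\ga_t(s)=\Ad(W_t(s))$. First I would record the two basic actions on $L^2(G_t)$: for $a\in\scrC(G_t)$ the operator acts by $(a\xi)(x)=a(x)\xi(x)$, while $(\la_g\xi)(x)=\xi(g^{-1}x)$ for $g\in G_t$. Writing $\phi_s:=Z_t(s)$, which is a p.m.p.\ homeomorphism of $G_t$ satisfying $\phi_s\phi_{s'}=\phi_{ss'}$, the defining formula reads $W_t(s)\xi=\xi\circ\phi_{s^{-1}}$; since $W_t$ is a unitary representation, $W_t(s)^*=W_t(s^{-1})$ acts by $\xi\mapsto\xi\circ\phi_s$.

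Next, for a multiplication operator $a$ a short chase of $W_t(s)\,a\,W_t(s)^*$ through these formulas shows that $\ga_t(s)(a)$ is again multiplication, by the function $a\circ\phi_{s^{-1}}=a(Z_t(s)^{-1}\cdot)$; this is continuous because $\phi_{s^{-1}}$ is a homeomorphism, so $\ga_t(s)(\scrC(G_t))=\scrC(G_t)$. Then I would compute $\ga_t(s)(\la_g)$: evaluating $W_t(s)\la_g W_t(s)^*\xi$ at $x$ gives $\xi(\phi_s(g^{-1}\phi_{s^{-1}}(x)))$, and the crux is to simplify the inner expression componentwise. Writing $x_i:=x(d_i,d_{i+1})$ and $g_i:=g(d_i,d_{i+1})$, one has $\phi_{s^{-1}}(x)_i=s(d_i)^{-1}x_i\,s(d_{i+1})$, so after the pointwise left multiplication by $g^{-1}$ and a further application of $\phi_s$ the factor $s(d_{i+1})$ cancels against the $s(d_{i+1})^{-1}$ coming from $\phi_s$.

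What survives is conjugation of $g$ by the left-endpoint values only, namely $\phi_s(g^{-1}\phi_{s^{-1}}(x))_i=(s(d_i)g_i s(d_i)^{-1})^{-1}x_i$, whence $\ga_t(s)(\la_g)=\la_{sgs^{-1}}$, where $sgs^{-1}\in G_t$ denotes the element with $i$-th component $s(d_i)g_i s(d_i)^{-1}$ (equivalently, pointwise conjugation by $s$ once $g$ is regarded in $\oplus_{\D(t)}G\subset\Ga$ through the left-endpoint identification $\cI(t)\cong\D(t)$). The one conceptual point beyond bookkeeping is exactly this cancellation of the right-endpoint gauge factors: it is what reduces the bilateral transformation $Z_t$ to an honest conjugation and, crucially, keeps $sgs^{-1}$ inside $G_t$. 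I do not expect any genuine obstacle here; the remaining steps are routine.

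Finally I would assemble the statement. Since $\ga_t(s)=\Ad(W_t(s))$ is a $*$-automorphism of $B(L^2(G_t))$, it is multiplicative, so for a finitely supported sum $\ga_t(s)(\sum_{g}a_g\la_g)=\sum_{g}\ga_t(s)(a_g)\,\ga_t(s)(\la_g)=\sum_{g}a_g(Z_t(s)^{-1}\cdot)\,\la_{sgs^{-1}}$, which is the asserted formula and manifestly lies in the algebraic crossed product $\scrC(G_t)\rtimes_{\alg}G_{t,d}$. As such elements are dense in $M_t$ and $\ga_t(s)$ is isometric, we get $\ga_t(s)(M_t)\subseteq M_t$; applying the same to $s^{-1}$ yields the reverse inclusion, so $\ga_t(s)(M_t)=M_t$, establishing the claimed stability.
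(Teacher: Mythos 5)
Your proof is correct and follows essentially the same route as the paper's: identify $M_t$ with the crossed product over $\cI(t)$ and $\D(t)$, compute $\Ad(W_t(s))$ separately on multiplication operators and on the unitaries $\la_g$, and observe the componentwise cancellation of the right-endpoint factors $s(d_{i+1})$, which is exactly the paper's computation of $Z_t(s)g^{-1}Z_t(s^{-1})x(d,d') = (sgs^{-1})^{-1}(d)\,x(d,d')$. Your closing paragraph on stability (density of the algebraic crossed product plus isometry of $\ga_t(s)$, then applying the argument to $s^{-1}$) just makes explicit what the paper leaves implicit.
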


\begin{proof}
Consider a tree $t$ associated with s.d.p.~data $\dn,\D(t),\cI(t)$ as above.
Identify $M_t$ with the norm completion of $\scrC(\oplus_{\cI(t)} G)\rtimes_{\alg} \oplus_{\D(t)}G$ and $L^2(G_t)$ with $L^2(\oplus_{\cI(t)} G).$ 
Observe that under those identifications the left regular action of $G_t$ on itself becomes 
$$\oplus_{\D(t)}G\act \oplus_{\cI(t)} G, (g\cdot x) (d,d'):= g(d) x(d,d') \text{ for } (d,d')\in \cI(t).$$
Consider $s\in\Ga, a\in\scrC(\oplus_{\cI(t)} G), g\in \oplus_{\D(t)}G$ and $\xi\in L^2(G_t).$
Observe that 
\begin{align*}
\ga_t(s)a\xi(x) & = W_t(s) a W_t(s)^* \xi(x) = a W_t(s)^* \xi(Z_t(s^{-1} ) x)\\
& = a ( Z_t( s^{-1} ) x) W_t(s)^* \xi (Z_t(s^{-1} ) x) = a ( Z_t( s^{-1} ) x) \xi ( x)
\end{align*}
Therefore, $\ga_t(s)(a) = a( Z_t( s^{-1} )  \cdot ).$

Consider $(d,d')\in\cI(t).$
We have that 
\begin{align*}
Z_t(s) g^{-1} Z_t(s^{-1}) x (d,d') & = s(d) (g^{-1} Z_t(s^{-1}) x ) (d,d') s(d')^{-1}\\
& = s(d)g(d)^{-1} (Z_t(s^{-1}) x ) (d,d') s(d')^{-1}\\
& = s(d)g(d)^{-1}s(d)^{-1} x (d,d') s(d') s(d')^{-1}\\
& = (sgs^{-1})^{-1}(d) x (d,d'). 
\end{align*}

Now, we infer the result for the convolution operator $\la_g$:
\begin{align*}
\ga_t(s)(\la_g)\xi(x) & = \la_g W_t(s)^* \xi (Z_t(s^{-1}) x) = W_t(s)^* \xi (g^{-1}Z_t(s^{-1}) x)\\
& =  \xi (Z_t(s) g^{-1}Z_t(s^{-1}) x) = \xi ( (sgs^{-1})^{-1}  x )\\
& = \la_{sgs^{-1}} \xi (x).
\end{align*}
We obtain that $\ga_t(s)(\la_g) = \la_{sgs^{-1}}.$
\end{proof}

Consider the C*-algebraic limit $\scrM_0:=\varinjlim_{t\in\fT} M_t$ that is isomorphic to the norm completion of $\scrC(\oscrA)\rtimes_{\alg} \Gd$ inside $B(L^2(\oscrA,m_\oscrA))$ by Theorem \ref{theo:MzeroNC}.
The family of actions $Z_t:\Ga\act(G_t,m_{G_t})$ is compatible with the inverse system of probability measures spaces $(G_t,m_{G_t}, t\in\fT)$ obtained from the contravariant functor $\Xi:\cF\to \Set$ introduced in Section \ref{sec:Commutative}.
Moreover, all maps involved are p.m.p.~and continuous.
This implies that there exists a unique action 
$Z:\Ga\act (\oscrA,m_\oscrA)$ such that if we project $\oscrA$ onto $G_t$, we will obtain the action $Z_t$.
Moreover, if we equip $\oscrA$ with its topology inherited from the inverse system of topological space, $\Ga$ will act by p.m.p.~homeomorphisms.
The action $Z$ is defined by the formula:
$$Z(s)(x)(d,d') := s(d) x(d,d') s(d')^{-1},$$
for $s\in \Ga:=\prod_\D G, x\in \oscrA$ and $(d,d')$ is a s.d.i.
The corresponding unitary representation is given by
$$W:\Ga\to \cU(L^2(\oscrA,m_\oscrA)), W(s) \xi(x):= \xi(Z(s)^{-1} x),$$
$s\in  \Ga, x\in \oscrA, \xi \in L^2(\oscrA,m_\oscrA)$, and the associated adjoint action by
$$\ga:\Ga\act B(L^2(\oscrA,m_\oscrA)), \ga(s)(a):= W(s) a W(s)^*,$$
$s\in \Ga, a \in B(L^2(\oscrA,m_\oscrA))$.\\[0.1cm]
The following proposition results from a routine computation and the formula for the Jones' action of Theorem \ref{theo:MzeroNC}.

\begin{proposition}\label{prop:gaugeCstar}
The C*-subalgebra $\scrM_0\subset B(L^2(\oscrA,m_\oscrA))$ is stable under the action $\ga:\Ga\act B(L^2(\oscrA,m_\oscrA)).$
We continue to denote by $\ga$ the restricted action that we call the gauge group action and which satisfies the formula:
$$\ga : \Ga\act \scrM_0, \ga(s)\left(\sum_{g\in \oplus_\D G} a_g u_g\right)\mapsto \sum_{g\in \oplus_\D G} a_g(Z(s)^{-1}\cdot ) u_{sgs^{-1}},$$
for $s\in \Ga, a_g\in\scrC(\oscrA)$ and $g\mapsto a_g$ with finite support.

Let $\al:V\act \scrM_0$ be the Jones' action.
Then, we have the compatibility condition:
$$\al(v) \ga(s) \al(v^{-1}) = \ga( s(v^{-1}\cdot) ),\text{ for any } s\in \Ga, v\in V.$$
\end{proposition}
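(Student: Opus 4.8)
The plan is to verify all three claims on the dense *-subalgebra $\scrC(\oscrA)\rtimes_\alg \Gd\subset\scrM_0$ exhibited in Theorem \ref{theo:MzeroNC}, whose generators are $bu_g$ with $b\in\scrC(\oscrA)$ and $g\in\oplus_\D G$, and then to extend by density together with the norm-continuity of the *-automorphisms $\ga(s)=\Ad(W(s))$. First I would compute $\ga(s)$ on the two kinds of generators separately, exactly mirroring Proposition \ref{prop:gaugeaction}. Using $W(s)^*=W(s^{-1})$ and the formulas $W(s)\xi(x)=\xi(Z(s)^{-1}x)$ and $\pi(bu_g)\xi(x)=b(x)\xi(g^{-1}x)$ of Theorem \ref{theo:MzeroNC}, a one-line change of variables gives $\ga(s)(b)=b\circ Z(s)^{-1}$, which still lies in $\scrC(\oscrA)$. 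For the convolution part one finds $\ga(s)(u_g)\xi(x)=\xi(Z(s)g^{-1}Z(s)^{-1}x)$, so the only substantive point is the pointwise identity $Z(s)g^{-1}Z(s)^{-1}=(sgs^{-1})^{-1}$ as transformations of $\oscrA$; this is checked on each s.d.i.~$(d,d')$ precisely as in Proposition \ref{prop:gaugeaction}, using $Z(s)x(d,d')=s(d)x(d,d')s(d')^{-1}$ and \eqref{equa:GA}, and yields $\ga(s)(u_g)=u_{sgs^{-1}}$, with $sgs^{-1}\in\oplus_\D G$ since conjugation preserves finite supports. Multiplicativity of $\ga(s)$ then delivers the stated formula $\ga(s)(bu_g)=b(Z(s)^{-1}\cdot)u_{sgs^{-1}}$ and, in particular, the stability of $\scrM_0$.

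For the compatibility condition I would pass to the implementing unitaries. Since $\al(v)=\Ad(U_v)$ and $\ga(s)=\Ad(W(s))$, it suffices to show the intertwining relation $U_v W(s) U_v^*=W(s(v^{-1}\cdot))$, as this gives $\al(v)\ga(s)\al(v^{-1})=\Ad(W(s(v^{-1}\cdot)))=\ga(s(v^{-1}\cdot))$ directly. Expanding both sides with $U_v\xi(x)=\xi(v^{-1}x)$ (where $v$ acts through the Jones' action $\al_\oscrA$) and $W(s)\xi(x)=\xi(Z(s)^{-1}x)$, both composition operators are genuinely unitary because all maps in sight are measure-preserving, so no Radon--Nikodym cocycle intervenes and the relation reduces to the single geometric identity $v\circ Z(s)\circ v^{-1}=Z(s(v^{-1}\cdot))$ on $\oscrA$, with $s(v^{-1}\cdot)$ denoting $d\mapsto s(v^{-1}d)$ for the action $V\act\D$. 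The advantage of this route is that the group-theoretic content is absorbed automatically; had I instead checked the identity on generators $bu_g$ using the first step and the Jones' action formula, the computation would split into this same geometric part and a purely group-theoretic part encoding that spatial transport intertwines pointwise conjugation, the latter being immediate from the generalized Bernoulli shift formula $(vg)(d)=g(v^{-1}d)$ of Section \ref{sec:Convolution}.

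The main obstacle is the geometric identity $v\circ Z(s)\circ v^{-1}=Z(s(v^{-1}\cdot))$, and within it the only genuine subtlety is that $\al_\oscrA(v)$ is defined piecewise: on an s.d.i.~$I$ on which $v^{-1}$ fails to be affine one must split $I$ and invoke the cocycle formula $\al_\oscrA(v)x(I)=x(v^{-1}I_1)\cdots x(v^{-1}I_k)$. I would circumvent this by fixing $v$ and choosing a tree $t\in\fT$ fine enough that both $v$ and $v^{-1}$ are affine on every interval of $\cI(t)$; on such intervals, and hence on all finer s.d.i., the computation is the clean affine one, where evaluating $v\circ Z(s)\circ v^{-1}$ at $(d,d')$ telescopes to $s(v^{-1}d)x(d,d')s(v^{-1}d')^{-1}=Z(s(v^{-1}\cdot))x(d,d')$. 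Since the projections $p_t$, $t\in\fT$, jointly separate the points of $\oscrA$ and form a cofinal family, agreement of the two transformations on all sufficiently fine levels propagates to all of $\oscrA$, completing the argument. Everything beyond this bookkeeping is the routine computation advertised before the statement.
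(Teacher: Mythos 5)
Your first half is correct and is essentially what the paper means by ``a routine computation'': you evaluate $\ga(s)=\Ad(W(s))$ on the generators $b\in\scrC(\oscrA)$ and $u_g$, $g\in\Gd$, of the dense *-subalgebra of Theorem \ref{theo:MzeroNC}, check the pointwise identity $Z(s)g^{-1}Z(s)^{-1}=(sgs^{-1})^{-1}$ interval by interval exactly as in Proposition \ref{prop:gaugeaction}, and extend by density; no subtlety arises there because neither $Z(s)$ nor the action \eqref{equa:GA} moves intervals. Your reduction of the compatibility condition to the unitary relation $U_vW(s)U_v^{*}=W(s(v^{-1}\cdot))$, equivalently to the geometric identity $v\circ Z(s)\circ v^{-1}=Z(s(v^{-1}\cdot))$ on $\oscrA$, is also legitimate, since $\al(v)=\Ad(U_v)$, $\ga(s)=\Ad(W(s))$ and $\scrC(\oscrA)$ separates the points of $\oscrA$.

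The gap is in the telescoping step, and for general $v\in V$ it cannot be closed. Evaluating $v\circ Z(s)\circ v^{-1}$ at a fine s.d.i.\ $(d,d')$ gives $s(l)\,x(d,d')\,s(r)^{-1}$, where $l$ and $r$ are the endpoints of the \emph{image interval} $v^{-1}\bigl((d,d')\bigr)$. Right-continuity yields $l=v^{-1}(d)$, but $r=\lim_{x\to d'^{-}}v^{-1}(x)$, and this equals $v^{-1}(d')$ only when $v^{-1}$ is continuous at $d'$ as a map of the torus --- which holds for every $d'$ precisely when $v\in T$. For $v\in V\setminus T$ the identity, and with it the stated compatibility formula, is false. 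Concretely, let $v\in V$ be affine with slope $1$ on the pieces $[0,1/2)\to[1/2,1)$, $[1/2,3/4)\to[1/4,1/2)$, $[3/4,1)\to[0,1/4)$. Then $(vZ(s)v^{-1}x)(1/4,1/2)=s(1/2)\,x(1/4,1/2)\,s(3/4)^{-1}$, whereas $Z(s(v^{-1}\cdot))x(1/4,1/2)=s(1/2)\,x(1/4,1/2)\,s(0)^{-1}$; moreover, the interval $(0,1/4)$ forces $\tilde s(1/4)=s(0)$ while $(1/4,1/2)$ forces $\tilde s(1/4)=s(1/2)$, so $vZ(s)v^{-1}$ is not of the form $Z(\tilde s)$ for \emph{any} $\tilde s\in\Ga$. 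Note that refining the tree does not help: the same computation occurs on all finer s.d.i.'s abutting the discontinuity points of $v^{-1}$. Thus your argument (and the proposition's second assertion, read literally) is valid only on Thompson's group $T$; this is consistent with the paper itself, which later invokes the equivariance only for $v\in T$, cf.\ Proposition \ref{prop:observables}(9). A second, minor and fixable, imprecision: requiring only that $v$ and $v^{-1}$ be affine on every interval of $\cI(t)$ does not ensure that images of s.d.i.'s are s.d.i.'s (the rotation by $1/4$ is affine on $[0,1/2)$ with image $[1/4,3/4)$); ``fine enough'' must mean that $t$ refines the appropriate tree of a tree-pair representative of $v$.
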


Now, we show that the gauge group action is basically the left regular action of the group $\Ga$.
To see this, we introduce the holonomy map $H$, see below, that is defined w.r.t.~the specific point $1$ on the torus in the sense of a boundary condition.\\[0.1cm]
Let us consider $x\in \oscrA$ and $d\in\D$.
There exists a tree $t$ with s.d.p.~$d_1=0<d_2<\cdots <d_n<d_{n+1}=1$ such that $d=d_j$ for a certain $1\leq j\leq n$, and we put 
$$H(x)(d) := x(d,d_{j+1} ) x(d_{j+1},d_{j+2})\cdots x(d_n,1).$$
This formula does not depend on the choice of the tree $t$ and defines a map 
$$H:\oscrA\to \prod_\D G.$$

\begin{proposition}\label{prop:holonomy}
The map $H$ is an homeomorphism from $\oscrA$ onto $\prod_\D G$ called the holonomy map.
Its inverse is defined by the formula:
$$H^{-1}(g)(d,d') = \begin{cases} g(d) g(d')^{-1} \text{ if } d'\neq 1 \\ g(d) \text{ otherwise } \end{cases} , \ \text{ for } g\in\prod_\D G \text{ and } (d,d') \text{ a } s.d.i.$$
Moreover, it is p.m.p.~when $\oscrA$ and $\prod_\D G$ are equipped with $m_\oscrA$ and the tensor product of measures $\ot_\D m_G$ respectively.

We transfer the gauge group action $Z:\Ga\act (\oscrA,m_\oscrA)$ to $\prod_\D G$ via the conjugation with the map $H$ giving us $Z':\Ga\act (\prod_\D G, \ot_\D m_G).$
Identifying the space $\prod_\D G$ with the gauge group $\Ga$, we have that $$Z'(s)g(d) = s(d) g(d) s(0)^{-1},\ \text{ for any } s\in \Ga, g\in \prod_\D G, d\in \D.$$
In particular, the associated unitary representation is strongly continuous.
\end{proposition}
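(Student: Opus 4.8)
The plan is to establish the four assertions — bijectivity together with the stated inverse, the homeomorphism property, measure-preservation, and the formula for $Z'$ with strong continuity — in turn, the measure-theoretic statement being the only one that requires genuine work; the others reduce to telescoping products and elementary topology.

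First I would check that the formula defining $H^{-1}$ indeed lands in $\oscrA$: given $g\in\prod_\D G$ and a s.d.i.~$(d,d')$ with midpoint $m$, one verifies the cocycle identity $H^{-1}(g)(d,d')=H^{-1}(g)(d,m)\,H^{-1}(g)(m,d')$ by splitting into the cases $d'\neq1$ and $d'=1$ (noting $m\neq1$ in both), where the inner factors $g(m)^{\pm1}$ cancel. The identities $H\circ H^{-1}=\id$ and $H^{-1}\circ H=\id$ are then telescoping computations: for a tree with s.d.p.~$\dn$ and $d=d_j$, the product $H(H^{-1}(g))(d_j)=\prod_{k=j}^{n}g(d_k)g(d_{k+1})^{-1}$ collapses to $g(d_j)$, the last factor being just $g(d_n)$ since $d_{n+1}=1$; conversely, using the multiplicativity of $x\in\oscrA$ along a common refinement one has $H(x)(d)=x(d,d')H(x)(d')$, whence $x(d,d')=H(x)(d)H(x)(d')^{-1}$ (and $x(d,1)=H(x)(d)$), which is exactly $H^{-1}(H(x))=x$.

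For the homeomorphism claim I would use that $H$ is well-defined independently of the refining tree, so each coordinate $H(x)(d)$ is a \emph{fixed} finite product of coordinates of $x$, hence continuous on $\oscrA\subset\prod_I G$; likewise each coordinate of $H^{-1}$ is $g(d)g(d')^{-1}$ or $g(d)$, continuous on $\prod_\D G$. Thus $H$ and $H^{-1}$ are continuous for the product topologies (alternatively, $\oscrA$ is compact and $\prod_\D G$ Hausdorff), so $H$ is a homeomorphism.

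The substantive step is measure-preservation, and this is where I expect the main difficulty. I would reduce to finite levels: for a tree $t$ with s.d.p.~$\dn$, the map $H$ descends to $h_t\colon G^{\cI(t)}\to G^{\D(t)}$, $(a_1,\dots,a_n)\mapsto(b_1,\dots,b_n)$ with $b_j=a_j a_{j+1}\cdots a_n$. The key lemma is that $h_t$ preserves the product Haar measure $m_G^{\ot n}$, which I would prove by induction on $n$ by factoring $h_t=\sigma\circ(\id_G\times h_{t'})$, where $t'$ is the $(n{-}1)$-level version (so $\id_G\times h_{t'}$ preserves $m_G^{\ot n}$ by the inductive hypothesis and Fubini) and $\sigma\colon(a_1,b_2,\dots,b_n)\mapsto(a_1 b_2,b_2,\dots,b_n)$ is a shear preserving $m_G^{\ot n}$ because $a_1\mapsto a_1 b_2$ is a right translation and Haar measure is right-invariant. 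Since $m_\oscrA$ and $\ot_\D m_G$ are the unique Radon probability measures with these finite-dimensional marginals (through $p_t$ and the coordinate projections, respectively) and $H$ intertwines the marginals via the measure-preserving $h_t$, matching on the generating family of cylinder sets gives $H_*m_\oscrA=\ot_\D m_G$.

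Finally, for the transferred action I would compute $Z'(s)=H\circ Z(s)\circ H^{-1}$ directly: with $x=H^{-1}(g)$ one has $Z(s)(x)(d_k,d_{k+1})=s(d_k)x(d_k,d_{k+1})s(d_{k+1})^{-1}$, and the product defining $H(Z(s)(x))(d_j)$ telescopes to $s(d_j)H(x)(d_j)s(d_{n+1})^{-1}=s(d)g(d)s(0)^{-1}$, using the torus periodicity $s(d_{n+1})=s(1)=s(0)$. The induced unitary therefore satisfies $W'(s)\xi(g)=\xi(s^{-1}g\,c_{s(0)})$, where $c_{s(0)}\in\Ga$ denotes the constant field $d\mapsto s(0)$, so it factors as $W'(s)=\ell_s\, r_{c_{s(0)}}$ with $\ell,r$ the left and right regular representations of the compact group $\Ga$. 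Strong continuity of $W'$ then follows from strong continuity of the regular representations of $\Ga$, continuity of the coordinate map $s\mapsto s(0)$ and of the diagonal $h\mapsto c_h$, and the standard fact that multiplication of uniformly bounded strongly continuous unitary-valued maps is strongly continuous; conjugating by the unitary induced by the p.m.p.~map $H$ transports this to the representation $W$ on $L^2(\oscrA,m_\oscrA)$.
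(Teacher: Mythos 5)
Your proof is correct. Note that the paper itself offers no argument here --- it states that the proposition ``is easy to check and is left to the reader'' --- so there is nothing to compare against; your write-up simply supplies the routine verification the authors had in mind. The two places where real content is needed are handled properly: for measure preservation, the reduction to the finite-level maps $h_t(a_1,\dots,a_n)=(a_ja_{j+1}\cdots a_n)_j$, the shear-plus-induction argument (valid since compact groups are unimodular, so right translations preserve Haar measure), and the identification of both measures through their marginals on cylinder sets is exactly the right mechanism, since $m_\oscrA$ is \emph{defined} by the property that the projections $p_t$ are p.m.p.; and for strong continuity, the factorization $W'(s)=\ell_s r_{c_{s(0)}}$ into left and right regular representations of the compact group $\Ga$ (whose Haar measure is precisely $\ot_\D m_G$) composed with the continuous maps $s\mapsto s(0)\mapsto c_{s(0)}$ is a clean and complete argument.
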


This last proposition is easy to check and is left to the reader.

\subsection{Local algebras of fields}\label{sec:fieldalg}
Consider a connected open subset $O$ of the torus.
We intend to define the algebra of (time-zero) fields in this region of space.
Let $t\in\fT$ be a tree and let $\cI(t)$ be the associated collection of {\it open} s.d.i.
Recall that 
$$M_t \simeq \ot_{I\in \cI(t)} M \simeq \ot_{I\in \cI(t)} (\scrC(G)\rtimes G_d ) \simeq \scrC(G_t)\rtimes G_{t,d}.$$
We write $\cI(t,O)$ for the subset of s.d.i.~$I\in \cI(t)$ with $I\subset O$ and define the C*-subalgebra $M_t(O) = \ot_{I\in \cI(t,O)} M \ot \ot_{J\notin \cI(t,O)} \C$.
Moreover, we define the coordinate projection 
$$p_{t,O}: G_t \to G_{t,O},\ (g_I)_{I\in \cI(t)} \mapsto (g_I)_{I\in \cI(t,O)}$$
and the group embedding 
$$\iota_{t,O}: G_{t,O} \to G_{t},\ (g_I)_{I\in \cI(t,O)} \mapsto (\tilde g_I)_{I\in \cI(t)} \text{ such that } \tilde g_I = \begin{cases} g_I \text{ if } I \in \cI(t,O)\\ e \text{ otherwise } \end{cases}.$$
The inclusion $M_t(O)\subset M_t$ corresponds to the embedding 
\begin{align*}
j_{t,O} & :\scrC(G_{t,O})\rtimes G_{t,O,d} \to \scrC(G_{t})\rtimes G_{t,d},\\
& \sum_{g\in G_{t,O}} a_g \la_g\mapsto \sum_{g\in G_{t,O}} (a_g\circ p_{t,O} )\la_{\iota_{t,O}(g)},
\end{align*}
with $a_g\in\scrC(G_{t,O})$ and $g\mapsto a_g$ finitely supported.
Recall that $\oscrA$ is the compact space of maps $x$ from the set of s.d.i.~to $G$ satisfying $x(I) = x(I_1)x(I_2)$ whenever $I_1,I_2$ are the first and second half of the s.d.i $I$ equipped with the product subspace topology.
Therefore, we define the space 
$$\oscrA(O):= \oscrA \cap \prod_{I s.d.i.~: I\subset O} G$$
and its associated coordinate projection $p_O:\oscrA\to \oscrA(O)$. This space can be interpreted as inverse limit of the following system:
consider the collection of spaces $(G_{t,O},\ t\in\fT)$ and the projections $p_t^{ft}(O): G_{ft,O}\to G_{t,O}, x\mapsto p_{t,O}\circ p_t^{ft}\circ\iota_{ft,O}(x)$, where $p_t^{ft}:G_{ft}\to G_t$ is the projection constructed via the group multiplication, see Section \ref{sec:Commutative}.\\[0.1cm]
We denote by $\D(O)$ the set of $d\in \D$ such that $d\in O$ or $d$ is the left boundary point of the connected open set $O.$
We define the inclusion 
$$\iota_O:\oplus_{\D(O)} G\to \oplus_\D G, \iota_O(g)(d) = \begin{cases} g(d) \text{ if } d \in \D(O)\\ e \text{ otherwise } \end{cases}.$$
We know that the gauge group $\Ga:=\prod_\D G$ acts on $\scrM_0$ via the action $\ga$ defined in Section \ref{sec:gauge}.
We define the localized gauge group $\Ga(O):=\prod_{\D\cap \overline O} G$, where the product runs over all dyadic rationals that are in the \textit{closure} of $O$.
We denote by $q_O:\Ga\to \Ga(O)$ the associated coordinate projection, and we define the action 
$$Z_O:\Ga(O)\act \oscrA(O), Z_O(s)(x)(d,d'):= s(d) x(d,d') s(d')^{-1},$$ 
for $s\in\Ga(O), x\in\oscrA(O)$, $(d,d')$ a s.d.i.~contained inside $O$.
Here, we consider the intersection of $\D$ with the closure of $O$ for the definition of $\Ga(O)$ to include both boundary points of $O$.
This is necessary because the formula for $Z_O$ above requires to use the values $s(d)$ and $s(d')^{-1}$ also when $d$ and $d'$ are the left and right boundary of $O$ respectively.\\[0.25cm]

We deduce the following properties.

\begin{proposition}\label{prop:observables}
Consider some connected open subsets $O,O_1,O_2$ of the torus and a tree $t\in \fT$.
\begin{enumerate}
\item If $O_1\subset O_2$, then $M_t(O_1)\subset M_t(O_2).$
\item If $O_1\cap O_2 = \emptyset$, then $M_t(O_1)$ and $M_t(O_2)$ mutually commute.
\item If $t\leq s$, then $M_t(O)$ is a unital C*-subalgebra of $M_s(O)$ when $M_t$ is identified with a subalgebra of $M_s$ via the functor $\Psi.$
\item The norm closure of the union of the C*-algebras $(M_s(O),\ s\in \fT)$ is a unital C*-subalgebra $\scrM_0(O)\subset \scrM_0.$
The algebraic crossed-product $\scrC(\oscrA(O))\rtimes_{\alg} \oplus_{\D(O)} G$ embeds inside $\scrM_0(O)$ and is a dense *-subalgebra.
\item The inclusion $\scrM_0(O)\subset \scrM_0$ restricts to the following embedding
\begin{align*}
j_O & : \scrC(\oscrA(O))\rtimes_{\alg} \oplus_{\D(O)} G\to \scrC(\oscrA)\rtimes_{\alg} \oplus_\D G\\
& \sum_{g\in \oplus_{\D(O)} G} a_g u_g\mapsto \sum_{g\in \oplus_\D G} (a_g\circ p_O) u_{\iota_O(g)},
\end{align*}
where $g\mapsto a_g\in\scrC(\oscrA(O))$ is finitely supported.
\item If $O_1\subset O_2$, then $\scrM_{0}(O_1)\subset\scrM_{0}(O_2)$.
\item If $O_1\cap O_2 = \emptyset$, then $\scrM_{0}(O_1)$ and $\scrM_{0}(O_2)$ mutually commute.
\item If $v$ is an element of Thompson's group $T$, then $\al(v)(\scrM_0(O)) = \scrM_0(vO)$ where $\al$ is the Jones' action.
\item Let $\ga_O:\Ga(O) \act \scrM_0(O)$ be the action satisfying
$$\ga_O(s)\left(\sum_{g\in \oplus_{\D(O)} G} a_g u_g \right) :=  \sum_{g\in \oplus_{\D( O)} G} a_g(Z_O(s ^{-1} )\cdot ) u_{sgs^{-1}}.$$
This corresponds to the restriction of the gauge group action on $\scrM_O$ that we call the localized gauge group action.
Moreover, the family of localized gauge group actions is equivariant w.r.t.~the Jones' action, i.e.~$\Ad(\alpha(v)) \circ \ga_O = \ga_{vO}$ for $v\in T$.
\end{enumerate}
\end{proposition}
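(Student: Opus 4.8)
The plan is to establish the finite-level claims (1)--(3) first, pass to the inductive limit for (4)--(7), and finally treat the two equivariance statements (8) and (9) using the explicit formulas of Theorem \ref{theo:MzeroNC} and Proposition \ref{prop:gaugeCstar}. For (1) I would observe that $O_1\subset O_2$ forces $\cI(t,O_1)\subset\cI(t,O_2)$, so the $M$-factors appearing in $M_t(O_1)$ form a subset of those in $M_t(O_2)$; filling the remaining slots with scalars gives the claimed inclusion. For (2), disjointness of $O_1,O_2$ implies that no (nonempty) s.d.i.~can lie in both, whence $\cI(t,O_1)\cap\cI(t,O_2)=\emptyset$, and the two subalgebras occupy disjoint tensor legs and therefore commute.

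The only genuinely nontrivial finite-level point is (3). Here I would invoke Lemma \ref{lem:R}, which gives $R(a)=a\circ\mu_G$ on $\scrC(G)$ and $R(b)=b\otimes\id$ on $\overline{\C G}$, together with the elementary but crucial geometric fact that subdividing an s.d.i.~$I\subset O$ produces two halves that are again contained in $O$, while subdividing an $I\not\subset O$ leaves the corresponding tensor factor scalar because $R$ is unital. Consequently $\iota_t^s$ maps $M_t(O)$ unitally into $M_s(O)$, which is precisely the renormalization-compatibility of the localization.

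Items (4)--(7) then follow formally. Claims (1) and (3) show that $(M_s(O))_{s\in\fT}$ is a directed system of unital C*-subalgebras of $\scrM_0$, whose norm closure defines $\scrM_0(O)$; the finite-level identifications $M_t(O)\simeq\scrC(G_{t,O})\rtimes G_{t,O,d}$ and the maps $j_{t,O}$ assemble into the dense embedding of $\scrC(\oscrA(O))\rtimes_{\alg}\oplus_{\D(O)}G$, proving (4) and yielding the explicit formula (5) for $j_O$ by tracking the coordinate projection $p_O$ and the group inclusion $\iota_O$ through the identifications of Theorem \ref{theo:MzeroNC}. Claims (6) and (7) are obtained from (1) and (2) by passing to limits, using density together with norm-continuity of the multiplication; for (7) one checks that disjointness of $O_1,O_2$ keeps the supports in $\oplus_{\D(O_i)}G$ and the coordinate-dependences in $\oscrA(O_i)$ mutually disjoint, so that generators of $\scrM_0(O_1)$ and $\scrM_0(O_2)$ commute even across the function/group split (the mixed commutator $[u_{g_1},b_2]$ vanishing because a $g\in\oplus_{\D(O_1)}G$ acts trivially on the $O_2$-coordinates).

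The heart of the matter is (8), and the essential role of $T$ over $V$ enters here: for $v\in T$ the map $v$ is a homeomorphism of the torus, so $vO$ is again a connected open set and the statement is meaningful. I would use the formula $\al(v)(bu_g)=b(v^{-1}\cdot)u_{vg}$ from Theorem \ref{theo:MzeroNC}: a function $b$ depending only on the $O$-coordinates is sent to $b(v^{-1}\cdot)$, which depends only on the $vO$-coordinates, while $u_g$ with $g$ supported in $\D(O)$ is sent to $u_{vg}$ with $vg$ supported in $v\D(O)$. The main obstacle is the careful verification that $v\D(O)=\D(vO)$ as subsets of $\D$, i.e.~that $v$ carries the interior dyadic points and the left boundary point of $O$ exactly onto those of $vO$; this is where the homeomorphism property of $T$-elements, as opposed to the merely piecewise-defined $V$-elements, is indispensable. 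Granting this, $\al(v)(\scrM_0(O))\subset\scrM_0(vO)$, and applying the same argument to $v^{-1}$ yields equality. Finally, (9) is a direct computation: the formula for $\ga_O$ is the restriction to $O$-supported generators of the gauge action of Proposition \ref{prop:gaugeCstar}, and the equivariance $\Ad(\al(v))\circ\ga_O=\ga_{vO}$ is read off from the compatibility identity $\al(v)\ga(s)\al(v^{-1})=\ga(s(v^{-1}\cdot))$ of that proposition combined with (8).
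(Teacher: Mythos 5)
Your proposal is correct and follows essentially the same route as the paper's (very terse) proof: finite-level tensor-leg arguments for (1)--(3), the crossed-product identification underlying Theorem \ref{theo:MzeroNC} for (4)--(7), and the explicit formulas of Theorem \ref{theo:MzeroNC} and Proposition \ref{prop:gaugeCstar} for (8)--(9). Where the paper writes ``obvious'' or ``easy consequences,'' you supply exactly the right missing details --- notably the observation for (3) that halves of an s.d.i.~contained in $O$ stay in $O$ while unitality of $R$ handles the remaining factors, and the verification $v\D(O)=\D(vO)$ for $v\in T$ in (8).
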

\begin{proof}
(1) and (2) are obvious from the definition of $M_t(O)$ in term of tensor products over the set $\cI(t).$

(3) comes from the fact that the directed system $(M_t,,\iota_t^s , t\leq s \in\fT)$ has unital *-morphism embeddings for maps.

The first part of (4) is obvious since by (3) we have that $(M_t(O), t\in \fT)$ is nested and $M_t(O)\subset M_t$ by definition.

The second parts of (4) and (5) come from the identification of $M_t\simeq\scrC(G_t)\rtimes_{\dis} G_t$ and its embedding inside $\scrM_0$ that is identified with the completion of $\scrC(\oscrA)\rtimes_{\alg} \oplus_\D G$.

(6) and (7) are easy consequences of (5).

(8) and (9) are obvious consequences of the formulae given in Theorem \ref{theo:MzeroNC} and Proposition \ref{prop:gaugeCstar}.
Note that we restrict to Thompson's group $T$ in order to have that $vO$ is still connected.
\end{proof}

\begin{remark}In subsequent sections, we introduce a state $\varpi$ on $\scrM_0$ and perform the GNS construction giving a von Neumann algebra equipped with a cyclic vector $(\scrM,\Omega_{\varpi})$.
Then, we can define the localized von Neumann algebra $\scrM(O)$ as the weak completion of $\scrM_0(O)$ inside $\scrM$ resulting in local, isotonous net of von Neumann algebras.
If the action of Thompson's group $T$ extends to $\scrM$, we obtain a statement similar to the previous proposition, where the space $O$ is send to $vO$ for $v\in T$. Moreover, we have a similar statement for the action of the localized gauge groups.
At this point, we will not develop further the von Neumann algebra version of the net of field algebras because statements easily follow from the C*-algebraic case.\end{remark}

\section{Gauge theory: construction of a von Neumann algebra with a state}\label{sec:VNA}

\subsection{General construction}
We now want to find a suitable state on $\scrM_0$ and consider its associated von Neumann algebra obtained via the GNS construction.
We start by considering a state on the C*-algebra $M:=\scrC(G)\rtimes_{\dis} G$.

\subsubsection{Construction of a state on $M$.}
Let $\wh G$ be the unitary dual of $G$ that is the set of equivalence classes of irreducible unitary representations of $G$.
Recall that since $G$ is compact and separable, this set is countable and all the representations are finite dimensional.
We call them $(\pi,H_\pi)$ with $d_\pi$ being the dimension of $H_\pi$. We write $\chi_\pi(g):= \tr_{H_\pi}(\pi(g)), g\in G$ for the character associated to $\pi$ where $\tr_{H_{\pi}}$ is the non-normalized trace.

\begin{remark}
\label{rem:heatkernel}
As stated in the introduction, the choice of the state is driven by the heat kernel of a compact Lie group $G$. The bi-invariant Laplacian $\Delta_{G}$ of $G$ determines the strong-coupling limit of the Kogut-Susskind Hamiltonian of lattice gauge theory. In fact, it corresponds to the full Hamiltonian in the 1+1-dimensional setting. Now, the heat kernel is the fundamental solution $\rho_{\beta}$ (at the identity of $G$) to the heat equation,
$$\tfrac{\textup{d}}{\textup{d}\beta}\rho_{\beta} = \tfrac{1}{2}\Delta_{G}\rho_{\beta},$$
and determines a family of associated Gibbs or KMS states on $M$ via the trace-class operators $\lambda(\rho_{\beta})$. As is well-known, $\rho_{\beta}$ has an expansion into characters of $G$,
$$\rho_{\beta} = \sum_{\pi\in\wh G}d_{\pi}e^{-\frac{\beta}{2}c_{\pi}}\chi_{\pi},$$
where $c_{\pi}$ is the negative of the eigenvalue of $\Delta_{G}$ w.r.t.~$(\pi, H_{\pi})$. Thus, $\rho_{\beta}$ determines a probability measure on $\wh G$ via $h_{0,\beta}(\pi) = \rho_{\beta}(e)^{-1}d^{2}_{\pi}e^{-\frac{\beta}{2}c_{\pi}}$.
\end{remark}

Following the preceding remark, we choose a strictly positive probability measure $m\in \Prob(\wh G)$ on $\wh G$ and set $h_0:\wh G\to [0,1]$ such that $h_0(\pi):=m(\{ \pi \}).$
Note that such a $m$ exists if and only if $G$ is separable.
Set $$h:=\sum_{\pi \in \wh G} d_\pi^{-1} h_0(\pi) \chi_\pi$$ that is continuous and integrable on $(G,m_G).$
Consider the functional $$\omega(b):=\Tr(b\la(h)),b\in B(L^2(G)).$$

It is important to observe that each character $\chi$ defines a central projection $\la(\chi)$ of the group von Neumann algebra $LG$ such that the family $\{\la(\chi)\}_{\chi}$ is a partition of the identity.
In particular, we have that $\la(h)$ is in the center of $LG.$

{\bf Notation:} We have defined a probability measure $m$ on the dual group $\wh G$ using a function $h$. Recall that $m_G$ is the Haar measure on $G$ with total mass one which should not be confused with $m$. We write $L^p(G)$ for the space $L^p(G,m_G)$ w.r.t.~the Haar measure $m_G$.

\begin{lemma}\label{lem:omega}
The functional $\omega$ is a faithful normal state on $B(L^2(G))$ and satisfies
$$\omega \left( \sum_{g\in G} b_g \la_g \right) = \sum_{g\in G} \left( \int_G b_g\ dm_G \right) h(g^{-1}),$$
where $b_g\in L^\infty(G)$ and is equal to zero for all but finitely many $g\in G.$ 
\end{lemma}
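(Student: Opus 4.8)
The plan is to realize $\omega$ as the functional obtained by tracing against the operator $\lambda(h)$, to show that this operator is a density matrix of full support, and then to reduce the explicit formula to a single integral-kernel computation on the diagonal.

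First I would analyse $\lambda(h)=\int_G h(g)\lambda_g\,dm_G(g)$ by Peter--Weyl. The character orthogonality relation $\chi_\pi*\chi_{\pi'}=\delta_{\pi,\pi'}d_\pi^{-1}\chi_\pi$ together with $\chi_\pi^*=\chi_\pi$ shows that $e_\pi:=d_\pi\lambda(\chi_\pi)$ are the minimal central projections of $LG$ onto the isotypic subspaces (this is the partition of the identity noted before the lemma), with $\sum_\pi e_\pi=\id$ and $\Tr(e_\pi)=d_\pi^2$. Substituting the expansion $h=\sum_\pi d_\pi^{-1}h_0(\pi)\chi_\pi$ gives $\lambda(h)=\sum_\pi d_\pi^{-2}h_0(\pi)e_\pi$, a positive trace-class operator with $\Tr(\lambda(h))=\sum_\pi h_0(\pi)=\sum_\pi m(\{\pi\})=1$. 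Since $m$ is strictly positive, every eigenvalue $d_\pi^{-2}h_0(\pi)$ is strictly positive and $\sum_\pi e_\pi=\id$, so $\lambda(h)$ has support projection $\id$. Hence $\omega(b)=\Tr(b\lambda(h))$ is normal (ultraweakly continuous), positive, unital ($\omega(\id)=\Tr(\lambda(h))=1$), and faithful, which proves the first assertion.

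For the formula I would work with integral kernels on $L^2(G,m_G)$. A change of variables shows that $\lambda(h)$ is the integral operator with kernel $(x,y)\mapsto h(xy^{-1})$, and more generally $\lambda_g\lambda(h)=\lambda(h_g)$ with $h_g(t):=h(g^{-1}t)$, whose kernel is $(x,y)\mapsto h(g^{-1}xy^{-1})$. Composing on the left with multiplication by $b_g\in L^\infty(G)$ yields the kernel $(x,y)\mapsto b_g(x)h(g^{-1}xy^{-1})$, whose diagonal value at $x=y$ is $b_g(x)h(g^{-1})$. Integrating the diagonal gives $\Tr(b_g\lambda_g\lambda(h))=h(g^{-1})\int_G b_g\,dm_G$, and summing over the finitely many nonzero terms produces the stated identity by linearity.

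The step I expect to be the main obstacle is justifying that the trace equals the integral of the diagonal kernel, since a generic trace-class operator need not have a diagonal-well-defined kernel and $b_g$ is only bounded measurable. I would address this by using that $\lambda(h)$ is, through its isotypic decomposition, a norm-convergent sum of finite-rank operators with continuous matrix-coefficient kernels, so each product $b_g\lambda_g\lambda(h)$ is genuinely trace-class and the diagonal formula is valid; alternatively, since $b\mapsto\Tr(b\lambda_g\lambda(h))$ is a normal functional on $L^\infty(G)$, it suffices to verify the identity for $b\in\scrC(G)$, where the kernel is continuous and a Mercer-type trace formula applies, and then extend by weak density of $\scrC(G)$ in $L^\infty(G)$.
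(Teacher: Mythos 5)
Your proposal is correct, and while both arguments ultimately rest on Peter--Weyl, the mechanism you use for the trace formula is genuinely different from the paper's. The paper never writes integral kernels: it expands everything in the orthonormal basis $\{\sqrt{d_\pi}\,\pi_{n,m}\}$ of matrix coefficients, proves $\Tr(\lambda(\chi_\pi))=d_\pi$ and the key identity $\Tr(b\lambda(a))=\bigl(\int_G b\,dm_G\bigr)\Tr(\lambda(a))$ for arbitrary $b\in L^\infty(G)$ and central $a=\sum_\pi a_\pi\chi_\pi$ (using that $\Tr(\pi_{n,m}\lambda(a))=0$ unless $\pi$ is trivial), and then combines this with $\lambda_g\lambda(h)=\lambda(h^g)$ and $\Tr(\lambda_g\lambda(h))=h(g^{-1})$. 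You instead read off the formula from the diagonal of the kernel $(x,y)\mapsto b_g(x)h(g^{-1}xy^{-1})$ of $b_g\lambda_g\lambda(h)$. What the paper's route buys is that it never has to confront the fact that a trace-class operator need not satisfy the naive ``trace equals integral of the diagonal kernel'' identity --- everything is an orthogonality computation; what your route buys is that the formula becomes transparent, at the cost of exactly that justification, which you correctly single out as the main obstacle and resolve adequately. Two small refinements to your justification: the isotypic expansion $\lambda(h)=\sum_\pi d_\pi^{-2}h_0(\pi)e_\pi$ converges not merely in operator norm but in trace norm (the coefficients are positive and $\sum_\pi d_\pi^{-2}h_0(\pi)\Tr(e_\pi)=1$), and trace-norm convergence is what you actually need to interchange $\Tr$ with the sum after multiplying by the bounded operator $b_g\lambda_g$; and ``Mercer-type'' should be understood via the factorization $b_g\lambda_g\lambda(h)=\bigl(b_g\lambda_g\lambda(h)^{1/2}\bigr)\lambda(h)^{1/2}$ into Hilbert--Schmidt operators (note $\lambda(h)^{1/2}=\lambda(k)$ with $k=\sum_\pi h_0(\pi)^{1/2}\chi_\pi\in L^2(G)$), since the kernel at hand is not positive semi-definite and Mercer's theorem proper does not apply. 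Your first paragraph (normality, unitality, faithfulness via the strictly positive eigenvalues $d_\pi^{-2}h_0(\pi)$ and $\sum_\pi e_\pi=\id$) coincides with the paper's argument, and in fact states it slightly more carefully: the projections are $d_\pi\lambda(\chi_\pi)$, not $\lambda(\chi_\pi)$ as the paper's preceding remark loosely asserts.
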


\begin{proof}
Consider the function $\pi_{n,m}(g) := \langle \pi(g) \delta_m^\pi , \delta_n^\pi\rangle, g\in G$ where $\pi\in\wh G$ and $(\delta_k^\pi,\ 1\leq k\leq d_\pi)$ is an orthonormal basis of $H_\pi.$
Peter-Weyl theorem states that the set $\{ \sqrt{ d_\pi}\ \pi_{n,m} ,\ \pi\in \wh G, 1\leq n,m\leq d_\pi\}$ is an orthonormal basis of $L^2(G)$ and observe that $$\chi_\pi * \pi'_{n,m} = \begin{cases} \pi_{n,n} \text{ if } \pi=\pi' \text{ and } n=m\\ 0 \text{ otherwise } \end{cases} .$$
Consider some scalars $a_\pi, \pi\in\wh G$ satisfying that $\sum_{\pi\in\wh G} | a_\pi | d_\pi <\infty$.
This implies that the series of functions $\sum_{\pi\in\wh G} a_\pi \chi_\pi$ converges uniformly on $G$ to a continuous function $a=\sum_{\pi\in\wh G} a_\pi \chi_\pi$ that we identify with a convolution operator of $LG.$
The formula of above implies that $\Tr(\la(\chi_\pi)) = d_\pi$ for any $\pi\in\wh G$ and thus $\Tr(\la(a)) = a(e) = \sum_{\pi\in\wh G} a_\pi d_\pi$.
Moreover, if $b=\pi_{n,m}$ and is identified with the associated operator of pointwise multiplication $b\in L^\infty(G)$, then $\Tr(b \la(a)) =0$ unless $\pi$ is the trivial representation for any $a$ as above.
This implies that for any $b\in L^\infty(G)$ we have that 
$$\Tr(b \la(a)) = \langle b, 1\rangle \Tr(\la(a)) = \left( \int_G b \ dm_G\right) \Tr(\la(a)).$$

Observe that $h(e) = \sum_{\pi\in\wh G} d_\pi^{-1} h_0(\pi) d_\pi = \sum_{\pi\in\wh G} h_0(\pi) = 1$. 
Moreover, $\la_g\la(h) = \la(h^g)$ where $h^g(x) = h(g^{-1} x)$ and thus $\Tr(\la_g \la(h)) = h^g(e) = h(g^{-1})$.
This implies the formula of the lemma.

The set of characters forms a set of orthogonal projections inside $LG$ with sum equal to the identity.
This implies that $\la(h)$ is a positive operator with strictly positive spectrum (since $h_0(\pi)\neq 0$ for any $\pi\in\wh G$) implying that $\omega:=\Tr(\cdot \la(h))$ is faithful.
\end{proof}

We now define our setting and our family of states.
\subsubsection{A coherent family of states}
We consider a family of measures given by any map $m:\D\to\Prob(\wh G), d\mapsto m_d$ such that $m_d$ is strictly positive for any $d\in\D.$
Define $h_d:=\sum_{\pi\in\wh G} \tfrac{h_{d,0}(\pi)}{d_\pi} \chi_\pi$ for $d\in\D$ where $h_{d,0}$ is the unique positive map satisfying $m_d(A) = \sum_{\pi\in A} h_{d,0}(\pi)$ for any $A\subset\wh G.$
The element $h_d$ belongs to the center of $LG$ and defines a normal faithful state 
$$\omega_d:b\in B(L^2(G))\mapsto \Tr(b h_d)$$
that we restrict to $M$.

Consider a tree $t\in\fT$ with $n$ leaves and associated s.d.p.~$d_1=0<d_2<\cdots<d_n<d_{n+1}=1.$
We have $M_t=\scrC(G_t) \rtimes_{\dis} G_t $ living inside $B(L^2(G_t))$.
We identify $B(L^2(G_t))$ with the von Neumann $n$th tensor power of $B(L^2(G))$ denoted $B(L^2(G))^{\ovt n}.$
Define 
\begin{itemize}
\item the element $h_t:=h_{d_1}\ot\cdots\ot h_{d_n}$ that is in the center of $LG_t$;
\item the associated probability measure $m_t:=m_{d_1}\ot\cdots\ot m_{d_n}$ on $G_t$ and 
\item the state $\omega_t:M_t\to \C,\ \omega_t(b) = \Tr(b h_t).$
\end{itemize}

\begin{proposition}\label{prop:NCstate}
Fix a tree $t$ with associated s.d.p.~$d_1=0<d_2<\cdots<d_n<d_{n+1}=1$ and identify $M_t$ with the $n$th tensor power of $M$.
The state $\omega_t$ of $M_t$ is equal to the tensor product of states $\omega_{d_1}\ot\cdots\ot\omega_{d_n}$ that is the restriction of a normal faithful state.
It satisfies the equality
\begin{equation}\label{equa:stateNC}\omega_t\left( \sum_{g\in G_t} a_g \la_g \right) = \sum_{g\in G_t} \left( \int_{G_t} a_g \ d m_t \right) \prod_{j=1}^n h_{d_j}(g_j^{-1}),\end{equation}
where $g\mapsto a_g\in \scrC(G_t)$ has finite support. 
Moreover, the embedding $\Psi(f):M_t\to M_{ft}$ is state-preserving (i.e.~$\omega_{ft} \circ \Psi(f) = \omega_{t}$) for any forest $f$ with $n$ roots.
\end{proposition}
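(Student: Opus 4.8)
The plan is to establish the three assertions in turn---tensor factorization of $\omega_t$, the explicit formula \eqref{equa:stateNC}, and state-preservation under the connecting maps---reducing each to the single-site statements already available in Lemma \ref{lem:omega} and Lemma \ref{lem:R}.

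First I would prove $\omega_t = \omega_{d_1}\ot\cdots\ot\omega_{d_n}$. Under the identification $B(L^2(G_t))\simeq B(L^2(G))^{\ovt n}$ the non-normalized trace factorizes as $\Tr_{G_t} = \Tr_G\ovt\cdots\ovt\Tr_G$, and $h_t = h_{d_1}\ot\cdots\ot h_{d_n}$ by definition, so on elementary tensors $\omega_t(b_1\ot\cdots\ot b_n) = \prod_{j=1}^n\Tr(b_j h_{d_j}) = \prod_{j=1}^n\omega_{d_j}(b_j)$; normality and linearity extend this to all of $B(L^2(G_t))$. Since each $\omega_{d_j}$ is normal and faithful (Lemma \ref{lem:omega} and the discussion preceding the proposition) and $B(L^2(G))$ is a factor, the tensor-product state is again normal and faithful, and $\omega_t$ is its restriction to $M_t$. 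The formula \eqref{equa:stateNC} then follows by applying the single-site identity of Lemma \ref{lem:omega} in each tensor leg: writing a generic element as $\sum_{g\in G_t} a_g\la_g$ with $\la_g=\la_{g_1}\ot\cdots\ot\la_{g_n}$, the factorization of $\omega_t$ together with Lemma \ref{lem:omega} yields the Haar integral against $\prod_{j=1}^n h_{d_j}(g_j^{-1})$ (the integration being against the Haar measure $m_{G_t}=m_G\ot\cdots\ot m_G$ on $G_t$).

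For state-preservation I would use that every forest is a composition of the elementary forests $f_{j,n}=I^{\ot j-1}\ot Y\ot I^{n-j}$ and that $\Psi$ is a functor, so $\omega_{ft}\circ\Psi(f)=\omega_t$ for all $f$ follows by induction on the number of carets once it is established for a single one. Inserting $Y$ at the $j$-th leaf of a tree with left-endpoints $d_1<\cdots<d_n$ splits the interval $(d_j,d_{j+1})$ into its two halves, whose left-endpoints are $d_j$ and the midpoint $m_j$; thus $h_{f_{j,n}s}$ agrees with $h_s$ on all legs except that the $j$-th factor $h_{d_j}$ is replaced by $h_{d_j}\ot h_{m_j}$, while $\Psi(f_{j,n})=\id^{\ot j-1}\ot R\ot\id^{\ot n-j}$ acts by $R$ only on that leg. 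Since both states are tensor products, the claim reduces to the single-site identity
$$(\omega_{d_j}\ot\omega_{m_j})\circ R = \omega_{d_j}\quad\text{on }M.$$

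The single-site computation is the heart of the matter, and I would carry it out directly from Lemma \ref{lem:R}: for $b=\sum_g a_g\la_g$ one has $R(b)=\sum_g (a_g\circ\mu_G)\la_{(g,e)}$, so only the terms with trivial group element in the second leg survive. Evaluating with \eqref{equa:stateNC} for $n=2$ gives $\sum_g\big(\int_{G\times G} a_g\circ\mu_G\,d(m_G\ot m_G)\big)\,h_{d_j}(g^{-1})\,h_{m_j}(e)$; here $h_{m_j}(e)=1$ because $m_{m_j}$ is a probability measure (as in the proof of Lemma \ref{lem:omega}), and the left-invariance of Haar measure collapses $\int_{G\times G} a_g(g_1g_2)\,dm_G(g_1)\,dm_G(g_2)=\int_G a_g\,dm_G$. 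Comparing with the formula of Lemma \ref{lem:omega} for $\omega_{d_j}$ reproduces exactly $\omega_{d_j}(b)$. I expect the main obstacle to be bookkeeping the identifications (leaves $\leftrightarrow$ intervals $\leftrightarrow$ left-endpoints, and the tensor leg on which $R$ acts) rather than any analytic difficulty; the conceptually important---and perhaps initially surprising---point is that the new measure $m_{m_j}$ enters only through the normalization $h_{m_j}(e)=1$, which is precisely why the embeddings are state-preserving for \emph{every} choice of the family $(m_d)_{d\in\D}$.
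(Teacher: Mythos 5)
Your proof is correct, and its skeleton coincides with the paper's: factorize $\omega_t=\omega_{d_1}\ot\cdots\ot\omega_{d_n}$ via $h_t=h_{d_1}\ot\cdots\ot h_{d_n}$, obtain \eqref{equa:stateNC} from the single-site Lemma~\ref{lem:omega}, and reduce state-preservation to the elementary forests $f_{j,n}$ and thence to the single-site identity $(\omega_{d_j}\ot\omega_{e_j})\circ R=\omega_{d_j}$, where $e_j$ is the midpoint of $(d_j,d_{j+1})$. Where you genuinely differ is in the proof of that last identity, which is indeed the heart of the argument. The paper never expands $R(x)$ in the crossed-product picture: it writes $R(x)=u(x\ot\id)u^*$, observes that $u$ lies in the von Neumann algebra generated by \emph{right} translations of $G\times G$ and therefore commutes with $h_{d_j}\ot h_{e_j}\in L(G\times G)$, and concludes by traciality of $\Tr$ that $\Tr\bigl(u(x\ot\id)u^*(h_{d_j}\ot h_{e_j})\bigr)=\Tr\bigl((x\ot\id)(h_{d_j}\ot h_{e_j})\bigr)=\omega_{d_j}(x)$; this argument applies verbatim to every $x\in B(L^2(G))$, which is precisely what Remark~\ref{rem:subalg} needs later. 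You instead combine the explicit formula of Lemma~\ref{lem:R} with the two-site state formula, invariance of the Haar measure, and $h_{e_j}(e)=1$; this is more elementary (no commutant or traciality input) and makes transparent that the midpoint measure enters only through its normalization, but it is a computation on the dense $*$-subalgebra of finite sums $\sum_g a_g\la_g$, so you should add the one-line remark that both sides are norm-continuous states and hence agree on all of $M$ by density --- the same reduction the paper invokes. Two further small points: the two-site formula you use is, strictly speaking, an instance of your general computation for the pair of labels $(d_j,e_j)$ rather than of \eqref{equa:stateNC} itself (whose labels must form an s.d.p.), which is harmless since your derivation works for arbitrary labels; and your reading of $dm_t$ in \eqref{equa:stateNC} as integration against the Haar measure $m_{G_t}$ is the only one under which the formula holds (the restriction of each $\omega_d$ to $\scrC(G)$ is the Haar state, independently of $m_d$), so that clarification is a genuine improvement on the paper's notation.
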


\begin{proof}
By identifying $h_t$ with $h_{d_1}\ot\cdots h_{d_n}$ we obtain that $\Tr(\cdot h_t) = \ot_{j=1}^n \Tr(\cdot h_{d_j})$ and thus $\omega_t$ is identified with $\omega_{d_1}\ot\omega_{d_2}\ot\cdots\ot\omega_{d_n}$.
Since any $\omega_d,d\in\D$ is a normal faithful state so does $\omega_t$.

Consider an elementary tensor of functions $a=a_1\ot\cdots a_n\in\scrC(G_t)$ and $g=(g_1,\cdots,g_n)\in G_t.$
By applying $\omega_t$ to $a$ and $\la_g$ we obtain:
$$\omega_t(a) = \prod_{j=1}^n \omega_{d_j}(a_j) = \prod_{j=1}^n \int_G a_j\ dm_{d_j} = \int_{G_t} a\ dm_t$$
and 
$$\omega_t(\la_g) = \prod_{j=1}^n \omega_{d_j}(\la_{g_j}) = \prod_{j=1}^n \Tr( \la_{g_j} h_{d_j}) = \prod_{j=1}^n h_{d_j}(g_j^{-1}).$$
Observe that since $h_t$ is in the center of $LG_t$ (and is trace class with trace equal to one) we have that $\Tr( a \la_g h_t) = \Tr(a h_t) \Tr(\la_g h_t)$. 
This proves formula \eqref{equa:stateNC}

Consider $f:= f_{j,n}$ the forest with $n$ roots, $n+1$ leaves whose $j$th tree has two leaves and write $e_j:= \tfrac{d_j + d_{j+1}}{2}.$ 
Since any forest is a composition of such elementary one it is sufficient to show $\omega_{ft} \circ \Psi(f)=\omega_t.$
Moreover, by density it is sufficient to check the equality $\omega_{ft} \circ \Psi(f)(x)= \omega_t(x)$ for an elementary tensor $x=x_1\ot\cdots\ot x_n$ .
Observe that 
\begin{align*}
\omega_{ft} \circ \Psi(f)(x) & = \omega_{ft} (x_1\ot\cdots\ot x_{j-1} \ot R(x_j) \ot x_{j+1}\cdots\ot x_n)\\
& = \omega_{d_1}(x_1) \cdots \omega_{d_{j-1}} (x_{j-1}) [\omega_{d_j} \ot \omega_{e_j}]( R(x_j)) 
\ot \omega_{d_{j+1}}(x_{j+1}) \cdots \omega_{d_n}(x_n)\\
& =  [\omega_{d_j} \ot \omega_{e_j}]( u (x_j\ot \id) u^*) \prod_{i\neq j} \omega_{d_i}(x_i)\\
& =  \Tr ( u (x_j\ot \id) u^* (h_{d_j} \ot h_{e_j}) ) \prod_{i\neq j} \omega_{d_i}(x_i)\\
\end{align*}
Observe that the operator $u$ belongs to the group von Neumann algebra $R(G\times G)$ acting to the right which commutes with $L(G\times G).$
Since $h_d$ is in $LG$ for any $d\in\D$ we obtain that $u$ commutes with $h_{d_j} \ot h_{e_j}$ and thus
$$\Tr ( u (x_j\ot \id) u^* (h_{d_j} \ot h_{e_j}) ) = \Tr ( (x_j\ot \id)  (h_{d_j} \ot h_{e_j}) ) = \omega_{d_j}(x_j)$$
implying that $\omega_{ft}\circ \Psi(f) = \omega_t.$
\end{proof}

\subsubsection{The limit state and the GNS completion}
By the last proposition we can define  $\varpi$ to be the unique state on $\scrM_0$ satisfying that $\varpi(b)=\omega_t(b)$ for any $b\in M_t$ and any $t\in\fT.$
Let $\scrM$ be the GNS completion of $\scrM$ w.r.t.~$\varpi$ and continue to write $\varpi$ the normal extension of this state on $\scrM$, see Section \ref{sec:Actions} for more details.

Recall that the group ring $\C[\Gd]$ embeds as a dense *-subalgebra inside $\scrN_0$ and that we have an action $\Gd\act\oscrA$ given by the formula
$gx(I):= g(d) x(I)$ where $I$ is a s.d.i.~starting at $d$.
Let $\scrC(\oscrA)\rtimes_{\alg} \Gd$ be the *-algebra 
$$\{\sum_{g\in \Gd} b_g u_g : \ b_g\in\scrC(\oscrA), |\supp(g\mapsto b_g)|<\infty\}$$
such that $\Ad(u_g) b(x) = b(g^{-1}x)$ for any $g\in\Gd, b\in\scrC(\oscrA), x\in \oscrA.$
The next theorem gives a description of $\scrM$ as a GNS-completion of a crossed-product but when the state is nontrivial on the group acting part.

\begin{theo}\label{omegaNC}
There is an injective *-morphism 
$$J:\scrC(\oscrA)\rtimes_{\alg} \Gd\to \scrM$$ 
with weakly dense image such that 
\begin{equation}\label{equa:varpi}\varpi\circ J(\sum_{g\in \Gd} b_g u_g) = \sum_{g\in \Gd} \left(\int_\oscrA b_g(t) d m_\oscrA(t) \prod_{d\in\supp(g)} h_d(g(d)^{-1}) \right).\end{equation}

The gauge group action $\ga:\Ga:=\prod_\D G\act \scrM_0$ is state-preserving and thus extends to an action on the completion $\scrM$.

If all measures $m_d$ are equal to a single one, the Jones' action $\al_\Psi:V\act \scrM_0$ will be state-preserving and extends to an action on $\scrM$.
\end{theo}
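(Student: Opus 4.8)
The plan is to obtain $J$ as a composition of maps we already have and then reduce all three assertions to the finite-level formula \eqref{equa:stateNC}. For the first assertion, recall from Theorem \ref{theo:MzeroNC} that $\scrC(\oscrA)\rtimes_{\alg}\Gd$ sits as a norm-dense $*$-subalgebra of $\scrM_0$; composing this inclusion with the GNS representation $\pi\colon\scrM_0\to\scrM$ yields $J$. Injectivity follows from the argument of Section \ref{sec:state-prelim}: each $\omega_t$ is faithful by Proposition \ref{prop:NCstate}, so the GNS representation $\pi$ of $\scrM_0$ is faithful, whence $J$ is. Weak density is immediate, since a norm-dense $*$-subalgebra of $\scrM_0$ has image weakly dense in $\pi(\scrM_0)''=\scrM$. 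To establish \eqref{equa:varpi}, I would first treat the case where each $b_g$ is a cylinder function $a_g\circ p_t$ for a common tree $t$: then $\sum_g b_g u_g$ lies in $M_t$ and \eqref{equa:stateNC} applies, using that $p_t\colon(\oscrA,m_\oscrA)\to(G_t,m_t)$ is measure-preserving (so $\int_\oscrA b_g\,dm_\oscrA=\int_{G_t}a_g\,dm_t$) and that $\prod_{j=1}^{n}h_{d_j}(g_j^{-1})=\prod_{d\in\supp(g)}h_d(g(d)^{-1})$ because $h_d(e)=1$. Arbitrary continuous $b_g$ then follow by uniform approximation by cylinder functions together with the norm-continuity of $\varpi$ and of both sides of \eqref{equa:varpi}.

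For the second assertion I would apply $\varpi$ to $\ga(s)\bigl(\sum_g a_g u_g\bigr)=\sum_g a_g(Z(s)^{-1}\cdot)u_{sgs^{-1}}$ from Proposition \ref{prop:gaugeCstar} and compare with \eqref{equa:varpi} term by term. Two facts make each summand invariant. First, $Z(s)$ is $m_\oscrA$-preserving, so $\int_\oscrA a_g(Z(s)^{-1}x)\,dm_\oscrA(x)=\int_\oscrA a_g\,dm_\oscrA$. Second, conjugation by $s\in\Ga$ is pointwise, $(sgs^{-1})(d)=s(d)g(d)s(d)^{-1}$, so $\supp(sgs^{-1})=\supp(g)$, and — the key point — each $h_d$ is a \emph{class function}, being a combination of characters $\chi_\pi$; hence $h_d\bigl((sgs^{-1})(d)^{-1}\bigr)=h_d(g(d)^{-1})$. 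Thus $\varpi\circ\ga(s)=\varpi$. A state-preserving $*$-automorphism of $(\scrM_0,\varpi)$ is implemented on the GNS space by a unitary $U$ with $U\Omega=\Omega$ (since $\|\pi(\ga(s)a)\Omega\|=\|\pi(a)\Omega\|$), and $\Ad(U)$ restricts to $\ga(s)$ on $\scrM_0$, hence extends it normally to $\scrM=\pi(\scrM_0)''$; carrying this out for every $s\in\Ga$ gives the extended action.

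For the third assertion, all $m_d$ equal means $h_d\equiv h$ for every $d$. I would apply $\varpi$ to $\al(v)\bigl(\sum_g b_g u_g\bigr)=\sum_g b_g(v^{-1}\cdot)u_{vg}$ from Theorem \ref{theo:MzeroNC}. As before, $v$ acts on $(\oscrA,m_\oscrA)$ by p.m.p.~homeomorphisms, so the integral factor is unchanged. For the convolution factor, $V$ acts on $\Gd$ by the generalized Bernoulli shift $(vg)(d)=g(v^{-1}d)$, so $\supp(vg)=v\,\supp(g)$ and, reindexing by $d=ve$, one gets $\prod_{d\in\supp(vg)}h_d\bigl((vg)(d)^{-1}\bigr)=\prod_{e\in\supp(g)}h_{ve}(g(e)^{-1})$. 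This equals $\prod_{e\in\supp(g)}h_e(g(e)^{-1})$ exactly because $h_{ve}=h_e$, which is precisely where the single-measure hypothesis enters. Hence $\varpi\circ\al(v)=\varpi$, and the GNS extension argument of the previous paragraph produces the action of $V$ on $\scrM$.

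The only genuinely delicate step is the passage from cylinder to arbitrary continuous coefficients in \eqref{equa:varpi}, together with the bookkeeping that $m_\oscrA$ pushes forward to the level measures $m_t$ and that $Z(s)$ and $\al(v)$ act in a p.m.p.~way on $\oscrA$; all of this is routine given Sections \ref{sec:Commutative} and \ref{sec:gauge}. The two conceptual points I would flag are complementary: gauge invariance holds for \emph{any} family $(m_d)_{d\in\D}$ because the $h_d$ are class functions and conjugation never moves the index $d$, whereas $V$-invariance forces all measures to coincide because the Bernoulli shift relabels $d\mapsto vd$. This asymmetry is exactly the phenomenon the theorem records.
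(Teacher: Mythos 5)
Your proposal is correct and follows essentially the same route as the paper's proof: same construction of $J$ (the dense inclusion from Theorem \ref{theo:MzeroNC} composed with the faithful GNS representation of Section \ref{sec:state-prelim}), the same tree-level verification of \eqref{equa:varpi} via Proposition \ref{prop:NCstate}, gauge invariance from the $m_\oscrA$-invariance of $Z(s)$ together with the class-function property of the $h_d$, and $V$-invariance from the p.m.p.\ action on $\oscrA$ plus shift-invariance of the product state under the single-measure hypothesis. Your two additions --- the explicit cylinder-function approximation to pass from tree-level coefficients to arbitrary $b_g\in\scrC(\oscrA)$, and the standard GNS-unitary implementation of state-preserving automorphisms --- merely spell out steps the paper leaves implicit.
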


\begin{proof}
Theorem \ref{theo:MzeroNC} implies that the range of $J$ is weakly dense.
Since the GNS representation associated to $\varpi$ is faithful (see Section \ref{sec:state-prelim}), then the map $J$ is necessarily injective.
The formula of the state is obvious at any tree level $t\in\fT$ giving us the formula for the algebraic crossed-product $\scrC(\oscrA)\rtimes_{\alg} \oplus_\D G$.

Consider $s\in\Ga$ and an element of the algebraic crossed-product $\sum_{g\in \Gd} b_g u_g$.
Note that $\ga(s) \left( \sum_{g\in \Gd} b_g u_g \right) = \sum_{g\in \Gd} b_g(Z(s^{-1}) \cdot )  u_{sgs^{-1}}$ that is send by the state to
$$\sum_{g\in \Gd} \left(\int_\oscrA b_g(Z(s^{-1}) t) d m_\oscrA(t) \prod_{d\in\supp(g)} h_d(s(d)g(d)^{-1} s(d)^{-1}) \right).$$
The measure $m_\oscrA$ is invariant under the transformation $Z(s)$ by Proposition \ref{prop:holonomy}.
Moreover, the function $h_d$ is in the span of the character of $G$ implying that 
$$h_d(s(d)g(d)^{-1} s(d)^{-1}) = h_d(g(d)^{-1}).$$
We obtain that $\ga(s)$ preserves the state on a weakly dense subalgebra and thus is a state-preserving automorphism of $(\scrM,\varpi).$

Recall that $V\act (\oscrA,m_\oscrA)$ is p.m.p.~which implies, using \eqref{equa:varpi}, that $\varpi$ restricted to $\scrC(\oscrA)$ is invariant under the action $\al:V\act \scrC(\oscrA)$.
The *-algebra generated by $\Gd$ is closed under the action of $V$ and the state $\varpi$ restricted on it splits as an infinite tensor product of states indexed by $\D$.
If $m_d$ does not depend on $d\in\D$, then $\varpi|_{ \C[\Gd]}$ is equal to an infinite tensor product of the same state. 
Since the Jones' action acts by shifting indices in $\D$, it will unchange the restricted state $\varpi|_{\C[\Gd]}$.
Finally observe that $\varpi$ satisfies that $\varpi(bc) = \varpi(b)\varpi(c)$ if $b\in \scrC(\oscrA)$ and $c\in \C[\Gd]$ and that any element of $\scrC(\oscrA)\rtimes_{\alg} \Gd$ can be written as a sum of $bc$ with $b\in\scrC(\oscrA)$ and $c\in \C[\Gd].$ 
This implies that the action $\al:V\act \scrC(\oscrA)\rtimes_{\alg} \Gd$ is state-preserving when $m_d$ does not depend on $d\in\D$.
In that case, the Jones' action defines an action by automorphism of $V$ on the von Neumann algebra $\scrM$.
\end{proof}

Note that in general the action $\al:V\act \scrM_0$ does not extend to an action on $\scrM$.
The problem being that the state $\varpi$ is not always invariant under this action.
In Section \ref{sec:Zleaves} we will present a model where the rotation subgroup of Thompson's group $T$ still acts on $\scrM$ but in a non-state preserving way.

{\begin{remark}\label{rem:subalg}
The pair $(\scrM,\varpi)$ does not depend on the choice, $M$, of a weakly dense subalgebra of $B(L^2(G))$.
More precisely, note that the map $R$ can be extended to an algebra morphism from $B(L^2(G))$ to $B(L^2(G))\ot B(L^2(G))$ and the state $\omega$ extends to a normal state on $B(L^2(G))$ and so does $\omega_t$ on $B(L^2(G^n))$ where $n$ is the number of leaves of $t$.
If we consider any weakly dense *-subalgebra $\tilde M\subset B(L^2(G))$ satisfying that $R(\tilde M)\subset \tilde M\ot \tilde M$, we can define a directed system $((\tilde M_t,\omega_t\vert_{\tilde M_t}),\ t\in\fT)$ of *-algebras equipped with states and state-preserving isometric maps exactly as we did with $\{M,R,\omega\}$.
We will obtain a limit *-algebra with a state $(\tilde\scrM_0,\tilde\varpi)$ that we complete via the GNS construction into a von Neumann algebra $(\tilde\scrM,\tilde\varpi)$.
A similar map as the one given in Theorem \ref{omegaNC} will provide a state-preserving isomorphism from $(\tilde\scrM,\tilde\varpi)$ onto $(\scrM,\varpi)$.
\end{remark}}

Next, we intend to analyze in more depth the couple $(\scrM,\varpi)$ and the action of Thompson's group $V$ on it.
In order to achieve this, we will further assume that the group $G$ is abelian.

\subsection{The abelian group case together with a single measure}
\label{sec:Gabelian}
In this section, we further assume that the compact separable group $G$ is abelian.
The unitary dual $\wh G$ is then given by the Pontryagin dual of $G$ that is a countable discrete abelian group. We also assume in this section that the map $m:\D\to \Prob(\wh G)$ is constant.\\[0.1cm]
We start by defining the dual version of our analysis using the Fourier transform.
Recall that the Fourier transform is the unitary transformation 
$$U_F: L^2(G)\to \ell^2(\wh G), \chi\mapsto \delta_\chi,$$
where $\chi$ is a character of $G$ and $\delta_\chi$ is the corresponding delta function of $\ell^2(\wh G).$
Denote by $\la:\wh G\act \ell^2(\wh G)$ the left regular representation and identify $ \ell^\infty(\wh G)$ with the operator of pointwise multiplication action on $\ell^2(\wh G).$
Observe that $\Ad(U_F)(\scrC(G)) =C_{red}^*(\wh G)$ is the reduced group C*-algebra.
Since $\wh G$ is abelian the full and reduced group C*-algebras coincide and thus we drop the subscript 'red'.
We have that $\Ad(U_F)(\overline{\C G})$ is the C*-algebra generated by the maps $\chi\mapsto \chi(g)$ for fixed $g\in G$, i.e.~the characters of $\wh G$.
It is a C*-subalgebra of the pointwise multiplication operators that we denote by $\Ch(\wh G).$
Identify $B(L^2(G))$ with $B(\ell^2(\wh G))$ via the Fourier transform.
Under the identification we obtain that $N=C^*(\wh G), Q=\Ch(\wh G)$ and $M$ is the crossed-product C*-algebra $\Ch(\wh G)\rtimes \wh G$ living inside $B(\ell^2(\wh G))$ generated by $N$ and $Q$ and where $\wh G$ acts on $\wh G$ via the left regular representation that lifts to an action on $\Ch(\wh G).$
Doing the same identification at a tree level we obtain $N_t=C^*(\wh G_t), Q_t=\Ch(\wh G_t)$ and $M_t= \Ch(\wh G_t)\rtimes \wh G_t$.
By our previous work we have that $N_t,Q_t,M_t$ are isomorphic to the $n$th (minimal) tensor product of $N,Q,M$ if $t$ has $n$ leaves respectively.
Note that the Fourier transform swaps the convolution and pointwise multiplication parts.

Our inclusion map $R:B(L^2(G))\to B(L^2(G))\ot B(L^2( G)), b\mapsto \Ad(u)(b\ot \id)$ is then replaced by 
$$\wh R:= \Ad(U_F\ot U_F) \circ R\circ \Ad(U_F): B(\ell^2(\wh G)) \to B(\ell^2(\wh G))\ovt B(\ell^2(\wh G))$$
and the state $\omega(b):=\Tr(b \la(h)), b\in B(L^2(G))$  by $\wh \omega:= \omega\circ\Ad(U_F).$
Observe that since $G$ is abelian the dimension of a unitary representation $\pi$ is $d_\pi=1$ and we only have characters $\chi$ instead of matrix coefficients $\pi_{n,m}.$
Therefore, our function is $h=\sum_{\chi\in\wh G} h_0(\chi) \chi$ with associated measure $m\in \Prob(\wh G)$ satisfying $m(A)=\sum_{\chi\in A} h_0(\chi)$ for $A\subset\wh G.$

\begin{lemma}\label{lem:whR}
We have the equalities:
$$
\wh R( \sum_{\chi\in\wh G} b_\chi \la_\chi)  = \sum_{\chi\in\wh G} b_\chi \la_\chi \ot \la_\chi,$$
where $b_\chi\in \Ch(\wh G)$ and is equal to zero except for finitely many $\chi\in\wh G.$

In particular, $\wh R(Q)\subset  Q\ot  Q$ and $\wh R( N)\subset  N \ot  N.$

The state $\wh \omega$ satisfies that $$\wh\omega( \sum_{\chi\in\wh G} b_\chi \la_\chi) = \sum_{g\in \wh G} b_e(g) h_0(g)= \int_{\wh G} b_e(g) d m(g),$$
where $e$ is the neutral element of $\wh G$.
\end{lemma}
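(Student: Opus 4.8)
The plan is to prove all three assertions by transporting the corresponding statements for $R$ and $\omega$ (Lemmas \ref{lem:R} and \ref{lem:omega}) through the Fourier transform $U_F$, keeping in mind that $U_F$ interchanges the convolution and the pointwise-multiplication parts of $M$: multiplication by a character $\chi'\in\wh G$ on $L^2(G)$ becomes the translation operator $\la_{\chi'}$ on $\ell^2(\wh G)$, while convolution by $g\in G$ becomes pointwise multiplication by the character $\wh g:\chi\mapsto\chi(g)$ of $\wh G$, an element of $\Ch(\wh G)$.

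For the formula for $\wh R$, I would first note that $\wh R$ is a $*$-homomorphism, being $R$ conjugated by a unitary, so it suffices to evaluate it on the two types of generators. On the group part, for $\chi'\in\wh G$ the operator $\la_{\chi'}$ pulls back to multiplication by $\chi'$ on $L^2(G)$, which lies in $\scrC(G)$; by Lemma \ref{lem:R}, $R$ sends it to multiplication by $\chi'\circ\mu_G$, and since $\chi'$ is a character we have $\chi'(gh)=\chi'(g)\chi'(h)$, so this operator factorizes as (mult.\ by $\chi'$)$\ot$(mult.\ by $\chi'$). Transporting back yields the comultiplication identity $\wh R(\la_{\chi'})=\la_{\chi'}\ot\la_{\chi'}$. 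On the function part, any $b\in\Ch(\wh G)$ pulls back into $\overline{\C G}$, on which $R$ acts by $R(\cdot)=\cdot\ot\id$, hence $\wh R(b)=b\ot\id$. Combining via multiplicativity gives $\wh R(b_\chi\la_\chi)=(b_\chi\ot\id)(\la_\chi\ot\la_\chi)=b_\chi\la_\chi\ot\la_\chi$, and summing produces the claimed formula. The inclusions $\wh R(Q)\subset Q\ot Q$ and $\wh R(N)\subset N\ot N$ are then immediate: they are exactly the Fourier duals of $R(\overline{\C G})\subset\overline{\C G}\ot\overline{\C G}$ and $R(\scrC(G))\subset\scrC(G)\ot\scrC(G)$ from Lemma \ref{lem:R}, and follow directly from the two generator computations since $\id\in\Ch(\wh G)$ and $\la_\chi\ot\la_\chi\in C^*(\wh G)\ot C^*(\wh G)$.

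For the state, the key reduction is to identify the density operator of $\wh\omega$. In the abelian case $h=\sum_{\chi}h_0(\chi)\chi$ and, as in the proof of Lemma \ref{lem:omega}, each $\la(\chi)$ is the orthogonal projection onto the line $\C\chi\subset L^2(G)$; the Fourier transform carries these to the diagonal rank-one projections onto $\delta_\chi$, so it sends $\la(h)$ to multiplication by the density $h_0\in\ell^\infty(\wh G)$. Consequently $\wh\omega(x)=\Tr\!\big(x\,M_{h_0}\big)=\sum_{\chi}h_0(\chi)\langle x\delta_\chi,\delta_\chi\rangle$. Evaluating on a generator $b_\chi\la_\chi$, the translation $\la_\chi$ sends $\delta_\psi$ to $\delta_{\chi\psi}$ and the diagonal pairing forces $\chi=e$, leaving $\langle b_\chi\la_\chi\delta_\psi,\delta_\psi\rangle=b_\chi(\psi)\,\delta_{\chi,e}$; summing against $h_0$ collapses the expression to $\sum_{g\in\wh G}b_e(g)h_0(g)=\int_{\wh G}b_e\,dm$, as claimed.

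None of this is deep — the analogue for $R$ is recorded as a routine computation — so the only real care needed is the bookkeeping of the Fourier duality. I expect the main place where one could slip is keeping the swap (pointwise $\leftrightarrow$ convolution) straight, in particular making sure the comultiplication $\wh R(\la_\chi)=\la_\chi\ot\la_\chi$ comes out correctly from the multiplicativity of characters and that the chosen convention for transporting $\la(h)$ indeed places the density $h_0$ on the diagonal.
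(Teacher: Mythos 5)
Your proof is correct and takes essentially the same route as the paper, whose proof simply records the lemma as a direct consequence of Lemmas \ref{lem:R} and \ref{lem:omega}: you transport those two lemmas through the Fourier transform, which is exactly the intended argument. The only slip is cosmetic: with the paper's convention $\la_g\xi(x)=\xi(g^{-1}x)$, the operator $\Ad(U_F)(\la_g)$ is multiplication by $\chi\mapsto\chi(g^{-1})$ rather than by $\chi\mapsto\chi(g)$, which is harmless since both families of characters generate the same algebra $\Ch(\wh G)$ and your argument only uses that convolution operators land there.
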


\begin{proof}
Those are direct consequences of Lemma \ref{lem:omega}.
\end{proof}
We obtain two systems of C*-algebras $(Q_t,t\in \fT)$ and $(N_t,t\in\fT)$ with C*-completions of inductive limits written $\scrQ_0$ and $\scrN_0$ respectively. 
In order to keep notations simples we write $\omega_t$ the state on $M_t$ without hat and limit state $\varpi$ also without hat.
Before describing $\scrQ_0$ and $\scrN_0$ we define some limits of groups.

\begin{definition}
Consider the group $\wh G_{fr}$ of all maps $g:\D\to \wh G$ such that there exists a s.d.p.~$\dn$ satisfying that $g$ is constant on each half-open interval $[d_j,d_{j+1})\cap \D$ for $1\leq j\leq n.$

Denote by ${\wh G}^\D$ the infinite product of groups $\prod_{d\in\D} \wh G$ endowed with the product topology where each copy $\wh G$ is equipped with the discrete topology.
\end{definition}

Consider the following action $\wh G_{fr}\act{\wh G}^\D$ given by the formula
\begin{equation}\label{equa:GfrGD} (g\cdot x) (d) = g(d)x(d),\ \forall  g\in \wh G_{fr}, x\in {\wh G}^\D, d\in \D.\end{equation}

Let $\ot_\D \Ch(\wh G)$ be the infinite tensor product over the set $\D$ of the C*-algebra $\Ch(\wh G)$. 
Note that it is isomorphic to the unique C*-completion of the inductive limit of C*-algebras $\Ch({\wh G}^E)$ where $E$ runs over every finite subsets of $\D.$ 
The action \eqref{equa:GfrGD} provides an action of $\wh G_{fr}$ over the tensor product $\ot_\D \Ch( \wh G) $ when we consider it as an algebra of functions overs ${\wh G}^\D.$

Denote by $u_\chi, \chi\in \wh G_{fr}$ the classical embedding of $\wh G_{fr}$ inside $C^*(\wh G_{fr})$.
Put $m^\D$ the infinite tensor product of the measure $m$ on $\wh G.$
We are now able to describe a weakly dense *-subalgebra of $\scrM$.

\begin{proposition}\label{prop:AlgCP}
Consider the algebraic crossed-product $\ot_\D \Ch(\wh G)\rtimes_{\alg} \wh G_{fr}$ for the action described above.
This *-algebra embeds as a weakly dense *-subalgebra of $\scrM$.
Moreover, $$\varpi(\sum_{\chi\in \wh G_{fr}} b_\chi u_\chi ) = \int_{{\wh G}^\D} b_e (t) d m^\D(t),$$
where $b_\chi\in \ot_\D \Ch(\wh G)$ is equal to zero for all but finitely many $\chi\in\wh G_{fr}.$

The Jones' action $V\act \scrM_0$ extends to a state-preserving action $V\act (\scrM,\varpi).$
Consider the classical action $V\act \D$ providing the action $V\act {\wh G}^\D$ which restricts to $V\act \wh G_{fr}.$
We obtain the action 
$$V\act \ot_\D \Ch(\wh G)\rtimes_{\alg} \wh G_{fr},\  v\cdot (\sum_{\chi\in \wh G_{fr}} b_\chi u_\chi ) := \sum_{\chi\in \wh G_{fr}} b_{\chi}(v^{-1} \ \cdot ) u_{v\chi}.$$
that is the restriction of the Jones' action.
\end{proposition}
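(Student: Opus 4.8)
The plan is to obtain this statement as the Fourier-transformed, single-measure specialization of Theorem \ref{theo:MzeroNC} and Theorem \ref{omegaNC}, taking care that the Fourier transform interchanges the roles played by the convolution and the multiplication parts. Concretely, Lemma \ref{lem:whR} shows that on $Q = \Ch(\wh G)$ the refinement map satisfies $\wh R(b) = b\ot 1$, whereas on $N = C^*(\wh G)$ it satisfies $\wh R(\la_\chi) = \la_\chi\ot\la_\chi$. Thus the multiplication part now obeys the tensor-shift pattern and the convolution part now obeys the comultiplication (diagonal) pattern, which is exactly the opposite of the general case treated in Sections \ref{sec:Commutative} and \ref{sec:Convolution}; the bulk of the work is to track this swap correctly.

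First I would identify the limit of the multiplication part. Since $\wh R(b) = b\ot 1$ on $\Ch(\wh G)$, Proposition \ref{prop:TensorProd} applies verbatim and yields an isomorphism $\scrQ_0\cong\ot_\D\Ch(\wh G)$ intertwining the Jones' action with the generalized Bernoulli shift of $V\act\D$; under the identification $\ot_\D\Ch(\wh G)\cong\varinjlim_E\Ch({\wh G}^E)$ this is the algebra of functions on ${\wh G}^\D$. Next I would identify the limit of the convolution part. Because $\wh R(\la_\chi) = \la_\chi\ot\la_\chi$ is induced by the diagonal group embedding $\wh G\to\wh G\times\wh G$, $\chi\mapsto(\chi,\chi)$, the groups $\wh G_t$ form a directed system under these diagonal maps whose direct limit is precisely $\wh G_{fr}$: a representative $(\chi_1,\dots,\chi_n)\in\wh G_t$ becomes, after refinement, the same tuple with the value at a split interval duplicated, so in the limit one recovers exactly the maps $\D\to\wh G$ that are constant on the half-open intervals of some s.d.p. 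This provides the dense embedding $u:\wh G_{fr}\to\scrN_0$ of the group ring $\C[\wh G_{fr}]$.

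With both parts identified I would assemble the crossed product. At each level $M_t = \Ch(\wh G_t)\rtimes\wh G_t$ and the shift $\la_\chi$ conjugates a multiplication operator by the translation $x\mapsto\chi^{-1}x$ on $\wh G_t$; passing to the limit this is exactly the translation action \eqref{equa:GfrGD} of $\wh G_{fr}$ on ${\wh G}^\D$. As in the proof of Theorem \ref{theo:MzeroNC}, the finite-level crossed products $\Ch(\wh G_t)\rtimes\wh G_t$ are compatible with the directed system and their union is C*-dense in $\scrM_0$, hence $\ot_\D\Ch(\wh G)\rtimes_{\alg}\wh G_{fr}$ embeds and is weakly dense in $\scrM$. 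The state formula then follows from Lemma \ref{lem:whR}: at every tree level $\omega_t$ annihilates all $\la_\chi$ with $\chi\ne e$ and integrates the $\chi = e$ coefficient against $m^{\ot n}$, so the limit state keeps only the $b_e$ term and integrates it against $m^\D$, which is \eqref{equa:varpi} read in the Fourier picture with a single measure.

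Finally, the assertions about $V$ are the single-measure case of Theorem \ref{omegaNC}: since all $m_d$ coincide, $m^\D$ is invariant under the Bernoulli shift and $V$ fixes the neutral element $e\in\wh G_{fr}$, so $\varpi$ is $V$-invariant and the Jones' action extends to a state-preserving action on $\scrM$; the explicit formula is the Fourier-picture reading of $\al(v)(bu_g) = b(v^{-1}\cdot)u_{vg}$ from Theorem \ref{theo:MzeroNC}, with $b\in\ot_\D\Ch(\wh G)$ and the action $V\act{\wh G}^\D$ restricted to $\wh G_{fr}$. The one point requiring genuine care, and the main obstacle, is this last restriction: one must verify that $V\cdot\wh G_{fr} = \wh G_{fr}$, i.e.~that applying $v\in V$ to a map constant on the intervals of some s.d.p.~again produces such a map. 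This holds because elements of $V$ send s.d.i.~to s.d.i.~up to common refinement, so local constancy is preserved on the transported partition; checking this compatibility, rather than the essentially routine state and density computations, is where the real content of the proposition lies.
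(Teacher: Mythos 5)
Your proposal is correct and follows essentially the same route as the paper's proof: identify the shift part $\wh R(b)=b\ot 1$ via Proposition \ref{prop:TensorProd} to get $\ot_\D\Ch(\wh G)$ with the Bernoulli shift, realize $\wh G_{fr}$ as the direct limit of the $\wh G_t$ under the diagonal maps $\chi\mapsto(\chi,\chi)$, assemble the finite-level crossed products $\Ch(\wh G_t)\rtimes_{\alg}\wh G_t$ into a dense copy of $\ot_\D\Ch(\wh G)\rtimes_{\alg}\wh G_{fr}$, read off the state from the tree-level computation $\omega_t(b_\chi\la_\chi)=\delta_{\chi,e}\int b_e\,dm_t$, and invoke Theorem \ref{omegaNC} for the state-preserving extension of the $V$-action together with the explicit check that $v\in V$ sends a locally constant map $g$ to $g(v^{-1}\cdot)\in\wh G_{fr}$. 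The only deviations are cosmetic: you label the two limits $\scrQ_0$/$\scrN_0$ oppositely to the paper's proof (matching instead the text preceding the proposition), and you describe the limit group directly rather than through the paper's composition of functors $C^*_{red}\circ\Upsilon$.
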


\begin{proof}
Let $\Psi_Q,\Psi_N:\cF\to \Calg$ be the tensor functors induced by $\wh R$ such that $\Psi_Q(1)=Q$ and $\Psi_N(1)=N.$
Lemma \ref{lem:whR} implies that those functors are defined by the map satisfying $R_Q(\la_\chi)=\la_\chi\ot \la_\chi,\chi\in\wh G$ and $R_N(a)=a\ot \id, a\in N.$
Proposition \ref{prop:TensorProd} implies that $\scrN_0=\ot_\D N = \ot_\D \Ch(\wh G)$ the infinite (minimal) tensor product of $N$.
 
Consider the tensor functor $\Upsilon:\cF\to \Gr$ defined as $\Upsilon(1)= \wh G$ and $\Upsilon(Y)(\chi) = (\chi,\chi)\in \wh G\times \wh G$ where $\Gr$ is the category of countable discrete groups with tensor structure given by direct products.
This defines a directed system of groups $(\wh G_t,\ t\in\fT)$. 
Fix a tree $t\in\fT$ and consider its s.d.p.~$\dn$.
Embed $\wh G_t$ into $\wh G_{fr}$ via the map $j_t(g)(d) = g_i$ if $d_i\leq d < d_{i+1}.$
Note that the family of maps $(j_t, t\in\fT)$ is compatible w.r.t.~the directed system of groups and defines an embedding of $\varinjlim_{t\in\fT}\wh G_t$ into $\wh G_{fr}.$
By definition of $\wh G_{fr}$ this map is clearly surjective.

Consider the functor $C^*_{red}:\Gr\to \Calg$ that associates to a countable discrete group $\Gamma$ its reduced group C*-algebra $C^*_{red}(\Gamma).$
One can see that the composition of functors $C^*_{red}\circ \Upsilon$ is equal to the functor $\Psi_Q$.
Therefore, $\scrQ_0$ is isomorphic to the reduced C*-algebra $C^*(\wh G_{fr}).$

Let $\iota_t: \Ch(\wh G_t)\to \ot_\D \Ch(\wh G)$ and $j_t:\wh G_t\to \wh G_{fr}$ be the embeddings given by the directed systems of groups and C*-algebras.
Observe that they induce an embedding $\kappa_t$ of $\Ch(\wh G_t)\rtimes_{\alg} \wh G_t$ inside $\ot_\D \Ch(\wh G)\rtimes_{\alg} \wh G_{fr}$.
Moreover, the family of embedding $\kappa_t$ is compatible with the directed system of the $M_t$ and thus we obtain an embedding $\kappa$ at the limit of the union of the $\Ch(\wh G_t)\rtimes_{\alg} \wh G_t$ inside the crossed-product $\ot_\D \Ch(\wh G)\rtimes_{\alg} \wh G_{fr}$.
The range of this embedding is dense since the algebras $\Ch(\wh G_t)$ with $t\in\fT$ generate the tensor product $\ot_\D \Ch(\wh G)$ and the C*-algebras $C^*(\wh G_t)$ generate $C^*(\wh G_{fr}).$
Taking its inverse and extending it we obtain the desirable embedding.

Consider a tree $t$ with $n$ leaves and put $m_t$ the $n$th tensor power of the probability measure $m\in \Prob(\wh G).$
Check that if $\chi\in \wh G_t$ and $b_\chi\in  \Ch(\wh G_t)$, then $\omega_t( b_\chi \la_\chi) = \delta_{\chi,e} \int_{\wh G_t} b_e(\chi) dm_t(\chi).$
This implies that $\varpi(\sum_{\chi\in \wh G_{fr}} b_\chi u_\chi ) = \varpi(b_e)$ for $\sum_{\chi\in \wh G_{fr}} b_\chi u_\chi \in\scrM_0$ with $b_\chi=0$ for all but finitely many $\chi\in\wh G_{fr}.$
We conclude by observing that $\int_{{\wh G}^\D} \iota_t(b)\ d m^\D = \int_{G_t} b \ dm_t$ for any $b\in \ell^\infty(G_t).$

Since the map $\D\to \Prob(\wh G)$ is constant we have by Theorem \ref{omegaNC} that the Jones' action is state-preserving and thus extends to an action on the von Neumann algebra $\scrM.$
By Proposition \ref{prop:TensorProd} we have that the Jones' action restricted to $\ot_\D \Ch(\wh G)$ is the generalized Bernoulli shift given by $V\act \D$.
The group $\wh G_{fr}=\varinjlim_{t\in\fT} \wh G_t$ is constructed via the tensor functor $\Upsilon:\cF\to\Gr$ and thus carries a Jones' action $V\act \wh G_{fr}.$
Consider $v\in V$, there exists $n\geq 1$, a permutation $\tau$ and two s.d.p.: $d_1=0<d_2<\cdots<d_n<d_{n+1}=1$ and $d'_1=0<d'_1<\cdots<d'_n<d'_{n+1}=1$ such that $v$ sends the interval $[d_j,d_{j+1})$ onto the interval $[d'_{\tau(j)} , d'_{\tau(j+1)} ).$
Let $t$ be the tree associated to the first s.d.p.~and observe that an element $g\in \wh G_t$ is a $n$-tuple $(g_1,\cdots,g_n)$ such that $g(d) = g_j$ if $d\in [d_j,d_{j+1})$ as an element of $\wh G_{fr}.$
If we apply $v$ to $g$ we obtain the map $vg:\D\to \wh G$ satisfying $vg(d')=g_j$ if $d'\in [d'_{\sigma(j)} , d'_{\sigma(j+1)} ) = v ([d_j,d_{j+1}))$ and thus $vg(d') = g(v^{-1}d').$
This implies that $V$ acts by shifting the $\D$-indices on $\wh G_{fr}.$
All together we obtain the desired formula for the Jones' action.
\end{proof}

Recall that $\Ga:=\prod_\D G$ is the gauge group that acts on $(\scrM,\varpi)$ in a state-preserving way, see Section \ref{sec:gauge} for details.
We describe the gauge group action $\ga:\Ga\act \scrM$ in our current dual picture.
Consider a tree $t$ with $n$ leaves with associated s.d.p.~$\dn.$
Given a character $\chi \in\wh G_t$ there exists $\chi_1,\cdots,\chi_n \in\wh G$ such that $\chi = \chi_1\ot \cdots \ot \chi_n.$
If $s\in \Ga$, we define the quantity 
\begin{equation}\label{equa:gaugedual}s[\chi]:= \prod_{j=1}^n \chi_j(s(d_j)^{-1} s(d_{j+1}) ) , \end{equation} and recall that $s(1)=s(0)$ since we work in the torus.
Consider the directed system of groups $(\wh G_t,t\in\fT)$ obtained from the tensor functor $\Upsilon:\cF\to\Gr$ described by $\Upsilon(1)=\wh G$ and $\Upsilon(Y)(\chi)=(\chi,\chi)$ that was introduced in the proof of Proposition \ref{prop:AlgCP}.
An easy computation shows that the map $(s,\chi)\mapsto s[\chi]$ is compatible with the directed system of groups $(\wh G_t,\ t\in\fT)$ associated to the functor $\Upsilon$, i.e.~$s[\chi] = s[\Upsilon(f)(\chi)]$ for any $s\in\Ga, t\in\fT, \chi \in \wh G_t$ and $f\in\cF.$ We continue to denote by $s[\chi]$ the value of above when $\chi$ is in the limit group $\wh G_{fr}.$

\begin{proposition}
The gauge group action $\ga:\Ga\act \scrM$ satisfies the formula:
$$\ga(s)\left(\sum_{\chi\in\wh G_{fr}} a_\chi u_\chi \right) = \sum_{\chi\in\wh G_{fr}} s[\chi] a_\chi u_\chi,$$
where $a_\chi\in\ot_\D \Ch(\wh G)$ and $\chi\mapsto a_\chi$ has finite support.
\end{proposition}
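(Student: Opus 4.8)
The plan is to verify the displayed identity on the weakly dense $*$-subalgebra $\ot_\D \Ch(\wh G)\rtimes_{\alg} \wh G_{fr}$ produced by Proposition \ref{prop:AlgCP}, and then to extend to all of $\scrM$ using that $\ga(s)$ is a norm-continuous $*$-automorphism, by Theorem \ref{omegaNC}. Since $\ga(s)$ is multiplicative, it suffices to compute it on the two families of generators separately: the coefficient algebra $\ot_\D \Ch(\wh G)$ and the group unitaries $u_\chi$, $\chi\in\wh G_{fr}$. The whole point is to transport the gauge action, whose explicit form on $\scrM_0$ is recorded in Proposition \ref{prop:gaugeCstar}, through the Fourier/Pontryagin duality of Section \ref{sec:Gabelian}, which interchanges the commutative connection algebra $\scrC(\oscrA)$ and the momentum group ring of $\Gd$ inside the crossed product.

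First I would treat the coefficient algebra. Under the Fourier transform the factor $\ot_\D \Ch(\wh G)$ is exactly the image of the original group unitaries $\{u_g:g\in\Gd\}$ of $\scrC(\oscrA)\rtimes_{\alg}\Gd$. By Proposition \ref{prop:gaugeCstar} the gauge action sends $u_g\mapsto u_{sgs^{-1}}$, and since $G$ is abelian we have $sgs^{-1}=g$; hence $\ga(s)(u_g)=u_g$. As these elements generate $\ot_\D \Ch(\wh G)$, the automorphism $\ga(s)$ fixes the whole coefficient algebra pointwise, so $\ga(s)(a_\chi)=a_\chi$ for every $a_\chi\in\ot_\D \Ch(\wh G)$.

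The substantive step is the action on $u_\chi$. Under the identification of Section \ref{sec:Gabelian} carrying $\scrQ_0=\scrC(\oscrA)$ onto $C^*(\wh G_{fr})$ (Proposition \ref{prop:AlgCP}), the unitary $u_\chi$ corresponds to the character $\widetilde\chi\in\scrC(\oscrA)$ of the compact abelian group $\oscrA$ which, for a representative s.d.p.\ $\dn$ of $\chi$ with values $\chi_1,\dots,\chi_n$, is given by $\widetilde\chi(x)=\prod_{j=1}^n\chi_j\big(x(d_j,d_{j+1})\big)$. By Proposition \ref{prop:gaugeCstar} the gauge action on $\scrC(\oscrA)$ is $b\mapsto b(Z(s)^{-1}\,\cdot\,)$, and from Section \ref{sec:gauge} we have $Z(s)^{-1}=Z(s^{-1})$ with $Z(s^{-1})(x)(d_j,d_{j+1})=s(d_j)^{-1}x(d_j,d_{j+1})s(d_{j+1})$. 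Using the multiplicativity of each character $\chi_j$, I would then compute
\begin{align*}
\ga(s)(\widetilde\chi)(x)
&=\widetilde\chi\big(Z(s^{-1})x\big)
=\prod_{j=1}^n\chi_j\big(s(d_j)^{-1}\,x(d_j,d_{j+1})\,s(d_{j+1})\big)\\
&=\Big(\prod_{j=1}^n\chi_j\big(s(d_j)^{-1}s(d_{j+1})\big)\Big)\,\widetilde\chi(x)
=s[\chi]\,\widetilde\chi(x),
\end{align*}
so that $\ga(s)(u_\chi)=s[\chi]\,u_\chi$; independence of the chosen representative is precisely the compatibility $s[\chi]=s[\Upsilon(f)(\chi)]$ noted before the statement. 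Combining this with the triviality on the coefficients and the multiplicativity of $\ga(s)$ gives $\ga(s)(a_\chi u_\chi)=\ga(s)(a_\chi)\ga(s)(u_\chi)=s[\chi]\,a_\chi u_\chi$, which is the asserted formula upon summing over $\chi$.

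The main obstacle is not analytic but bookkeeping: Fourier duality swaps the roles of the connection algebra and the momentum group, so the gauge transformation---appearing as the geometric relabeling $Z(s)$ of $\scrC(\oscrA)$ in the original picture---must be re-read as a phase multiplication on the group unitaries $u_\chi$ in the dual picture, while the part that was genuinely moved becomes the unitaries and the part fixed by abelianness of $G$ becomes the coefficients. Once this generator-to-generator dictionary is pinned down, the character computation above is routine and the passage from the dense subalgebra to $\scrM$ is immediate from continuity of $\ga(s)$.
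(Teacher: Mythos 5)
Your proof is correct and takes essentially the same approach as the paper's: reduce by density to the algebraic crossed product, use abelianness of $G$ together with Proposition \ref{prop:gaugeCstar} to see that the gauge action fixes the dual coefficient algebra $\ot_\D\Ch(\wh G)$ pointwise, and compute $\ga(s)(u_\chi)=s[\chi]\,u_\chi$ by reading $u_\chi$ through the Fourier transform as a character and exploiting its multiplicativity. The only cosmetic difference is that you carry out the character computation on $\scrC(\oscrA)$ while the paper does it at a tree level on $\scrC(G_t)$, the two being reconciled precisely by the representative-independence $s[\chi]=s[\Upsilon(f)(\chi)]$ that you invoke.
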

\begin{proof}
By density, it is sufficient to check the formula at a tree level $t\in\fT$ for an element of the algebraic subalgebra $ \Ch(\wh G_t)\rtimes_{\alg} \wh G_t.$
Since $G$ is abelian the formula of Proposition \ref{prop:gaugeCstar} implies that the gauge group acts trivially on $\overline{\C[G_t]}$ which corresponds, in the dual picture, to $ \Ch(\wh G_t)$.
It is then sufficient to check that $\gamma(s)(u_\chi) = s[\chi] u_\chi$ for $s\in \Ga$ and $\chi\in \wh G_t$.
Let $d_1=0<d_2<\cdots< d_n<d_{n+1} =1$ be the s.d.p.~associated to $t$ and write $\chi=\chi_1\ot\cdots\ot \chi_n$ with $\chi_j\in\wh G.$
The unitary operator $u_\chi$ is the conjugation of the continuous map $\chi=\chi_1\ot\cdots\ot \chi_n\in C(G_t)$ by the Fourier transform.
To avoid notational confusions we write $\ga'$ the gauge group action in the classical picture of Section \ref{sec:gauge} by opposition to $\ga$ that is the gauge group action in the dual picture. 
The formula of Proposition \ref{prop:gaugeCstar} gives that $\ga'(s)(\chi)(g) = \chi(Z(s^{-1} ) g) = \prod_{j=1}^n \chi_j(s(d_j)^{-1} g_j s(d_{j+1}) ) = s[\chi] \chi(g)$ and thus $\ga'(s)(\chi) = s[\chi]\chi.$
This implies the desired formula once we conjugate this expression with the Fourier transform.
\end{proof}

Let $(\pi,\fH,\Omega)$ be the triple associated to the GNS construction of $(\scrM_0,\varpi).$
In order to fully understand the weak completion $\scrM$ we will describe the action of the *-algebra $\ot_\D \Ch(\wh G)\rtimes_{\alg} \wh G_{fr}$ on $\fH$.
Given $g\in \wh G_{fr}$, we denote by $g_*m^\D$ the pushforward measure defined by the formula
$$g_*m^\D(C) = m^\D(g^{-1} \cdot C) = \prod_{d\in\D} m^\D(g(d)^{-1} C(d)),$$
for any measurable cylinder $C=\prod_{d\in \D} C(d).$

\begin{lemma}\label{lem:GNSmodule}
There exists a unitary transformation 
$$W:\fH\to \bigoplus_{k\in \wh G_{fr}} L^2( {\wh G}^\D ,k_* m^\D )$$
satisfying that 
$$\Ad(W) \pi( a) ( \xi_k )_{k\in \wh G_{fr}} =( a \xi_k )_{k\in \wh G_{fr}} \text{ and } \Ad(W) \pi( u_g ) (\xi_k )_{k\in \wh G_{fr}} = (\xi_{g^{-1}k }^g )_{k\in \wh G_{fr}}$$
for all $a\in\ot_\D \Ch(\wh G), g\in\wh G_{fr}$ and vector $\xi = ( \xi_k )_{ k\in \wh G_{fr} } $ where $\xi_{ g^{-1} k }^g (x) := \xi_{ g^{-1} k } ( g^{-1} x )$.
\end{lemma}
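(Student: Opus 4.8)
The plan is to realize the GNS space $\fH$ through the weakly dense algebraic crossed-product $A:=\ot_\D \Ch(\wh G)\rtimes_{\alg}\wh G_{fr}$ of Proposition \ref{prop:AlgCP} and to write down the candidate isomorphism explicitly on $A$ before extending it by continuity. Concretely, I would first define, on the dense image of $A$ in $\fH$, the map
\[
W_0:\sum_{g\in\wh G_{fr}} b_g u_g\longmapsto (b_k)_{k\in\wh G_{fr}}\in\bigoplus_{k\in\wh G_{fr}} L^2({\wh G}^\D,k_*m^\D),
\]
where each coefficient $b_k\in\ot_\D \Ch(\wh G)$ is regarded as a function on ${\wh G}^\D$ and placed in the summand indexed by $k$. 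The entire statement then reduces to showing that $W_0$ is a well-defined isometry with dense range intertwining the two actions.

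The core computation is the GNS inner product. Using the crossed-product relation $u_g b=(g\cdot b)u_g$ with $(g\cdot b)(x)=b(g^{-1}x)$, the multiplication rule $u_g u_h=u_{gh}$, and the fact that $\varpi$ only retains the coefficient of $u_e$ by formula \eqref{equa:varpi} of Theorem \ref{omegaNC}, for $a=\sum_g b_g u_g$ and $a'=\sum_h c_h u_h$ in $A$ I would expand $(a')^*a$ and keep the diagonal terms $g=h$ to obtain
\[
\varpi\big((a')^*a\big)=\sum_{h\in\wh G_{fr}}\int_{{\wh G}^\D}\overline{c_h(ht)}\,b_h(ht)\,dm^\D(t).
\]
The change of variables $s=ht$ turns the twist by $h$ into the pushforward measure $h_*m^\D$, giving $\varpi((a')^*a)=\sum_{h}\langle b_h,c_h\rangle_{L^2({\wh G}^\D,h_*m^\D)}$. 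This single identity simultaneously shows that $W_0$ annihilates exactly $N_\varpi=\{a:\varpi(a^*a)=0\}$, so that it descends to the quotient, and that it is isometric onto its image; it is precisely here that the shifted measures $k_*m^\D$ make their appearance.

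Next I would verify surjectivity and the two intertwining formulas. The summand indexed by $k$ is reached by the elements $b\,u_k$ with $b\in\ot_\D \Ch(\wh G)$, so density of the range amounts to density of $\ot_\D \Ch(\wh G)$ inside each $L^2({\wh G}^\D,k_*m^\D)$; granting this, $W_0$ extends to a unitary $W$. For the actions, left multiplication by $a\in\ot_\D \Ch(\wh G)$ sends $\sum_g b_g u_g$ to $\sum_g (ab_g)u_g$, which is $(a b_k)_k$ under $W_0$, i.e.\ componentwise multiplication; and $u_g\cdot\sum_h b_h u_h=\sum_k (g\cdot b_{g^{-1}k})u_k$ after reindexing $k=gh$, whose $k$-th component is $x\mapsto b_{g^{-1}k}(g^{-1}x)=\xi_{g^{-1}k}^g(x)$, exactly the claimed formula. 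One also records that multiplication by $a$ is bounded by its sup-norm and that the translation $f\mapsto f(g^{-1}\cdot)$ is a unitary from $L^2({\wh G}^\D,(g^{-1}k)_*m^\D)$ onto $L^2({\wh G}^\D,k_*m^\D)$, so both actions extend continuously to $\fH$.

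I expect the main obstacle to be the density of $\ot_\D \Ch(\wh G)$ in the spaces $L^2({\wh G}^\D,k_*m^\D)$, since these measures are in general mutually singular for distinct $k$ and need not remain product measures after translation. The key inputs are the strict positivity of $m$ (guaranteeing that the point masses $\delta_\chi$ lie in $L^2(\wh G,m)$) together with the translation invariance of $\Ch(\wh G)$: integrating the characters $e_g(\chi)=\chi(g)$ against $\overline{\chi_0(g)}\,dm_G(g)$ recovers $\delta_{\chi_0}$ in $L^2(\wh G,m)$ by orthogonality of characters, and this Fourier argument tensorizes over $\D$ and transfers to each translate $k_*m^\D$ through the translation unitary $f\mapsto f(k^{-1}\cdot)$. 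Establishing this density carefully, rather than the purely algebraic bookkeeping, is where the genuine work lies.
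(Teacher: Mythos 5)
Your proposal is correct and follows essentially the same route as the paper: decompose the algebraic crossed-product as $\bigoplus_{k\in\wh G_{fr}} \bigl(\ot_\D \Ch(\wh G)\bigr)u_k$, show the GNS inner product is diagonal in $k$ with the $k$-th block carrying the measure $k_*m^\D$, and verify the two intertwining formulas by direct computation (which the paper dismisses as routine). You additionally make explicit the density of $\ot_\D \Ch(\wh G)$ in each $L^2({\wh G}^\D,k_*m^\D)$ needed for surjectivity of $W$, a point the paper leaves implicit; the only slip is your side remark that $k_*m^\D$ need not be a product measure --- it is, namely $\ot_{d\in\D}\,k(d)_*m$ --- but your character-orthogonality and translation-unitary argument never relies on that claim, so nothing breaks.
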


\begin{proof}
Set $B:=\ot_\D \Ch(\wh G)$ that we consider as a space of maps from ${\wh G}^\D$ to $\C$.
By definition of the GNS construction and by Proposition \ref{prop:AlgCP} we have that $\fH$ is the closure of $B\rtimes_{\alg} \wh G_{fr}$ w.r.t.~the inner product $\langle \xi,\eta\rangle := \varpi(\eta^*\xi)$.
Decompose the vector space $B\rtimes_{\alg} \wh G_{fr}$ as the direct sum $\oplus_{k\in\wh G_{fr}} B u_k$.
Observe that if $\xi,\eta\in B$ and $g,k\in \wh G_{fr}$, then 
$$\langle \xi u_g , \eta u_k \rangle = \begin{cases} \int_{{\wh G}^\D } \ \overline\eta \xi \ d( g_* m^\D) \text{ if } g=k\\ 0 \text{ otherwise } \end{cases} .$$
Therefore, $\fH\simeq\oplus_{k\in \wh G_{fr}} \fH_k$ where $\fH_k=L^2({\wh G}^\D, k_*m^\D)$ and write $W$ this unitary transformation.

The other equalities are routine computations.
\end{proof}

Recall that two measures on a measure space are singular if their supports are disjoint and are equivalent (i.e.~are in the same measure class)  if they have the same null sets.
We recall the fundamental theorem of Kakutani for infinite products of measures that will be useful for our study. 

\begin{theo}\label{theo:Kakutani}\cite{Kakutani48}
Let $X$ be a measurable space with two infinite family of probability measures $(\mu_d, \nu_d, d\in\D).$
Assume that $\mu_d$ is equivalent to $\nu_d$ for any $d\in\D$ and put $\mu:=\ot_{d\in\D} \mu_d$ and $\nu:=\ot_{d\in\D} \nu_d$ the infinite tensor products of measures defined on the product $\sigma$-algebra of the product space $X^\D.$
Denote by $\rho(\mu_d,\rho_d)$ the quantity $\sum_{\chi\in\wh G} \sqrt{\mu_d(\chi) \nu_d(\chi)}$ that is in $(0,1].$
Then the measure $\mu$ and $\nu$ are either equivalent or singular with each other.
Moreover, they are equivalent if and only if 
$$ - \sum_{d\in\D} \log(\rho(\mu_d,\nu_d)) <+\infty.$$
\end{theo}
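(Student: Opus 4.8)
The plan is to reduce the dichotomy to the almost-sure behaviour of a likelihood-ratio martingale. Since $\mu_d\sim\nu_d$ for every $d\in\D$, set $f_d:=\tfrac{d\nu_d}{d\mu_d}$ and, for a finite set $F\subset\D$, form the partial product $X_F(x):=\prod_{d\in F}f_d(x_d)$. Fixing an exhaustion $F_1\subset F_2\subset\cdots$ of $\D$ by finite sets, the sequence $X_{F_n}$ is a nonnegative martingale under $\mu$ for the filtration generated by the coordinates in $F_n$, with $\int X_{F_n}\,d\mu=1$; by the martingale convergence theorem it converges $\mu$-a.s.\ to a limit $X_\infty\ge 0$. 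The computation that drives everything is the affinity identity $\int\sqrt{X_{F_n}}\,d\mu=\prod_{d\in F_n}\rho(\mu_d,\nu_d)$, which follows from independence of the coordinates together with $\int\sqrt{f_d}\,d\mu_d=\sum_{\chi\in\wh G}\sqrt{\mu_d(\chi)\nu_d(\chi)}=\rho(\mu_d,\nu_d)$. Cauchy--Schwarz gives $\rho(\mu_d,\nu_d)\in(0,1]$, so the product $\prod_d\rho(\mu_d,\nu_d)$ is positive if and only if $\sum_d(1-\rho(\mu_d,\nu_d))<\infty$, which (using $-\log x\ge 1-x$ on $(0,1]$ and the asymptotics near $1$) is equivalent to $-\sum_d\log\rho(\mu_d,\nu_d)<\infty$.

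Next I would split into the two cases according to this product. If $-\sum_d\log\rho(\mu_d,\nu_d)<\infty$, I claim $\sqrt{X_{F_n}}$ is Cauchy in $L^2(\mu)$: for $m<n$ independence yields $\int\sqrt{X_{F_n}}\sqrt{X_{F_m}}\,d\mu=\prod_{d\in F_n\setminus F_m}\rho(\mu_d,\nu_d)$, whence $\int(\sqrt{X_{F_n}}-\sqrt{X_{F_m}})^2\,d\mu=2-2\prod_{d\in F_n\setminus F_m}\rho(\mu_d,\nu_d)\to 0$ as $m,n\to\infty$, since the full product is positive. Thus $\sqrt{X_{F_n}}\to\sqrt{X_\infty}$ in $L^2(\mu)$, so $X_{F_n}\to X_\infty$ in $L^1(\mu)$ and $\int X_\infty\,d\mu=1$. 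A monotone-class (Dynkin) argument, starting from the fact that $\int_C X_{F_n}\,d\mu=\nu(C)$ for cylinders $C$ depending only on coordinates in $F_n$, then identifies $X_\infty$ with $\tfrac{d\nu}{d\mu}$, giving $\nu\ll\mu$; exchanging the roles of $\mu$ and $\nu$ (using symmetry of $\rho$ and $\tfrac{d\mu_d}{d\nu_d}=f_d^{-1}$) yields $\mu\ll\nu$, hence $\mu\sim\nu$. If instead $-\sum_d\log\rho(\mu_d,\nu_d)=\infty$, then $\int\sqrt{X_{F_n}}\,d\mu\to 0$, so Fatou forces $X_\infty=0$ $\mu$-a.s. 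Running the same construction for the reciprocal martingale $X_{F_n}^{-1}$ under $\nu$ gives $X_{F_n}\to\infty$ $\nu$-a.s., so the tail-measurable set $A:=\{x:\,X_{F_n}(x)\to 0\}$ satisfies $\mu(A)=1$ and $\nu(A)=0$, witnessing $\mu\perp\nu$.

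The main obstacle is establishing genuine $L^1$-convergence (equivalently, uniform integrability) of the martingale in the equivalent case: a nonnegative martingale always converges almost surely, but without extra control the mass of the limit could drop below $1$, in which case $X_\infty$ would not be the full Radon--Nikodym derivative. It is precisely the $L^2$-boundedness of $\sqrt{X_{F_n}}$ supplied by the affinity identity that upgrades almost-sure convergence to $L^1$-convergence and pins the limiting mass at $1$. A secondary but essential technical point is the symmetric treatment of the singular case, which genuinely uses both product measures (and the level-wise equivalence $\mu_d\sim\nu_d$) to produce the separating set $A$; this symmetry is exactly what makes the statement a true dichotomy rather than a one-sided absolute-continuity criterion.
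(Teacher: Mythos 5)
Your proof is correct: the martingale property and normalization of $X_{F_n}$, the affinity identity $\int\sqrt{X_{F_n}}\,d\mu=\prod_{d\in F_n}\rho(\mu_d,\nu_d)$, the $L^2$-Cauchy estimate $2-2\prod_{d\in F_n\setminus F_m}\rho(\mu_d,\nu_d)$, the $\pi$-$\lambda$ identification of $X_\infty$ with $\tfrac{d\nu}{d\mu}$, and the Fatou argument together with the reciprocal martingale under $\nu$ in the divergent case are all sound, and they deliver simultaneously the dichotomy and the summability criterion.

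One point of context: the paper itself contains no proof of this statement. It is recalled verbatim as Kakutani's theorem, cited from \cite{Kakutani48}, and then used as a black box in Lemma \ref{lem:Kakutani} and Propositions \ref{prop:nonsingular} and \ref{prop:SingularF}. So the relevant comparison is with Kakutani's original argument rather than with anything in the paper. Kakutani works directly with the Hellinger integral: he defines the affinity of the infinite products through their finite-dimensional marginals, proves the factorization $\rho(\mu,\nu)=\prod_{d}\rho(\mu_d,\nu_d)$, and shows by explicit $L^2$-estimates on finite-dimensional densities that positivity of this product forces equivalence while vanishing forces singularity. Your argument is the now-standard streamlining of his: Doob's martingale convergence theorem replaces the explicit finite-dimensional limiting procedure, and your $L^2$-computation on $\sqrt{X_{F_n}}$ is precisely Kakutani's Hellinger estimate in martingale dress. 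The martingale route buys a clean identification of the limiting density and makes the uniform-integrability mechanism (which you correctly single out as the crux) transparent; Kakutani's route avoids martingale theory and exhibits $\prod_d\rho(\mu_d,\nu_d)$ as an intrinsic quantity attached to the pair of product measures. A final remark: the identity $\int\sqrt{f_d}\,d\mu_d=\sum_{\chi\in\wh G}\sqrt{\mu_d(\chi)\nu_d(\chi)}$ is the only place where you use discreteness of $X=\wh G$; replacing the sum by the Hellinger integral with respect to any dominating measure makes your proof valid for an arbitrary measurable space $X$, which is the generality of Kakutani's original statement (the paper's formulation, with its sum over $\wh G$, implicitly specializes to the discrete case it needs).
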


We deduce the following lemma.
\begin{lemma}\label{lem:Kakutani}
The measures $g_* m^\D$ and $g'_* m^\D$ are mutually singular if and only if there exists $d\in \D$ such that $g(d)_* m \neq g'(d)_* m$ where $g,g'\in\wh G_{fr}.$
Otherwise the measures are equal.
If the group $\wh G$ is torsion free, then the family of measures $(g_*m^\D, g\in\wh G_{fr})$ are mutually singular.
\end{lemma}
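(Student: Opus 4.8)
The plan is to reduce the whole statement to Kakutani's dichotomy (Theorem~\ref{theo:Kakutani}) after recording that each pushforward is itself an infinite product. Writing the group operation of $\wh G$ additively, the defining formula $g_* m^\D(C)=\prod_{d\in\D}m(g(d)^{-1}C(d))$ exhibits $g_* m^\D$ as the product measure $\bigotimes_{d\in\D}\mu_d$ with $\mu_d:=g(d)_* m$, and likewise $g'_* m^\D=\bigotimes_{d\in\D}\nu_d$ with $\nu_d:=g'(d)_* m$. Since $m$ is strictly positive on the discrete group $\wh G$, every translate $g(d)_* m$ again has full support, so in each coordinate the only null set is empty; in particular $\mu_d$ and $\nu_d$ are automatically equivalent. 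Hence the hypotheses of Theorem~\ref{theo:Kakutani} are met and $g_* m^\D,g'_* m^\D$ are either equivalent or mutually singular.

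Next I would compute the Hellinger affinity coordinatewise. For each $d$, Cauchy--Schwarz gives $\rho(\mu_d,\nu_d)=\sum_{\chi\in\wh G}\sqrt{\mu_d(\chi)\nu_d(\chi)}\le 1$, with equality exactly when $\mu_d$ and $\nu_d$ are proportional; as both are probability measures this means $\mu_d=\nu_d$. Thus $-\log\rho(\mu_d,\nu_d)=0$ if $g(d)_* m=g'(d)_* m$, and $-\log\rho(\mu_d,\nu_d)>0$ otherwise.

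The decisive point is that $g,g'\in\wh G_{fr}$, so both are constant on the half-open intervals of a common s.d.p. If $g(d_0)_* m\neq g'(d_0)_* m$ for some $d_0$ lying in an interval $[d_j,d_{j+1})\cap\D$ of that partition, then by constancy $g(d)_* m=g(d_0)_* m\neq g'(d_0)_* m=g'(d)_* m$ for \emph{every} $d$ in this interval, of which there are infinitely many. Each such $d$ contributes the same strictly positive value $-\log\rho(\mu_{d_0},\nu_{d_0})$, so $-\sum_{d\in\D}\log\rho(\mu_d,\nu_d)=+\infty$ and Theorem~\ref{theo:Kakutani} forces singularity. Conversely, if $g(d)_* m=g'(d)_* m$ for all $d$, the product measures agree factor by factor and are equal. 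This is precisely the stated dichotomy. I expect this step-function observation to be the main obstacle: without using that elements of $\wh G_{fr}$ are constant on entire (infinite) dyadic intervals, Kakutani would only separate the measures into ``equivalent'' versus ``singular,'' and a single differing coordinate would place them in the equivalent case rather than the singular one.

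Finally, for the torsion-free claim it suffices, by the first part, to show that $g\neq g'$ forces a coordinate $d$ with $g(d)_* m\neq g'(d)_* m$; equivalently, that the stabilizer $\Sigma:=\{k\in\wh G:k_* m=m\}$ is trivial, since a short computation gives $g(d)_* m=g'(d)_* m\iff g(d)^{-1}g'(d)\in\Sigma$, and $\Sigma$ is clearly a subgroup. If $k\in\Sigma$ with $k\neq e$, then invariance of $m$ yields $m(\{k^n\})=m(\{e\})$ for all $n$; torsion-freeness makes the powers $k^n$ pairwise distinct, so $m$ assigns the fixed positive mass $m(\{e\})$ to infinitely many points, contradicting $m(\wh G)=1$. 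Hence $\Sigma=\{e\}$, distinct $g,g'$ must differ in some coordinate, and the first part gives mutual singularity of the whole family $(g_* m^\D,\ g\in\wh G_{fr})$.
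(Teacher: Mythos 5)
Your proof is correct and follows essentially the same route as the paper: you invoke Kakutani's dichotomy, exploit the local constancy of elements of $\wh G_{fr}$ to produce infinitely many identical strictly positive terms $-\log\rho(g(d_0)_*m, g'(d_0)_*m)$ in the Hellinger sum, and in the torsion-free case show the stabilizer of $m$ is trivial because infinitely many distinct powers of a nontrivial stabilizing character would all carry the same positive mass. The extra care you take (checking coordinatewise equivalence from strict positivity of $m$, and the equality case of Cauchy--Schwarz) only makes explicit what the paper leaves implicit.
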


\begin{proof}
Consider $g,g'\in \wh G_{fr}$ and assume that there exists $d_0\in\D$ satisfying $g(d_0)_*m\neq g'(d_0)_*m$.
Since $g$ and $g'$ are locally constant we can find an infinite subset $E\subset\D$ such that $g(d ) =g(d_0)$ and $g'(d ) =g'(d_0)$ for any $d\in E.$
We have that 
\begin{align*}
 - \sum_{d\in\D} \log(\rho( g(d)_* m , g'(d)_*m)) & \geq - \sum_{d\in E} \log(\rho( g(d)_* m , g'(d)_* m))\\
 & = - | E | \log(\rho( g(d_0)_*m , g'(d_0)_*m ) )  = +\infty.
 \end{align*}
Kakutani's theorem implies that the measures $g_*m^\D$ and $g'_*m^\D$ are mutually singular.
If such a $d$ does not exists, then by definition the two measures $g_*m^\D$ and $g'_*m^\D$ are equal.

Assume that $\wh G$ is torsion free and let $\chi\in\wh G$ satisfying that $\chi_*m=m.$
If $\chi\neq e$, then $m(\wh G)\geq m(\chi^\Z) = | \Z | m(\{\chi\}) = +\infty$, a contradiction.
Hence, $\chi=e$.
Therefore, if the two measures $g_*m^\D$ and $g'_* m^\D$ are equivalent, then $g(d)g'(d)^{-1}=e$ for any $d\in\D$ implying that $g=g'.$
\end{proof}

Define the subgroup $N<\wh G$ of $g$ such that $g_*m=m$ and set 
$$N_{fr}:=\{g\in \wh G_{fr},\ g(d)\in N, \ \forall d\in\D\}.$$
Let $\sigma:\wh G_{fr}/N_{fr}\to \wh G_{fr}$ be a section and consider the cocycle 
$$\kappa: \wh G_{fr}\times \wh G_{fr}/N_{fr}\to N_{fr}, \ (g,\ga)\mapsto \sigma(g\ga)^{-1} g \sigma(\ga).$$
Define the group action 
\begin{equation}\label{equa:GrX} \wh G_{fr}\act ({\wh G}^\D \times \wh G_{fr}/N_{fr}), \ g\cdot (z, \ga) := (\kappa(g,\ga) z , g \ga).\end{equation}
Equipped the space ${\wh G}^\D \times \wh G_{fr}/N_{fr}$ with the measure $m^\D\ot \mu_c$ where $\mu_c$ is the counting measure.
Since $N_{fr}$ acts in a p.m.p.~way on $({\wh G}^\D,m^\D)$ we obtain that the action $\wh G_{fr}\act ({\wh G}^\D \times \wh G_{fr}/N_{fr} , m^\D\ot \mu_c)$ is measure preserving.
We write $$\mathscr B:= L^\infty({\wh G}^\D\times \wh G_{fr}/N_{fr}, m^\D\ot \mu_c)\rtimes \wh G_{fr}$$ the crossed-product von Neumann algebra. 
Observe that since the measure is invariant under the action we have that $\scrB$ does not have any type III component.
Consider the unique normal state $\varpi_\mathscr B$ on $\mathscr B$ satisfying that 
$$\varpi_\scrB (\sum_{g\in \wh G_{fr}} A_g v_g) = \int_{ {\wh G}^\D } A_e(z, \bar e) dm^\D(z),$$
where $A_g\in L^\infty({\wh G}^\D\times \wh G_{fr}/N_{fr}, m^\D\ot \mu_c)$ for $g\in\wh G_{fr}$ and where $v_s, s\in \wh G_{fr}$ denotes the unitaries of $\mathscr B$ implementing the action.
There is a natural action of Thompson's group $V\act \D$ induced by its action on the unit interval.
This provides an action on $\wh G^\D$ consisting in shifting indices that restricts to the subgroup $\wh G_{fr}$ and passes to the quotient $\wh G_{fr}/ N_{fr}.$
Write $\al_\mathscr B:V\act \mathscr B$ the action induced by those three (i.e.~$\al_\scrB(v)\left(\sum_{g\in\wh G_{fr}} A_g u_g\right) = \sum_{g\in\wh G_{fr}} A_g(v^{-1} \cdot ) u_{vg}$ for $v\in V, A_g\in L^\infty({\wh G}^\D\times \wh G_{fr}/N_{fr}, m^\D\ot \mu_c)$ ) which clearly acts by automorphisms letting the state $\varpi_\mathscr B$ invariant.

We are now able to prove the main theorem of this section.

\begin{theo}\label{theo:singlemes}
Let $G$ be a compact abelian separable group and $m\in\Prob(\wh G)$ a strictly positive probability measure.
We have a state-preserving isomorphism of von Neumann algebras 
$$\psi:(\scrM,\varpi)\to (\mathscr B,\varpi_\mathscr B)$$ and in particular $\scrM$ does not have any type III component.
Moreover, the Jones' action $\al:V\act \scrM$ preserves the state $\varpi$ and is conjugate to the action $\alpha_\scrB$, i.e.~$\Ad(\psi)\circ\alpha=\alpha_\scrB.$

If $\wh G$ is torsion free (e.g.~$G$ is the circle group), then $(\scrM,\varpi)$ is isomorphic to $$(L^\infty({\wh G}^\D, m^\D)\ovt B(\ell^2(\wh G_{fr})),m^\D\ot \langle \cdot \de_e,\de_e\rangle),$$
which is a type I$_\infty$ von Neumann algebra with a diffuse center and equipped with a non-faithful state.

If $G$ is a finite group and $m$ is $\wh G$-invariant (i.e.~$m$ is the Haar measure of $\wh G$), then $(\scrM,\varpi)$ is isomorphic to the hyperfinite type II$_1$ factor equipped with its trace.
\end{theo}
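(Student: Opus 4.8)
The plan is to transport the explicit GNS model of $(\scrM,\varpi)$ furnished by Lemma \ref{lem:GNSmodule} onto the standard (covariant) representation of the crossed product $\scrB$. First I would record the measure-theoretic input. By Lemma \ref{lem:GNSmodule} the GNS space is $\fH\cong\bigoplus_{k\in\wh G_{fr}}L^2(\wh G^\D,k_*m^\D)$, and by Lemma \ref{lem:Kakutani} the measure $k_*m^\D$ depends only on the coset $kN_{fr}$: it is constant along each coset (because $N_{fr}$ acts by p.m.p.\ translations) while the measures attached to distinct cosets are mutually singular. Writing $k=\sigma(\ga)n$ with $\ga=kN_{fr}$ and $n\in N_{fr}$, each coset block becomes $L^2(\wh G^\D,\sigma(\ga)_*m^\D)\ot\ell^2(N_{fr})$, so that $\fH\cong\bigoplus_{\ga\in\wh G_{fr}/N_{fr}}L^2(\wh G^\D,\sigma(\ga)_*m^\D)\ot\ell^2(N_{fr})$.

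Next I would straighten the singular blocks. For $\ga\in\wh G_{fr}/N_{fr}$ let $\Theta_\ga\colon L^2(\wh G^\D,\sigma(\ga)_*m^\D)\to L^2(\wh G^\D,m^\D)$ be the translation unitary $(\Theta_\ga\eta)(z)=\eta(\sigma(\ga)z)$, and let $W$ be the resulting unitary from $\fH$ onto $L^2(\wh G^\D,m^\D)\ot\ell^2(\wh G_{fr}/N_{fr})\ot\ell^2(N_{fr})$, which I identify with the GNS space of $(\scrB,\varpi_\scrB)$ (here $\Omega_\scrB$ is cyclic). A direct computation with the formulas of Lemma \ref{lem:GNSmodule} then shows that $W$ conjugates the two sets of generators: a function $a\in\ot_\D\Ch(\wh G)$ becomes multiplication by $(z,\ga)\mapsto a(\sigma(\ga)z)$, i.e.\ an element of $L^\infty(\wh G^\D\times\wh G_{fr}/N_{fr})$, while the unitary $u_g$ becomes the covariant unitary that sends the coset label $\ga$ to $g\ga$ and twists the fibre by the translation $z\mapsto\kappa(g,\ga)^{-1}z$. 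The crucial point is that this twist is \emph{exactly} the cocycle $\kappa(g,\ga)=\sigma(g\ga)^{-1}g\,\sigma(\ga)$, so that $W$ intertwines $\pi$ with the covariant representation of $\scrB=L^\infty(\wh G^\D\times\wh G_{fr}/N_{fr})\rtimes\wh G_{fr}$ attached to the action \eqref{equa:GrX}. Since $W$ carries the vacuum of $\scrM$ (the $k=e$ constant function) to the vacuum of $\scrB$, the induced isomorphism $\psi=\Ad(W)$ is state-preserving. As the measure $m^\D\ot\mu_c$ is $\wh G_{fr}$-invariant, $\scrB$ carries a canonical semifinite trace and hence has no type III part. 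I expect this straightening step --- checking that the translations $\Theta_\ga$ reassemble into the covariant representation with precisely the cocycle $\kappa$ --- to be the main obstacle, the subtlety being the bookkeeping of how $u_g$ crosses coset boundaries.

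For the conjugacy of the Jones' action I would verify $\Ad(\psi)\circ\al=\al_\scrB$ on generators: the $V$-action shifts the $\D$-indices, hence acts by the Bernoulli shift on $L^\infty(\wh G^\D,m^\D)$ and by index-shift on $\wh G_{fr}$ and on its quotient $\wh G_{fr}/N_{fr}$, matching the definition of $\al_\scrB$. Here one only has to arrange the section $\sigma$ to be compatible with the $V$-action (or observe that different choices yield cohomologous cocycles and the same action up to the canonical identifications). State-preservation of $\al$ is already guaranteed by Theorem \ref{omegaNC}, since the family of measures is constant.

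Finally I would specialise. If $\wh G$ is torsion free then, as in the proof of Lemma \ref{lem:Kakutani}, the only character fixing $m$ is trivial, so $N_{fr}=\{e\}$; then $\sigma=\id$, the cocycle $\kappa$ is trivial and the action on the $\wh G^\D$-factor disappears, giving $\scrB\cong L^\infty(\wh G^\D,m^\D)\ovt\big(L^\infty(\wh G_{fr},\mu_c)\rtimes\wh G_{fr}\big)\cong L^\infty(\wh G^\D,m^\D)\ovt B(\ell^2(\wh G_{fr}))$, with $\varpi_\scrB=m^\D\ot\langle\cdot\,\de_e,\de_e\rangle$; this is type I$_\infty$ with diffuse center (as $m^\D$ is non-atomic) equipped with a non-faithful state. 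If $G$ is finite and $m$ is the Haar measure of $\wh G$, then every character fixes $m$, so $N_{fr}=\wh G_{fr}$ and the quotient collapses, giving $\scrB=L^\infty(\wh G^\D,m^\D)\rtimes\wh G_{fr}$ for the translation action; here $m^\D$ is the Haar probability measure on the compact group $\wh G^\D$ and $\wh G_{fr}$ is a dense, hence ergodic, subgroup acting freely and measure-preservingly, so the crossed product of this free ergodic p.m.p.\ action of an amenable group is the hyperfinite type II$_1$ factor with $\varpi_\scrB$ its trace.
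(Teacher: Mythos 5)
Your route runs parallel to the paper's (the same map $\psi$, the same cocycle $\kappa$, the same specializations), but there is a genuine gap at the central step, and it is not the one you flag. The intertwining you describe as "the main obstacle" --- checking that the translations $\Theta_\ga$ reassemble the generators into covariant operators --- is essentially automatic: since $\varpi=\varpi_\scrB\circ\psi$ on the dense $*$-algebra $\ot_\D \Ch(\wh G)\rtimes_{\alg}\wh G_{fr}$, uniqueness of GNS representations already produces a unitary $W$ with $W\Omega=\Omega_\scrB$ conjugating the two representations of that algebra. The real issue is your assertion that the target of $W$ is "the GNS space of $(\scrB,\varpi_\scrB)$" on which the image operators generate $\scrB$, i.e.\ that $\Ad(W)$ maps $\scrM$ \emph{onto} $\scrB$. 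What the construction gives is $W\scrM W^*=\bigl(\pi_{\varpi_\scrB}(\psi(\ot_\D \Ch(\wh G)\rtimes_{\alg}\wh G_{fr}))\bigr)''$, and two things must be proved before this equals $\scrB$. First, the range of $\psi$ must be weakly dense in $\scrB$: concretely, the weak closure of the multiplication operators $(z,\ga)\mapsto a(\sigma(\ga)z)$, $a\in\ot_\D \Ch(\wh G)$, must be all of $L^\infty({\wh G}^\D\times \wh G_{fr}/N_{fr},m^\D\ot\mu_c)$ and not merely a diagonal copy of $L^\infty({\wh G}^\D)$. This is exactly where the mutual singularity of the measures $\sigma(\ga)_*m^\D$ across distinct cosets (Lemma \ref{lem:Kakutani}) does real work, via the fact that a direct sum of mutually disjoint representations has weak closure equal to the direct sum of the weak closures; if those measures were equivalent instead of singular, the closure would be strictly smaller and the generated algebra would not be $\scrB$. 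You invoke Lemma \ref{lem:Kakutani} only to organize the blocks of $\fH$, never to establish this fullness, and a covariant representation of a W*-dynamical system does \emph{not} automatically generate the crossed product (the trivial representation of $\Z$ is covariant for the trivial action on a point, yet generates $\C$ rather than the group von Neumann algebra of $\Z$). Second, since $\varpi_\scrB$ is not faithful --- the torsion-free case of the very statement you are proving says so --- the GNS representation $\pi_{\varpi_\scrB}$ of the von Neumann algebra $\scrB$ is not obviously injective; one needs the central support of the support projection of $\varpi_\scrB$ to be $1$, otherwise the argument only yields $\scrM\cong\scrB z$ for some central projection $z$.

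The paper supplies both missing ingredients by working in the standard, hence faithful, representation of $\scrB$ on $K=L^2(X,\mu)\ot\ell^2(\wh G_{fr})$: it decomposes $K=\oplus_\ga K_\ga$ into $\scrB$-submodules and shows each $K_\ga$ is isomorphic, as a module over the dense $*$-algebra, to the GNS space $\fH$ (your $W$ is essentially the paper's $U_{\bar e}$), so the normal extension of $\psi$ is injective with image the weak closure of its range; only then does the disjointness argument give that this closure is all of $\scrB$. The remainder of your proposal is fine: your remark that the section $\sigma$ should be taken $V$-equivariant (e.g.\ defined pointwise from a section of $\wh G\to \wh G/N$) addresses a point the paper glosses over, and in the finite Haar case your dense-subgroup ergodicity argument is a legitimate alternative to the paper's appeal to a generating net of finite-dimensional factors. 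But as written, the main isomorphism $\scrM\cong\scrB$ is assumed where it needs to be proved.
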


\begin{proof}
Put $(X,\mu):=({\wh G}^\D\times \wh G_{fr}/N_{fr}, m^\D\ot \mu_c).$
Consider the group action of \eqref{equa:GrX}: 
$$\wh G_{fr}\act (X,\mu), \  g\cdot (z,\ga) = (\sigma(g\ga)^{-1} g \sigma(\ga) z, g\ga), \ g\in \wh G_{fr}, z \in {\wh G}^\D, \ga\in \wh G_{fr}/N_{fr}$$ and recall that it is measure preserving.
If $\xi:X\to \C$ is any map and $g\in \wh G_{fr}$, we write $\xi^g$ the map $\xi^g(x):=\xi(g^{-1} x), x\in X$ for this action.

Write elements of $\scrB$ as formal sums $\sum_{g\in\wh G_{fr}} A_g v_g$ with $A_g\in L^\infty(X,\mu)$.
Assume that $\scrB$ is represented on the Hilbert space $K:=L^2(X,\mu)\ot \ell^2(\wh G_{fr})$ and acts in the following classical way: 
$$v_g(\xi\ot \delta_k) := \xi^{g}\ot \delta_{gk}$$
for $g,k\in\wh G_{fr}$, $\xi \in L^2(X,\mu)$ and 
$$A (\xi\ot\delta_k) = (A\xi)\ot \delta_k$$
where $(A\xi)(y):= A(y)\xi(y) \text{ and } A\in L^\infty(X,\mu).$

Given $a\in \ot_\D \Ch(\wh G)$ we define the map 
$$\psi(a)\in L^\infty(X,\mu) \text{ such that } \psi(a)(z,\ga) := a(\sigma(\ga)z),$$
for $z\in {\wh G}^\D, \ga\in \wh G_{fr}/N_{fr}$.
We easily check that $\overline{\psi(a)} = \psi(\overline a)$ and that $\psi(a^g) = \psi(a)^g$ for any $a\in \ot_\D \Ch(\wh G), g\in \wh G_{fr}$ where $a^g(y):= a(g^{-1} y)$ for $y\in{\wh G}^\D$.
This defines the *-morphism
$$\psi  :\ot_\D \Ch(\wh G)\rtimes_{\alg} \wh G_{fr}\to \scrB, \ 
\sum_{g\in\wh G_{fr}} a_g u_g \mapsto \sum_{g\in\wh G_{fr}} \psi(a_g) v_g.$$
Moreover, 
\begin{align*}
\varpi_\scrB\circ\psi\left(\sum_g a_g u_g \right) & = \varpi_\scrB\left(\sum_g \psi(a_g) v_g \right)\\
& = \int_{{\wh G}^\D} \psi(a_e)(z,\bar e) d m^\D(z)
 = \int_{{\wh G}^\D} a_e(\sigma(\bar e)z) d m^\D(z) \\
& = \int_{{\wh G}^\D} a_e(z) d m^\D(z) \text{ since $\sigma(\bar e)\in N_{fr}$ implies $\sigma(\bar e)_*m^\D=m^\D$}\\
& = \varpi\left(\sum_g a_g u_g \right).
\end{align*}
Therefore, $\psi$ is a densely defined *-morphism from $(\scrM,\varpi)$ to $(\scrB,\varpi_\scrB)$ that is state-preserving. 

Let us prove that $\psi$ extends to a normal isomorphism from $\scrM$ onto $\scrB.$
Identify $K$ with the space $L^2({\wh G}^\D,m^\D)\ot \ell^2(\wh G_{fr}/N_{fr})\ot \ell^2(\wh G_{fr})$.
Under this identification we have that 
$$v_g(\eta\ot \delta_\ga\ot \delta_k) = \eta^{\kappa(g,\ga)} \ot \delta_{g\ga}\ot \delta_{gk}$$
and 
$$A(\eta\ot \delta_\ga\ot \delta_k) = (A(\cdot, \ga)\eta )\ot \delta_\ga\ot \delta_k$$
for $g,k\in \wh G_{fr}, \eta\in L^2({\wh G}^\D,m^\D), \ga\in \wh G_{fr}/N_{fr}$ and $A\in L^\infty(X,\mu).$
In particular, if $a\in \ot_\D \Ch(\wh G)$, then 
\begin{equation}\label{equa:A-module}\psi(a)(\eta\ot \delta_\ga\ot \delta_k) = (a(\sigma(\ga)\cdot )\eta )\ot \delta_\ga\ot \delta_k.\end{equation}
For any $\ga\in \wh G_{fr}/N_{fr}$ consider the subspace 
$$K_\ga:=  \{ \sum_{k\in \wh G_{fr}} \eta_k \ot \delta_{k\ga} \ot \delta_k , \ \eta_k\in L^2({\wh G}^\D,m^\D) , \sum_k  \| \eta_k\|^2<\infty \} \subset K $$
and observe that is is closed under the action of $\scrB$ making it a $\scrB$-module.
We have the following decomposition into $\scrB$-modules:
$$K=\oplus_{\ga\in \wh G_{fr}/N_{fr}} K_\ga.$$
Consider the transformation 
$$U_\ga: K_\ga \to \oplus_{k\in\wh G_{fr}} L^2({\wh G}^\D, k_*m^\D), \ \sum_{k\in\wh G_{fr}}\eta_k\ot\delta_{k\ga}\ot\delta_k \mapsto (\eta_{k} ^{\sigma(k\ga)})_{k\in\wh G_{fr}}$$
and check that it is a unitary transformation.
Moreover, it satisfies that 
$$U_\ga \psi(au_g) \xi =\pi( a u_g) U_\ga \xi$$ for any $\xi\in K_\ga$ and $a\in \ot_\D \Ch(\wh G), g\in\wh G_{fr}$ where $(\pi,\fH)$ is the GNS representation associated to $(\ot_\D\Ch(\wh G)\rtimes_{\alg}\wh G_{fr},\varpi)$ described in Lemma \ref{lem:GNSmodule}.

Therefore, $U_\ga$ provides an isomorphism between the $\ot_\D \Ch(\wh G)\rtimes_{\alg} \wh G_{fr}$-modules $K_\ga$ and the GNS module $\fH$ given by the state $\varpi$.
We obtain that $\ot_\D\Ch(\wh G)\rtimes_{\alg}\wh G_{fr}$-module $K$ is isomorphic to a direct sum of copies $\fH$ indexed by $\wh G_{fr}/N_{fr}$. 
This implies that $\psi$ extends to an isomorphism from $\scrM$ onto the weak closure of the range of $\psi.$

Let us show that the range of $\psi$ is weakly dense inside $\scrB$.
The set of unitary operators $(v_g,g\in\wh G_{fr})$ is by definition in the range of $\psi$.
Therefore, it is sufficient to prove that the weak closure of $\psi(\ot_\D \Ch(\wh G))$ is equal to $L^\infty(X,\mu).$
For $g\in \wh G_{fr}$ we define the representation $\pi_g:\ot_\D \Ch(\wh G)\to B(L^2({\wh G}^\D,g_*m^\D))$ where the maps act by pointwise multiplications.
Observe that the range of $\pi_g$ is weakly dense inside $L^\infty({\wh G}^\D,g_*m^\D)$ since $\Ch(\wh G)$ is weakly dense inside $\ell^\infty(\wh G)$ and since $L^\infty({\wh G}^\D,g_*m^\D)$ is obtained by performing the infinite tensor product $\ot_{d\in\D} (\ell^\infty(\wh G), g(d)_* m)$.
By Lemma \ref{lem:Kakutani} the family of measures $(\sigma(\ga)_* m^\D, \ga\in \wh G_{fr}/N_{fr})$ are mutually singular implying that the representations $(\pi_{\sigma(\ga)} ,\ga \in \wh G_{fr} /N_{fr} )$ are mutually disjoint.
A classical argument (see for instance \cite[Chapter 2]{Arveson76}) states that the weak completion of a sum of disjoint representations is the sum of the weak completions implying that the weak completion of $\oplus_{\ga \in \wh G_{fr} /N_{fr} } \pi_{\sigma(\ga)} (\ot_\D \Ch(\wh G))$ is equal to $\oplus_{\ga \in \wh G_{fr} /N_{fr} } L^\infty({\wh G}^\D,\sigma(\ga)_*m^\D)$. 
Identify $L^\infty(X,\mu)$ with the direct sum $\oplus_{\ga\in\wh G_{fr}/N_{fr}} L^2({\wh G}^\D, \sigma(\ga)_* m^\D)$ and observe that $K$, as a $L^\infty(X,\mu)$-module, is isomorphic to a direct sum of the regular module $\oplus_{\ga\in\wh G_{fr}/N_{fr}} L^2({\wh G}^\D, \sigma(\ga)_* m^\D)$.
Equation \ref{equa:A-module} gives that the range $\psi(\ot_\D \Ch(\wh G))$ is under this identification the diagonal subalgebra $\oplus_{\ga \in   \wh G_{fr}/N_{fr} } \pi_{\sigma(\ga)}(\ot_\D \Ch(\wh G)).$
Therefore, the weak closure of $\psi(\ot_\D \Ch(\wh G))$ is indeed equal to $L^\infty(X,\mu)$ implying that the range of $\psi$ is weakly dense inside $\scrB.$

Since $m^\D\ot \mu_c$ is invariant for the action of $\wh G_{fr}$, we obtain that the von Neumann crossed-product $\scrB$ has direct summands of type I or II.

Let $\al:V\act \scrM$ be the Jones' action that is state-preserving by Theorem \ref{omegaNC}.
Proposition \ref{prop:AlgCP} implies that the actions $\Ad(\psi)\circ \al$ and $\al_\scrB$ coincide on the subalgebra $\psi(\ot_\D  \Ch(\wh G)\rtimes \wh G_{fr})$ that is weakly dense and are thus equal.

If $\wh G$ is torsion free, then Lemma \ref{lem:Kakutani} implies that $N_{fr}$ is trivial.
This implies that the cocycle $\kappa$ is trivial and thus the action $\wh G_{fr}\act ({\wh G}^\D\times \wh G_{fr})$ is given by the formula $g\cdot (x,k):= (x,gk).$
Therefore, $\scrB = L^\infty({\wh G}^\D,m^\D)\ovt  \ell^\infty(\wh G_{fr})\rtimes \wh G_{fr} \simeq L^\infty({\wh G}^\D,m^\D)\ovt B(\ell^2(\wh G_{fr})),$ where $\ovt$ refers to the von Neumann crossed-product.
The statement about the state is obvious.

Assume that $\wh G$ is a finite group and that $m$ is $\wh G$-invariant.
Therefore, $N=\wh G$ and thus $N_{fr}=\wh G_{fr}.$
Hence, $\scrB = L^\infty({\wh G}^\D, m^\D)\rtimes \wh G_{fr}$ for the action $g\cdot x =gx$.
We have $\varpi_\scrB(\sum_{g\in\wh G_{fr}} a_g v_g) = m^\D(a_e).$ 
This state is clearly faithful and is tracial implying that $\scrB$ is of finite type.
Moreover, $\scrB$ is generated by a net of finite factors $B(\ell^2(G_t)), t\in\fT$ implying that $\scrB$ is a factor by \cite[Chapt. XIV, 2, Lem. 2.13]{TakesakiIII} and is manifestly hyperfinite since $B(\ell^2(G_t))$ is finite dimensional for any $t\in\fT.$
Since $\scrB$ is infinite dimensional we obtain that $(\scrB,\varpi_\scrB)$ is necessarily the hyperfinite II$_1$ factor equipped with its trace.
\end{proof}

\begin{remark}
Observe that at a tree-level $M_t$ our state $\omega_t$ is faithful and in general nontracial.
Using Tomita-Takesaki theory, this gives us a nontrivial time evolution for our system.
Unfortunately, at the limit, the state $\varpi$ is no longer faithful unless it is tracial.
One classical method to palliate the non-faithfulness is to consider a maximal projection $p\in\scrM$ such that $\varpi$ is faithful on the compression $p\scrM p$ and to consider  $(p\scrM p, \varpi(p)^{-1}\varpi(p\cdot p))$. 
Unfortunately, in our case, we will always end up with a tracial state and thus with a trivial time evolution.
For example, if $\wh G$ is torsion free, we will have that $(p\scrM p, \varpi(p)^{-1}\varpi(p\cdot p))$ is the commutative von Neumann algebra $L^\infty({\wh G}^\D, m^\D)$ together with the measure $m^\D$ for a state. 
\end{remark}

\begin{remark}
One can easily recover the algebras of fields $\scrM(O)$ for $O$ a connected open subset of the torus.
Indeed, in the definition of $\scrB$ replace each $\D$ by the subset $\D(O)$ and use the appropriate embeddings/projections for having a description of $\scrM(O)\subset \scrM$ in term of crossed-products von Neumann algebras.
\end{remark}

\subsection{The circle group case together with a family of heat-kernel states}\label{sec:Zleaves}
From now one consider the circle group $G=\mathbf S$ and its Pontryagin dual $\wh G$ that we identify with the infinite cyclic group $\Z.$
For any $b>0$, we define the heat-kernel states, cp.~Remark \ref{rem:heatkernel},
$$h_b:\Z\to \R, n\mapsto \dfrac{e^{-n^2 b/2}}{Z_b} \text{ where } Z_b = \sum_{n\in\Z} e^{-n^2 b/2}.$$ 
This defines a probability measure $m_b\in\Prob(\Z)$ via the formula
$$m_b(A) = \sum_{n\in A} h_b(n), \ A\subset\Z.$$

Next, we consider a map $\beta:\D\to \R_{>0}$ and consider the family of functions, measures and states indexed by $d\in\D$:
\begin{align*} 
h_{\beta(d)}(n) & = \dfrac{e^{-n^2 \beta(d)/2}}{Z_{\beta(d)}},\ n\in\Z;\\
m_{\beta(d)} (A) & = \sum_{n\in A} \dfrac{e^{-n^2 \beta(d)/2}}{Z_{\beta(d)}}, \ A\subset \Z;\\
\omega_{\beta(d)}(a) & =\Tr(a \la(h_{\beta(d)} ) ),\ a\in B(\ell^2(\Z)).
\end{align*}

Equipped $\Z^\D$ with the product $\sigma$-algebra $\Sigma^\D$ generated by cylinders and define the product probability measure $m_\beta^\D:=\ot_{d\in\D} m_{\beta(d)}$ on $(\Z^\D , \Sigma^\D).$
Define the countable discrete abelian group $\Z_{fr}$ of maps $g:\D\to \Z$ satisfying that there exists a s.d.p.~$d_1=0< d_2 <\cdots<d_n<d_{n+1} = 1$ such that $g$ is constant on each half-open interval $[d_j,d_{j+1}), 1\leq j\leq n.$
Consider the action 
\begin{equation}\label{equa:thetaZ}\theta:\Z_{fr}\act \Z^\D, (g\cdot x)(d) = g(d) + x(d).\end{equation}
Using Theorem \ref{theo:Kakutani}, we obtain the following criteria.

\begin{proposition}\label{prop:nonsingular}
The action $\theta:\Z_{fr}\act \Z^\D$ is nonsingular w.r.t.~the measure $m^\D_\beta$ if and only if $\beta\in\ell^{1}(\D)$, i.e.~$\sum_{d\in\D}\beta(d)<\infty.$
\end{proposition}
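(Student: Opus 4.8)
The plan is to reduce the statement to Kakutani's dichotomy (Theorem \ref{theo:Kakutani}) applied, for each $g\in\Z_{fr}$, to the pair of product measures $m^\D_\beta$ and $g_*m^\D_\beta$. Nonsingularity of $\theta$ means exactly that $g_*m^\D_\beta$ is equivalent to $m^\D_\beta$ for every $g\in\Z_{fr}$. Writing the action out, $g_*m^\D_\beta=\ot_{d\in\D}(g(d))_*m_{\beta(d)}$, where $(g(d))_*m_{\beta(d)}$ is the translate of the discrete Gaussian $m_{\beta(d)}$ by $g(d)\in\Z$; since $h_b(n)>0$ for every $n\in\Z$, each factor is equivalent to $m_{\beta(d)}$, so the hypotheses of Theorem \ref{theo:Kakutani} are met. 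Hence $g_*m^\D_\beta$ and $m^\D_\beta$ are equivalent precisely when $-\sum_{d\in\D}\log\rho_d(g)<\infty$, where $\rho_d(g):=\rho\big((g(d))_*m_{\beta(d)},\,m_{\beta(d)}\big)$ is the Hellinger affinity, and mutually singular otherwise. The whole problem thus reduces to estimating these affinities and deciding convergence of the resulting series.

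First I would carry out the exact computation of the affinity. For $b>0$ and $k\in\Z$, completing the square via $n^2+(n-k)^2=2(n-\tfrac{k}{2})^2+\tfrac{k^2}{2}$ gives
$$\rho\big(k_*m_b,\,m_b\big)=\sum_{n\in\Z}\sqrt{h_b(n-k)h_b(n)}=\frac{e^{-k^2 b/8}}{Z_b}\,\Theta_k(b),\qquad \Theta_k(b):=\sum_{n\in\Z}e^{-(n-k/2)^2 b/2},$$
where $k_*m_b$ denotes translation of $m_b$ by $k$. Reindexing shows $\Theta_k(b)=Z_b$ when $k$ is even and $\Theta_k(b)=\vartheta(b):=\sum_n e^{-(n+1/2)^2 b/2}$ when $k$ is odd, while Poisson summation yields $\vartheta(b)=\sqrt{2\pi/b}\sum_m(-1)^m e^{-2\pi^2 m^2/b}\le\sqrt{2\pi/b}\sum_m e^{-2\pi^2 m^2/b}=Z_b$; hence $\vartheta(b)\le\Theta_k(b)\le Z_b$ in all cases, and therefore
$$\frac{k^2 b}{8}\ \le\ -\log\rho\big(k_*m_b,\,m_b\big)\ =\ \frac{k^2 b}{8}+\log\frac{Z_b}{\Theta_k(b)}\ \le\ \frac{k^2 b}{8}+\log\frac{Z_b}{\vartheta(b)}.$$
I would then establish the uniform estimate $\log(Z_b/\vartheta(b))\le Cb$ for a universal constant $C$: by the Jacobi theta transformation $Z_b,\vartheta(b)\sim\sqrt{2\pi/b}$ as $b\to0$ with corrections exponentially small in $1/b$, so the ratio is $\le Cb$ there, whereas as $b\to\infty$ one has $\log(Z_b/\vartheta(b))\sim b/8$; continuity then extends the bound to all of $(0,\infty)$.

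With these estimates both implications follow at once. If $\beta\in\ell^{1}(\D)$, fix any $g\in\Z_{fr}$; being locally constant, $g$ takes finitely many values, so $K:=\sup_d|g(d)|<\infty$, and
$$-\sum_{d\in\D}\log\rho_d(g)\ \le\ \sum_{d\in\D}\Big(\frac{g(d)^2}{8}+C\Big)\beta(d)\ \le\ \Big(\frac{K^2}{8}+C\Big)\sum_{d\in\D}\beta(d)\ <\ \infty,$$
whence $g_*m^\D_\beta\sim m^\D_\beta$ by Theorem \ref{theo:Kakutani}; as $g$ was arbitrary, $\theta$ is nonsingular. Conversely, if $\beta\notin\ell^{1}(\D)$, take the constant map $g\equiv1$, which lies in $\Z_{fr}$ (constant on the trivial partition) and has $g(d)=1\neq0$ for all $d$; the lower bound gives $-\sum_{d\in\D}\log\rho_d(g)\ge\frac18\sum_{d\in\D}\beta(d)=+\infty$, so $g_*m^\D_\beta$ and $m^\D_\beta$ are mutually singular and $\theta$ fails to be nonsingular. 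I expect the main obstacle to be the uniform linear bound $\log(Z_b/\vartheta(b))\le Cb$, which is exactly what makes $\ell^{1}$-summability (rather than some weaker decay) of $\beta$ the sharp threshold; by contrast the lower bound $-\log\rho\ge k^2 b/8$ and the bookkeeping with Kakutani's criterion are routine.
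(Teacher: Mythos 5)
Your proof is correct and follows essentially the same route as the paper's: reduction to Kakutani's dichotomy, the exact computation of the Hellinger affinity $\rho(k_*m_b,m_b)=e^{-k^2b/8}\,\Theta_k(b)/Z_b$ with the even/odd case split, and Jacobi--Poisson estimates on the theta sums. The only difference is cosmetic: you establish the two-sided bound $\tfrac{k^2b}{8}\le -\log\rho(k_*m_b,m_b)\le\bigl(\tfrac{k^2}{8}+C\bigr)b$ uniformly in $b>0$, whereas the paper uses the asymptotic $-\log\rho(k_*m_b,m_b)\sim k^2b/8$ as $b\to0$ plus a separate observation that the series diverges when $\beta$ does not tend to zero; your uniform estimate neatly removes that bifurcation, but the underlying mechanism is identical.
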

\begin{proof}
Consider an integer $k\in\Z$, a real number $b>0$ and the associated heat-kernel measure $m_b.$ 
Observe that 
\begin{align*}
\rho(k_*m_b, m_b) &:= \sum_{n\in\Z} \sqrt{ m_b(n-k) m_b(n)}\\
& = \sum_{n\in\Z} Z_b^{-1} \exp(-(n-k)^2b/4 - n^2 b/4)\\
& = Z_b^{-1} \exp(-k^2 b/8) \sum_{n\in\Z} \exp( - (n- k/2)^2 b/2).
\end{align*}

We treat two cases.
Assume that $k$ is even.
Then 
$$\sum_{n\in\Z} \exp( - (n-k/2)^2 b/2)=\sum_{n\in\Z} \exp( - n^2 b/2)=Z_b$$ 
and thus $\rho(k_*m_b, m_b)=\exp(-k^2 b/8).$

Assume that $k$ is odd.
Let us show that $\rho(k_*m_b, m_b)\sim_{b\to 0} \exp(-k^2 b/8).$
Observe that 
\begin{align*}
\rho(k_*m_b, m_b) & = Z_b^{-1} \exp(-k^2 b/8) \sum_{n\in\Z} \exp( - (2n-k)^2 b/8)\\
& = Z_b^{-1} \exp(-k^2 b/8) \sum_{n\in\Z} \exp( - (2n-1)^2 b/8)\\
& = Z_b^{-1} \exp(-k^2 b/8) [ \sum_{p\in\Z} \exp( - p^2 b/8) - \sum_{q\in\Z} \exp( - (2q)^2 b/8) ] \\
& = Z_b^{-1} \exp(-k^2 b/8) [ Z_{b/4} - Z_b]\\
& = \exp(-k^2 b/8)\dfrac{ Z_{b/4} - Z_b }{ Z_b }.\\
\end{align*}
We now approximate $Z_b$.
Consider the theta function $\theta(z)=\sum_{k\in\Z} e^{i\pi k^2 z}$ defined on the upper half-plane and recall the Jacobi-Poisson inversion formula that is 
\begin{equation}\label{equa:theta}\theta(-z^{-1})^2 = -iz \theta(z)^2.\end{equation}
In particular, $$Z_b = \sqrt{ \dfrac{2\pi}{b} } \sum_{n\in\Z} e^{-2\pi^2 n^2/b} = \sqrt{ \dfrac{2\pi}{b} } (1+ 2\sum_{n=1}^\infty (e^{-2\pi^2/b})^{n^2} ).$$
Observe that 
$$\sum_{n=1}^\infty (e^{-2\pi^2/b})^{n^2} )\leq \sum_{n=1}^\infty (e^{-2\pi^2/b})^{n} ) = \dfrac{e^{-2\pi^2/b}}{1-e^{-2\pi^2/b}}\to_{b\to 0} 0.$$ 
This implies that 
$$\rho(k_*m_b, m_b) = \exp(-k^2 b/8)\dfrac{ Z_{b/4} - Z_b }{ Z_b } \sim_{b\to 0} \exp(-k^2 b/8).$$
Note that this approximation is uniform w.r.t.~the variable $k$.

We obtain that 
\begin{equation}\label{equa:logrho}
-\log(\rho(k_*m_b, m_b)) \sim_{b\to 0} \dfrac{bk^2}{8} \text{ for any } k\in\Z.\end{equation}

Consider $g\in\Z_{fr}$ that is nonzero.
Then $g(d)\neq0$ for infinitely many values and implying that the series $\sum_{d\in\D} -\log(\rho(g(d)_*m_{\beta(d)}, m_{\beta(d)}))$ diverges if $\beta$ does not tend to zero.
We can thus assume that $\beta$ tends to zero at infinity and since $g$ is bounded we obtain that $-\log(\rho(g(d)_*m_{\beta(d)}, m_{\beta(d)}))$ is equivalent to $\tfrac{\beta(d)g(d)^2}{8}$ at infinity in $d\in\D$.
By Kakutani's theorem, we conclude that $g_*m_\beta^\D \simeq m_\beta^\D$ if and only if $\beta$ is summable finishing the proof of the proposition.
\end{proof}

Note that the proof of the last proposition implies the following statement: 
for any map $g:\D\to \Z$ we have that $g_*m_\beta^\D\simeq \mbD$ if and only if $d\in\D\mapsto g(d)^2 \beta(d)$ is summable.

Consider the setting described in Section \ref{sec:Gabelian} applied to $\wh G = \Z$ with $M_t=\Ch(\Z_t)\rtimes \Z_t$ the C*-subalgebra of $B(\ell^2(\Z_t))$ generated by $\Ch(\Z_t)$ acting by pointwise multiplication and the group ring $\C[\Z_t]$ acting by convolution.
They are equipped with the state $\omega_t$ satisfying
$$\omega_t\left( \sum_{g\in\Z_t} b_g u_g \right) = \int_{\Z_t} b_e \ dm_t,$$
where $m_t = m_{\beta(d_1)}\ot\cdots \ot m_{\beta(d_n)}$ and $d_1=0<d_2<\cdots< d_n<d_{n+1} = 1$ is the s.d.p.~associated to the tree $t\in\fT$.
Proposition \ref{prop:AlgCP} applied to this specific example gives that the inductive limit of C*-algebras $\scrM_0$ contains densely the *-algebra $\ot_\D\Ch(\Z)\rtimes_{\alg} \Z_{fr}$ and satisfies the following formula for the Jones' action:  
$$\al:V\act \scrM_0, \ \al(v)(\sum_{g\in\Z_{fr}} b_g u_g) = \sum_{g\in\Z_{fr}} b_g(v^{-1}\cdot ) u_{vg},$$
with $b_g\in \ot_\D\Ch(\Z)$ and $g\mapsto b_g$ finitely supported.
Moreover, the state satisfies the following formula
$$\varpi\left( \sum_{g\in\Z_{fr}} b_g u_g \right) = \int_{\Z^\D} b_e \ d \mbD.$$
We let the reader verify these formulae which are all easy consequences of the preceding sections.\\[0.1cm]
Let $(\scrM,\varpi)$ be the weak completion of $\scrM_0$ w.r.t.~$\varpi$ and the normal extension of this state.
The next theorem relates our action $\theta$ (see \eqref{equa:thetaZ}) and the von Neumann algebra $\scrM$.

\begin{theo}
If $\beta\in\ell^{1}(\D)$, then there exists an isomorphism of von Neumann algebras 
$$J:\scrM\to L^\infty(\Z^\D, m^\D_\beta)\rtimes \Z_{fr}$$
satisfying
$$ \varpi \circ J^{-1}(\sum_{g\in\Z_{fr}} b_g u_g ) = \int_{ { \Z}^\D } b_e(x) d m_\beta^\D(x),$$
where $(u_g,g\in\Z_{fr})$ implements the unitary action of $\Z_{fr}$ on $L^\infty(\Z^\D,\mbD).$
\end{theo}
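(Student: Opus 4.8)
The plan is to build the isomorphism $J$ directly from the explicit GNS data assembled in the previous sections, the crucial new input being that the hypothesis $\beta\in\ell^1(\D)$ forces the action $\theta:\Z_{fr}\act(\Z^\D,m^\D_\beta)$ to be \emph{nonsingular}. Concretely, Proposition \ref{prop:nonsingular}, via Kakutani's dichotomy (Theorem \ref{theo:Kakutani}), gives that each translate $g_*m^\D_\beta$ is \emph{equivalent} to $m^\D_\beta$ for every $g\in\Z_{fr}$; this is precisely the contrast with the single-measure torsion-free situation of Theorem \ref{theo:singlemes}, where the translates were mutually singular and produced a type I algebra. In particular the crossed-product von Neumann algebra $L^\infty(\Z^\D,m^\D_\beta)\rtimes\Z_{fr}$ is well-defined. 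First I would record, from Proposition \ref{prop:AlgCP} specialized to $\wh G=\Z$, that $\ot_\D\Ch(\Z)\rtimes_{\alg}\Z_{fr}$ sits as a weakly dense $*$-subalgebra of $\scrM$ carrying the state $\varpi(\sum_g b_g u_g)=\int_{\Z^\D} b_e\,dm^\D_\beta$.

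Next I would identify the two Hilbert spaces. Lemma \ref{lem:GNSmodule} realizes the GNS space $\fH$ of $(\scrM_0,\varpi)$ as $\bigoplus_{k\in\Z_{fr}} L^2(\Z^\D,k_*m^\D_\beta)$, with $\ot_\D\Ch(\Z)$ acting diagonally by multiplication and $u_g$ acting by $(\xi_k)_k\mapsto(\xi^g_{g^{-1}k})_k$. Using the equivalence $k_*m^\D_\beta\simeq m^\D_\beta$, multiplication by the square root of the Radon--Nikodym derivative $\sqrt{d(k_*m^\D_\beta)/dm^\D_\beta}$ is a unitary $L^2(\Z^\D,k_*m^\D_\beta)\to L^2(\Z^\D,m^\D_\beta)$ for each $k$, hence a unitary
$$\Phi:\fH\xrightarrow{\ \sim\ } L^2(\Z^\D,m^\D_\beta)\ot\ell^2(\Z_{fr}),$$
which is precisely the standard Hilbert space on which $L^\infty(\Z^\D,m^\D_\beta)\rtimes\Z_{fr}$ is represented. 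I would then check that $\Ad(\Phi)$ carries the GNS representation of the dense subalgebra onto the standard covariant representation: the multiplication operators $\pi(a)$, $a\in\ot_\D\Ch(\Z)$, commute with the scalar factors $\sqrt{d(k_*m^\D_\beta)/dm^\D_\beta}$ and hence remain multiplication operators, while the conjugated $\Phi\,\pi(u_g)\,\Phi^*$ become the covariant unitaries of the crossed product, the emerging Radon--Nikodym factors being exactly the $1$-cocycle built into the crossed-product construction for a nonsingular (non measure-preserving) action.

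With the representations matched on the dense $*$-subalgebras, I would pass to weak closures. On the one hand $\pi(\scrM_0)''=\scrM$ by definition of the GNS completion; on the other hand the weak closure of $\ot_\D\Ch(\Z)$ acting by multiplication on $L^2(\Z^\D,m^\D_\beta)$ is all of $L^\infty(\Z^\D,m^\D_\beta)$, since $\Ch(\Z)$ is weakly dense in $\ell^\infty(\Z)$ and $m^\D_\beta$ is a product measure, so the finite tensor powers $\ot_{d\in E}\Ch(\Z)$ exhaust $L^\infty$ over finite $E\subset\D$ (exactly as in the proof of Theorem \ref{theo:singlemes}). Together with the unitaries $u_g$, $g\in\Z_{fr}$, these generate $L^\infty(\Z^\D,m^\D_\beta)\rtimes\Z_{fr}$, so $J:=\Ad(\Phi)$, being a weakly continuous spatial $*$-isomorphism intertwining the two sides, restricts to the desired isomorphism of von Neumann algebras. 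Finally the state formula follows by evaluating at the cyclic vector: the vacuum $\Omega$ corresponds under $\Phi$ to $\mathbf 1\ot\delta_e$, whence $\varpi\circ J^{-1}(\sum_g b_g u_g)=\langle b_e\cdot\mathbf 1,\mathbf 1\rangle_{L^2(m^\D_\beta)}=\int_{\Z^\D} b_e\,dm^\D_\beta$, only the diagonal $g=e$ term surviving because the off-diagonal unitaries shift $\delta_e$ to orthogonal summands.

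The main obstacle is the bookkeeping in the middle step: verifying that the Radon--Nikodym factors produced when flattening $\bigoplus_k L^2(\Z^\D,k_*m^\D_\beta)$ onto $L^2(\Z^\D,m^\D_\beta)\ot\ell^2(\Z_{fr})$ assemble into precisely the cocycle defining the crossed product of a nonsingular action, so that $\Ad(\Phi)$ really lands in $L^\infty(\Z^\D,m^\D_\beta)\rtimes\Z_{fr}$ rather than some cocycle-twisted variant. This is the step where $\beta\in\ell^1(\D)$ is genuinely used: without equivalence of the measures the flattening unitary $\Phi$ does not exist and one is thrown back into the mutually singular regime of Theorem \ref{theo:singlemes}. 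Everything else is a routine transcription of identifications already established.
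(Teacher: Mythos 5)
Your proposal is correct and takes essentially the same route as the paper: the paper's own proof consists of two sentences---nonsingularity of $\theta:\Z_{fr}\act(\Z^\D,\mbD)$ via Proposition \ref{prop:nonsingular}, followed by an appeal to ``the definition of crossed-product von Neumann algebras for nonsingular actions'' together with the description of $\varpi$---and your Radon--Nikodym flattening of the GNS module of Lemma \ref{lem:GNSmodule} onto $L^2(\Z^\D,\mbD)\ot\ell^2(\Z_{fr})$ is precisely the standard construction that this appeal invokes (the cocycle bookkeeping you flag does close up, by the chain rule $\rho_{gk}(x)=\rho_k(g^{-1}x)\rho_g(x)$ for the derivatives $\rho_g=d(g_*\mbD)/d\mbD$). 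The only point worth noting is that Lemma \ref{lem:GNSmodule} and Proposition \ref{prop:AlgCP} are stated in Section \ref{sec:Gabelian} under the constant-measure assumption, so you are implicitly extending them to the non-constant family $(m_{\beta(d)})_{d\in\D}$; this is legitimate (their proofs carry over verbatim to arbitrary strictly positive product measures) and the paper itself does the same when it applies Proposition \ref{prop:AlgCP} in Section \ref{sec:Zleaves}.
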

\begin{proof}
If $\beta$ is in $\ell^{1}(\D)$, then the action $\theta: \Z_{fr}\act (\Z^\D, \mbD)$ is nonsingular by Proposition \ref{prop:nonsingular}.
The description of $\varpi$ and the definition of crossed-product von Neumann algebras for nonsingular actions implies the rest of the theorem.
\end{proof}

Next, we prove that the action is ergodic for a large class of choices of $\beta.$
The proof is rather indirect and uses the fact that the action of the group $\oplus_\D \Z\act (\Z^\D,\mbD)$ is ergodic.
Let $\Aut(\Z^\D, \mbD)$ be the group of all nonsingular transformations of the probability measure space $(\Z^\D, \mbD).$
We equipped it with the weak topology also called coarse topology, see \cite{Choksi-Kakutani79}.
It confers to $\Aut(\Z^\D, \mbD)$ a structure of Polish topological group.
A sequence $g_n$ converges to $g$ for the weak topology if the sequence of associated Radon-Nikodym derivatives $\tfrac{d{g_n}_* \mbD}{ d \mbD}$ converges to $\tfrac{d{g}_* \mbD}{ d \mbD}$ inside $L^1(\Z^\D, \mbD)$ and if for any measurable subset $A\subset \Z^\D$ we have that $\lim_{n\to\infty}\mbD( g^{-1}A\Delta g_n^{-1} A)=0$.
This topology makes strongly continuous the action of $\Aut(\Z^\D, \mbD)$ on $L^2(\Z^\D, \mbD)$.
We will show that for certain choices of $\beta$ we have that $\oplus_\D\Z$ is in the closure of $\Z_{fr}$ inside $\Aut(\Z^\D, \mbD)$.

\begin{prop}\label{prop:closurZfr}
Consider $\beta:\D\to \R_{>0}$ such that there exists $0< p < 1/2$ such that $\beta$ is in $\ell^p(\D).$
Then the group $\oplus_\D\Z$ is in the closure of $\Z_{fr}$ inside the topological group $\Aut(\Z^\D, \mbD)$.
\end{prop}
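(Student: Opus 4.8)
The plan is to use that $\Aut(\Z^\D,\mbD)$ is a topological group, so that $\overline{\Z_{fr}}$ is a closed subgroup; since $\oplus_\D\Z$ is generated as a group by the elementary translations $\delta_s$ (the function equal to $1$ at $s\in\D$ and $0$ elsewhere), it suffices to show $\delta_s\in\overline{\Z_{fr}}$ for each $s\in\D$. First I would fix $s=a/2^m\in\D$ and, for $N\ge m$, set $g_N:=\mathbf 1_{[s,s+2^{-N})\cap\D}\in\Z_{fr}$, which is legitimate because $[s,s+2^{-N})$ is an s.d.i.\ and $g_N$ is constant on the intervals of the associated s.d.p. Writing $E_N:=(s,s+2^{-N})\cap\D$ and letting $h_N$ denote the translation by $\mathbf 1_{E_N}$, we have $g_N=\delta_s\,h_N$ inside $\Aut(\Z^\D,\mbD)$, the two translations commuting; here $h_N$ is a genuine nonsingular transformation by the summability criterion following Proposition \ref{prop:nonsingular}, since $\sum_{d\in E_N}\beta(d)<\infty$. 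As left translation by the fixed element $\delta_s$ is continuous, it then suffices to prove $h_N\to\mathrm{id}$, for then $g_N=\delta_s h_N\to\delta_s$.

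To prove $h_N\to\mathrm{id}$ I would check the two defining conditions of the coarse topology. For the Radon--Nikodym derivatives I would use the Hellinger affinity $\rho$ of Theorem \ref{theo:Kakutani}: because $(h_N)_*\mbD$ and $\mbD$ are product measures differing only in the coordinates $d\in E_N$, where the factor is shifted by $1$, one has $\rho\bigl((h_N)_*\mbD,\mbD\bigr)=\prod_{d\in E_N}\rho\bigl(1_*m_{\beta(d)},m_{\beta(d)}\bigr)$, hence $-\log\rho\bigl((h_N)_*\mbD,\mbD\bigr)=\sum_{d\in E_N}\bigl(-\log\rho(1_*m_{\beta(d)},m_{\beta(d)})\bigr)$. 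The estimate \eqref{equa:logrho} with $k=1$, together with the elementary bound $-\log\rho(1_*m_b,m_b)\le Cb$ valid for all $b>0$, gives $-\log\rho\bigl((h_N)_*\mbD,\mbD\bigr)\le C\sum_{d\in E_N}\beta(d)$. Since $E_N\downarrow\emptyset$ and $\beta\in\ell^1(\D)$, this tail tends to $0$, so the affinity tends to $1$; via $\tfrac12\bigl\|\tfrac{d(h_N)_*\mbD}{d\mbD}-1\bigr\|_{L^1}=\|(h_N)_*\mbD-\mbD\|_{\mathrm{TV}}\le\sqrt{2(1-\rho)}$ I conclude that the densities converge to $1$ in $L^1(\mbD)$.

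For the second condition I would first treat a cylinder set $A$ depending only on finitely many coordinates $F\subset\D$: once $N$ is large enough that $E_N\cap F=\emptyset$ (possible as $E_N\downarrow\emptyset$), the transformation $h_N$ moves only coordinates outside $F$, so $A\Delta h_N^{-1}A=\emptyset$. For a general measurable $A$ I would approximate it in measure by such a cylinder $A'$ and split $\mbD(A\Delta h_N^{-1}A)$ into the three usual pieces; the cross term $\mbD\bigl(h_N^{-1}(A\Delta A')\bigr)=\int_{A\Delta A'}\tfrac{d(h_N)_*\mbD}{d\mbD}\,d\mbD$ is controlled uniformly in $N$ because the $L^1$-convergence just established makes the family of Radon--Nikodym derivatives uniformly integrable. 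This yields $\mbD(A\Delta h_N^{-1}A)\to0$, hence $h_N\to\mathrm{id}$ and the reduction is complete. \textbf{The main obstacle} is the first condition, namely the $L^1$-control of the densities through the affinity estimate; I expect the argument to go through already under the weaker hypothesis $\beta\in\ell^1(\D)$ rather than $\beta\in\ell^p(\D)$ with $p<1/2$, in agreement with the expectation recorded in the introduction.
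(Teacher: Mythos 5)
Your proof is correct, and while its overall skeleton matches the paper's (reduce to a single Dirac translation $\delta_s$, write a $\Z_{fr}$-element as $\delta_s$ composed with a ``tail'' translation, show the tail tends to the identity in the coarse topology, then invoke that $\overline{\Z_{fr}}$ is a closed subgroup), the key analytic step is handled by a genuinely different argument. The paper proves the $L^1$-convergence of the Radon--Nikodym derivatives by hand: it introduces the cylinder sets $X^{(n)}=\prod_{d\in\D(n)}X_d\times\Z^{\D\setminus\D(n)}$ with $X_d=\{n:|f_{\beta(d)}(n)-1|\le\beta(d)^p\}$, shows $\mbD(X^{(n)})\to1$ via Poisson summation, and controls $\varphi_{g_n}$ uniformly on $X^{(n)}$ --- and both of these steps use $p$-summability with $p<1/2$ in an essential way (summability of $\beta(d)^p$ and of $e^{-\beta(d)^{2p-1}/2}$). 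You instead exploit the product structure directly: the Hellinger affinity factorizes, $-\log\rho\bigl((h_N)_*\mbD,\mbD\bigr)=\sum_{d\in E_N}\bigl(-\log\rho(1_*m_{\beta(d)},m_{\beta(d)})\bigr)\le C\sum_{d\in E_N}\beta(d)$, and the standard total-variation bound $\|\mu-\nu\|_{\mathrm{TV}}\le\sqrt{2(1-\rho(\mu,\nu))}$ converts affinity $\to1$ into $L^1$-convergence of the densities. This buys two things: it is shorter and avoids the delicate cylinder-set estimates, and it only requires $\beta\in\ell^1(\D)$, which is exactly the weakening that the paper conjectures to be the right hypothesis in its introduction (note that $\ell^p\subset\ell^1$ for $p<1$, so your argument covers the stated proposition). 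Two small points to tidy up: the ``elementary bound'' $-\log\rho(1_*m_b,m_b)\le Cb$ for all $b>0$ deserves a line of proof (it follows from \eqref{equa:logrho} for small $b$ and from $\rho(1_*m_b,m_b)\ge Z_b^{-1}e^{-b/8}\cdot 2e^{-b/8}$ for large $b$; alternatively, since $\beta\in\ell^1$ the set $\{d:\beta(d)>b_0\}$ is finite, so for large $N$ every $d\in E_N$ lies in the asymptotic regime and only \eqref{equa:logrho} is needed); and in the second condition it is cleaner to bound the cross term by $\mbD(A\Delta A')+\|\varphi_{h_N}-1\|_{L^1}$ rather than appeal to uniform integrability, though both work. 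Your handling of the measurable-set condition is otherwise the same as the paper's.
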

\begin{proof}
Consider $\beta$ and $0<p<1/2$ satisfying that $\beta\in\ell^p(\D).$
Let $\tilde g_n\in\Z_{fr}$ be the element defined as $\tilde g_n(d) = 1$ if $0\leq d < 1/2^n$ and $0$ otherwise.
We want to show that $\tilde g_n$ converges weakly to $g\in\oplus_\D\Z$ that is the function $g(d) =1$ if $d=0$ and $0$ otherwise.
Note that $\oplus_\D \Z$ is obviously a subgroup of $\Aut(\Z^\D, \mbD)$ since any of its element will only change finitely many of the measures $m_{\beta(d)}$ into an equivalent one.
Note that $\tilde g_n$ tends to $g$ if and only if $g_n$ tends to the identity where $g_n(d)=1$ if $1<d<1/2^n$ and $0$ otherwise.
Denote by $\varphi_\beta$ the Radon-Nikodym derivative $\tfrac{d{g_n}_* \mbD}{ d \mbD}$ and note that 
$$\varphi_\beta=\prod_{d\in\D: 1<d< 1/2^n} f_{\beta(d)}, \text{ where } f_{\beta(d)} =  \dfrac{d1_* m_{\beta(d)}}{ d m_{\beta(d)}}.$$

Observe that $f_b(n) = e^{(2n-1)b/2}$ for any $b>0$ and $n\in\Z$ and define the subset
$$X_d:=\{ n\in\Z :\ | f_{\beta(d)}(n) -1 | \leq \beta(d)^p\}.$$
We notice that, up to the endpoints, $X_d$ is equal to the interval 
$$( 1/2 + \dfrac{\log(1 - \beta(d)^p)}{\beta(d)} , 1/2 + \dfrac{\log(1 +\beta(d)^p)}{\beta(d)} )$$ that is roughly equal to $(-\beta(d)^{p-1}, \beta(d)^{p-1})$ when $d\to\infty$ (for the Fr\'echet filter on $\D$).
Classical formula on Poisson summations provides that $m_b(-N,N)\sim \sqrt{1-e^{-N^2b/2}}$ when $b\to 0, N\to \infty.$
This implies that $$-\log( m_{\beta(d)} ( X_d ) ) \sim_{d\to \infty} \dfrac{ e^{-\beta(d)^{2p-1}/2 } }{2}.$$
Write $\D(n)$ the set of dyadic rational inside $(0,1/2^n)$ and consider the cylinder 
$$X^{(n)}=\prod_{ d\in\D(n)} X_d \times \Z^{\D\setminus \D(n)}.$$
We have that $-\log(X^{(n)}) = \sum_{d\in\D(n)} -\log( m_{\beta(d)} ( X_d ) ) $ whose coefficient is equivalent at infinity to $\tfrac{ e^{-\beta(d)^{2p-1}/2 } }{2}$ and is thus convergent since $\beta(d)^{1-2p}$ is in some $\ell^q(\D)$ spaces.
This implies that $\lim_{n\to\infty} -\log(\mbD(X^{(n)})) =0$ and thus $\lim_{n\to\infty} \mbD(X^{(n)})=1.$
If we shift $X^{(n)}$ by one unit we also obtain that its measure tends to one.

Observe that 
\begin{align*}
\Vert \varphi_{g_n} - 1 \Vert_1 & \leq \int_{X^{(n)}} |\varphi_{g_n} - 1| dm_\beta^\D + \int_{\Z^\D\setminus X^{(n)} } |\varphi_{g_n} - 1| dm_\beta^\D \\
& \leq \int_{X^{(n)}} |\varphi_{g_n} - 1| dm_\beta^\D  + {g_n}_* m_\beta^\D(\Z^\D\setminus X^{(n)}) + m_\beta^\D(\Z^\D\setminus X^{(n)}) \\
& \leq \sup_{ X^{ ( n ) } } |\varphi_{g_n} - 1|  + {g_n}_* m_\beta^\D(\Z^\D\setminus X^{(n)}) + m_\beta^\D(\Z^\D\setminus X^{(n)}). \\
\end{align*}
By our previous discussion we have that the two right terms tend to zero.
By definition we have that if $x\in X^{(n)}$, then 
$\log(\varphi_{g_n}(x)) = \sum_{d\in\D(n)} \log(f_{\beta(d)}(x(d))) \leq  \sum_{d\in\D(n)} \log( 1 + \beta(d)^p).$
We have a similar lower bound that is $\sum_{d\in\D(n)} \log( 1 - \beta(d)^p)$.
Note that at infinity the principal term of the series is equivalent to $\beta(d)^p$ that is summable implying that $\log(\varphi_{g_n}(x))$ converges uniformly to $0$ when $n$ tends to infinity.
Therefore, $\lim_{n\to\infty} \sup_{ X^{ ( n ) } } |\varphi_{g_n} - 1| =0$ implying that $\lim_{n\to\infty}\Vert \varphi_{g_n} - 1 \Vert_1=0.$

Consider a measurable subset $A\subset \Z^\D$ and $\varep>0.$
Recall that an elementary cylinder is of the form $\prod_{d\in\D} C_d$ with $C_d=\Z$ for all but finitely many $d$ and that the $\sigma$-algebra of $\Z^\D$ is generated by them.
There exists an elementary cylinder $C$ satisfying that $\mbD(A\Delta C)<\varep.$
Observe that for $n$ large enough we have that $g_nC=C$ since the support of $g_n$ will eventually be contained in the set of $d$ satisfying $C_d=\Z$.
Observe that 
$$\mbD(A\Delta g_n^{-1}A)\leq \mbD(A\Delta C)+\mbD(C\Delta g_n^{-1}C)+\mbD(g^{-1}_nC\Delta g_n^{-1}A)\leq \varep + \mbD(g^{-1}_n(C\Delta A))$$
for $n$ large enough.
Moreover, $\mbD(g^{-1}_n(C\Delta A)) = \int_{C\Delta A} \varphi_{g_n} d \mbD$ and this quantity tends to $\mbD(C\Delta A)$ since the Radon-Nikodym derivative $\varphi_{g_n}$ tends to one in $L^1(\Z^\D,\mbD)$ implying that $\mbD(g^{-1}_n(C\Delta A))\leq \varep+\mbD(C\Delta A)\leq 2\varep$ for $n$ large enough.
We obtain that $\lim_{n\to\infty} \mbD(A\Delta g_n^{-1}A)=0$ and thus $g_n$ tends to the identity $e$ inside $\Aut(\Z^\D, \mbD)$.
Since any $\tilde g_n$ is in $\Z_{fr}$ we get that $g$ belongs to the closure $\overline{\Z_{fr}}.$
Given any other Dirac element $g_{d_0}$ at another dyadic rational $d_0$ (i.e.~$g_{d_0}(d_0)=1$ and $g_{d_0}(d)=0,\ \forall d\neq d_0$) we can adapt the proof of above to obtain that $g_{d_0}\in\overline{\Z_{fr}}.$
Since $\overline{\Z_{fr}}$ is a group and since $\oplus_\D\Z$ is generated by those Dirac elements we obtain that $\oplus_\D\Z$ is contained inside the closure $\overline{\Z_{fr}}$.
\end{proof}

We deduce the ergodicity of our action.

\begin{corollary}\label{coro:ergodicZfr}
If $\beta:\D\to \R_{>0}$ is $p$-summable for some $0< p < 1/2$, then the action $\theta:\Z_{fr}\act (\Z^\D,\mbD)$ is ergodic.
\end{corollary}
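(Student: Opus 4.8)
The plan is to deduce ergodicity of $\theta:\Z_{fr}\act(\Z^\D,\mbD)$ from the ergodicity of the \emph{larger} group $\oplus_\D\Z$ together with Proposition \ref{prop:closurZfr}, which places $\oplus_\D\Z$ inside the closure $\overline{\Z_{fr}}$ computed in $\Aut(\Z^\D,\mbD)$. The whole argument rests on the soft observation that an invariant set cannot distinguish a group of transformations from its closure in the weak (coarse) topology; the genuine analytic work has already been carried out in Proposition \ref{prop:closurZfr}.

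First I would record the ergodicity of $\oplus_\D\Z\act(\Z^\D,\mbD)$. Enumerate $\D=\{d_1,d_2,\dots\}$ and put $F_n=\{d_1,\dots,d_n\}$. Let $A\subset\Z^\D$ be $\oplus_\D\Z$-invariant and fix $n$. Writing $\mbD=\mu_n\ot\nu_n$, with $\mu_n=\ot_{d\in F_n}m_{\beta(d)}$ on $\Z^{F_n}$ and $\nu_n$ the product over the remaining coordinates, invariance under the subgroup $\oplus_{F_n}\Z$ forces, for $\nu_n$-almost every tail $b$, the section $A_b\subset\Z^{F_n}$ to satisfy $g+A_b=A_b$ for every $g\in\Z^{F_n}$; here I use that each $m_{\beta(d)}$ is strictly positive, so equality modulo $\mu_n$-null sets is literal equality of subsets of $\Z^{F_n}$. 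Since the translation action of $\Z^{F_n}$ on itself is transitive, $A_b$ is $\emptyset$ or $\Z^{F_n}$, whence $\mathbf 1_A$ agrees a.e.\ with a function of the tail coordinates $(x(d))_{d\notin F_n}$ alone. As $n$ is arbitrary, $A$ lies in the tail $\sigma$-algebra $\bigcap_n\sigma(x(d):d\notin F_n)$, and Kolmogorov's zero-one law for the product measure $\mbD$ gives $\mbD(A)\in\{0,1\}$.

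Next I would transfer this to $\Z_{fr}$. For a measurable $A$ set $\mathrm{Stab}(A):=\{g\in\Aut(\Z^\D,\mbD):\mbD(g^{-1}A\,\Delta\,A)=0\}$, a subgroup of $\Aut(\Z^\D,\mbD)$. I claim it is closed in the weak topology: if $g_n\in\mathrm{Stab}(A)$ and $g_n\to g$, then by the very definition of that topology $\mbD(g^{-1}A\,\Delta\,g_n^{-1}A)\to 0$, whence
$$\mbD(g^{-1}A\,\Delta\,A)\le \mbD(g^{-1}A\,\Delta\,g_n^{-1}A)+\mbD(g_n^{-1}A\,\Delta\,A)\longrightarrow 0,$$
so $g\in\mathrm{Stab}(A)$. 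Now suppose $A$ is $\theta(\Z_{fr})$-invariant, i.e.\ $\Z_{fr}\subset\mathrm{Stab}(A)$. Closedness gives $\overline{\Z_{fr}}\subset\mathrm{Stab}(A)$, and Proposition \ref{prop:closurZfr} yields $\oplus_\D\Z\subset\overline{\Z_{fr}}\subset\mathrm{Stab}(A)$. Hence $A$ is $\oplus_\D\Z$-invariant, and the previous paragraph forces $\mbD(A)\in\{0,1\}$. This is exactly ergodicity of $\theta:\Z_{fr}\act(\Z^\D,\mbD)$.

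I do not expect a genuine obstacle, since the substantive content was spent in Proposition \ref{prop:closurZfr} and the corollary is merely its harvest. The only point demanding care is the closedness of $\mathrm{Stab}(A)$, which must be checked against the precise definition of the weak topology recalled before Proposition \ref{prop:closurZfr}: it is exactly the clause ``$\mbD(g^{-1}A\,\Delta\,g_n^{-1}A)\to 0$ for every measurable $A$'' that renders $g\mapsto g^{-1}A$ continuous for the symmetric-difference pseudometric, and hence $\mathrm{Stab}(A)$ closed. Notably the Radon–Nikodym clause of that topology is not needed here, so the hypothesis $\beta\in\ell^p(\D)$ enters only through Proposition \ref{prop:closurZfr}.
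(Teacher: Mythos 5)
Your proof is correct, and its second half coincides with the paper's: the passage from $\Z_{fr}$-invariance to $\oplus_\D\Z$-invariance via closedness of $\mathrm{Stab}(A)$ in the weak topology together with Proposition \ref{prop:closurZfr} is precisely what the paper does, except that you spell out the continuity estimate $\mbD(g^{-1}A\,\Delta\,A)\le \mbD(g^{-1}A\,\Delta\,g_n^{-1}A)+\mbD(g_n^{-1}A\,\Delta\,A)$ which the paper compresses into the phrase ``by continuity''. Where you genuinely diverge is in the first half, the ergodicity of $\oplus_\D\Z\act(\Z^\D,\mbD)$. The paper proves this operator-algebraically: it identifies the crossed product $N_\beta:=L^\infty(\Z^\D,\mbD)\rtimes\oplus_\D\Z$ with the infinite tensor product $\bigotimes_{d\in\D}\bigl(B(\ell^2(\Z)),\varphi_d\bigr)$, concludes from \cite[Chapter XIV, Corollary 1.10]{TakesakiIII} that this is a factor, and deduces ergodicity from factoriality of the crossed product by a free action. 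You instead argue probabilistically: invariance under the finite-coordinate subgroups $\oplus_{F_n}\Z$, strict positivity of each $m_{\beta(d)}$ (so that null sections of $\Z^{F_n}$ are empty) and transitivity of translation force $\mathbf{1}_A$ to agree a.e.\ with a function of the coordinates outside $F_n$, and Kolmogorov's zero-one law finishes. Your route is more elementary and self-contained, and it makes transparent that this step uses nothing about $\beta$ beyond strict positivity of the measures; the paper's route has the advantage that the same object $N_\beta$ and the same infinite-tensor-product technology are reused immediately afterwards in the type III analysis (Theorem \ref{theo:type}), so the factoriality statement is not wasted effort. One point you should phrase more carefully: from ``for every $n$, $A$ agrees a.e.\ with a set in $\mathcal{T}_n:=\sigma(x(d):d\notin F_n)$'' you conclude that $A$ ``lies in the tail $\sigma$-algebra'', but strictly speaking $A$ lies in the intersection of the $\mbD$-completions of the $\mathcal{T}_n$, which is not formally the completion of $\bigcap_n\mathcal{T}_n$; the standard repair is to note that $\mathbf{1}_A=E[\mathbf{1}_A\mid\mathcal{T}_n]$ a.e.\ for every $n$ and apply reverse martingale convergence before invoking the zero-one law. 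This is routine, so it is a matter of precision rather than a gap.
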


\begin{proof}
We start by showing that the action of $\oplus_\D \Z$ is ergodic.
Consider the crossed-product von Neumann algebra $N_\beta:= L^\infty(\Z^\D, \mbD)\rtimes \oplus_\D\Z$ and observe that it is isomorphic to the infinite tensor product $$\bigotimes_{d\in\D} (L^\infty(\Z , m_{\beta(d)})\rtimes \Z , \varphi_d) = \bigotimes_{d\in\D} (B(\ell^2(\Z)) , \varphi_d),$$
where $\varphi_d$ is the state defined as 
$$\varphi_d(x) = \sum_{k\in\Z} \langle x \delta_k,\delta_k\rangle m_{\beta(d)}(\{k\}), \ x\in B(\ell^2(\Z)).$$
By \cite[XIV, Corollary 1.10]{TakesakiIII} we have that $N_\beta$ is a factor since it is the GNS completion of a tensor product of factors w.r.t.~to a tensor product state.
This implies that the action $\oplus_\D\Z\act (\Z^\D,\mbD)$ is necessarily ergodic.
 
Consider $X\subset \Z^\D$ a measurable subset that is invariant under the action of $\Z_{fr}$.
By continuity, it is also invariant for the closure $\overline{\Z_{fr}}$ inside $\Aut(\Z^\D, \mbD)$ and thus, by Proposition \ref{prop:closurZfr}, is invariant under the action of $\oplus_\D \Z$.
Since this group acts in an ergodic way we obtain that $X$ has measure one or zero implying that the action of $\Z_{fr}$ is ergodic.
\end{proof}

The next theorem described the structure of $\scrM$ that is radically different from the case where $\beta$ is constant, see Theorem \ref{theo:singlemes}.

\begin{theo}\label{theo:type}
Consider $\beta:\D\to\R_{>0}$ that is in $\ell^{p}(\D)$ for some $0<p<1/2$ and the action $\theta:\Z_{fr}\act (\Z^\D , \mbD).$
Then $\theta$ is nonsingular, free, ergodic and of type III.

In particular, the von Neumann algebra $\scrM$ is a hyperfinite type III factor.
\end{theo}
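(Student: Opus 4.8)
The plan is to verify the four asserted properties of $\theta:\Z_{fr}\act(\Z^\D,\mbD)$ one at a time and then read off the structure of $\scrM$. First, since $0<p<1/2<1$ we have $\ell^p(\D)\subset\ell^1(\D)$, so $\beta$ is summable and Proposition \ref{prop:nonsingular} gives that $\theta$ is nonsingular. Freeness is immediate because $\theta$ is a translation action: the fixed-point set of any $g\neq 0$ is $\{x:g(d)+x(d)=x(d)\ \forall d\}=\emptyset$. Ergodicity is exactly Corollary \ref{coro:ergodicZfr}, which uses the full strength $\beta\in\ell^p(\D)$, $0<p<1/2$, through Proposition \ref{prop:closurZfr}. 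Together these three already imply that $\scrM=L^\infty(\Z^\D,\mbD)\rtimes\Z_{fr}$ is a factor.

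The heart of the matter is to show that $\theta$ is of type III, i.e.~that $\mbD$ admits no equivalent $\sigma$-finite $\Z_{fr}$-invariant measure. I would argue by contradiction. Suppose $\nu\sim\mbD$ is $\sigma$-finite and $\Z_{fr}$-invariant. Since the action on $L^2(\Z^\D,\mbD)$ is strongly continuous for the coarse topology on $\Aut(\Z^\D,\mbD)$ and $\nu\ll\mbD$, invariance of $\nu$ is preserved under passage to the closure $\overline{\Z_{fr}}$; by Proposition \ref{prop:closurZfr} this closure contains $\oplus_{d\in\D}\Z$, and in particular all single-coordinate translations $e_d$ (which do \emph{not} themselves lie in $\Z_{fr}$, which is precisely why the closure is needed). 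Writing $\phi:=d\nu/d\mbD$, invariance under each $e_d$ forces the coboundary relation $\log\phi(x+e_d)-\log\phi(x)=\tfrac{\beta(d)}{2}\bigl(2x(d)+1\bigr)$, which identifies the increment of $\log\phi$ in the $d$-th coordinate with that of $x\mapsto\tfrac{\beta(d)}{2}x(d)^2$. Hence $\log\phi(x)-\tfrac12\sum_d\beta(d)x(d)^2$ depends on no single coordinate, so it is measurable for the tail $\sigma$-algebra and therefore $\mbD$-a.e.~constant by Kolmogorov's zero--one law.

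The obstacle, and the key estimate, is that $\sum_{d\in\D}\beta(d)x(d)^2=+\infty$ for $\mbD$-a.e.~$x$, which makes $\phi$ vanish a.e.~and contradicts $\nu\sim\mbD$. To see the divergence, note that $\beta\in\ell^p(\D)$ forces $\beta(d)\to 0$, and for the discrete Gaussian $m_{\beta(d)}$ one has $\sum_{n\in\Z}n^2\,m_{\beta(d)}(\{n\})\asymp\beta(d)^{-1}$ as $\beta(d)\to 0$ (by the Poisson-summation estimates used in the proof of Proposition \ref{prop:nonsingular}); thus $\beta(d)x(d)^2$ has expectation bounded below and $\mbD\bigl(\beta(d)x(d)^2>\tfrac12\bigr)\geq c>0$ for all but finitely many $d$. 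Since the coordinates are $\mbD$-independent, the second Borel--Cantelli lemma yields $\beta(d)x(d)^2>\tfrac12$ for infinitely many $d$ a.e., and the series diverges. Notably this step requires only $\beta(d)\to 0$ (and $\beta\in\ell^1(\D)$ for nonsingularity), consistent with the paper's remark that $\ell^p$-summability is presumably not optimal for type III. I expect this a.e.-divergence, together with the continuity transfer to the closure, to be the only delicate point. As an alternative to the cocycle computation one may invoke Araki--Woods theory: $L^\infty(\Z^\D,\mbD)\rtimes\oplus_{d\in\D}\Z\cong\bigotimes_{d}(B(\ell^2(\Z)),\varphi_d)$ from the proof of Corollary \ref{coro:ergodicZfr} is an ITPFI factor whose heat-kernel states are not summably close to a trace, hence type III, and one transfers this to $\Z_{fr}$ via the same closure argument (the ratio set being a closure invariant).

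Finally I would assemble the conclusion. Freeness, ergodicity and nonsingularity make $\scrM$ a factor, and since the type of the crossed product of a free ergodic nonsingular action coincides with the type of the action, $\scrM$ is type III. Hyperfiniteness follows because $\Z_{fr}$ is abelian, hence amenable, and $L^\infty(\Z^\D,\mbD)$ is injective, so $\scrM$ is injective and therefore hyperfinite by Connes' theorem. This gives that $\scrM$ is a hyperfinite type III factor, leaving the precise value of $\lambda$ undetermined as the paper notes.
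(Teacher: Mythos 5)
Your handling of nonsingularity ($\ell^p(\D)\subset\ell^1(\D)$ plus Proposition \ref{prop:nonsingular}), freeness, ergodicity (Corollary \ref{coro:ergodicZfr}), factoriality and hyperfiniteness agrees with the paper, and your transfer-to-the-closure step (an equivalent $\Z_{fr}$-invariant measure is invariant under $\overline{\Z_{fr}}\supset\oplus_\D\Z$ by Proposition \ref{prop:closurZfr}) is exactly the paper's. The genuine gap is in your primary argument for type III. From the correct coboundary relation $\log\phi(x+e_d)-\log\phi(x)=\tfrac{\beta(d)}{2}(2x(d)+1)$ you form $F(x)=\log\phi(x)-\tfrac12\sum_{d}\beta(d)x(d)^2$ and invoke Kolmogorov's zero--one law; but, as you yourself then prove by Borel--Cantelli, $\sum_{d}\beta(d)x(d)^2=+\infty$ for $\mbD$-a.e.\ $x$, so $F$ is a.e.\ equal to $-\infty$. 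An extended-real-valued tail function that is a.e.\ $-\infty$ \emph{is} a.e.\ constant, and this is perfectly compatible with $\phi$ being positive and finite a.e.; nothing forces $\phi$ to vanish. (Your conclusion would follow only if the constant were finite, which is exactly what fails.) Put differently: for each finite $E\subset\D$ the relation shows that $H_E:=\phi\,e^{-\frac12\sum_{d\in E}\beta(d)x(d)^2}$ is a.e.\ a function of the coordinates outside $E$; letting $E\uparrow\D$ one merely finds $H_E\downarrow 0$ a.e., which is no contradiction. The divergence of the series rules out an invariant density of the exact product form $e^{c+\frac12\sum_d\beta(d)x(d)^2}$, but an invariant density need not have this form, so a quantitative input is still required to exclude an equivalent invariant measure.

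That quantitative input is precisely what the paper supplies, and what your closing ``alternative'' gestures at without executing. The paper argues: if the $\Z_{fr}$-action were of type I or II, ergodicity would give an equivalent $\sigma$-finite invariant measure, which by the closure argument is $\oplus_\D\Z$-invariant; hence the ITPFI factor $N_\beta=L^\infty(\Z^\D,\mbD)\rtimes\oplus_\D\Z\cong\bigotimes_{d\in\D}\bigl(B(\ell^2(\Z)),\varphi_d\bigr)$ would be semifinite. By Takesaki's (Araki--Woods) criterion, semifiniteness forces $\sum_{d\in\D}\bigl(1-|\Tr(h_d^{1+it})|\bigr)<\infty$ for every $t\in\R$, and the Jacobi--Poisson formula gives $|\Tr(h_d^{1+i})|\to|1+i|^{-1/2}=2^{-1/4}\neq 1$ as $d\to\infty$, so the series at $t=1$ diverges --- a contradiction. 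Your phrase ``heat-kernel states are not summably close to a trace'' is exactly this estimate; it is the essential point of the whole theorem and cannot be asserted without the theta-function computation. So your proposal is correct in outline everywhere except at the crucial type III step, where the zero--one-law argument fails and must be replaced by (or completed with) the Araki--Woods criterion together with the asymptotic evaluation of $\Tr(h_d^{1+i})$.
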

\begin{proof}
The action is nonsingular by Proposition \ref{prop:nonsingular}, {ergodic by Corollary \ref{coro:ergodicZfr}} and is obviously free by definition.

Assume that the action is of type I or II.
Since the action is ergodic, there exists a measure $\mu$ in the measure class of $\mbD$ that is $\Z_{fr}$-invariant.
By continuity, this measure is invariant for the action of the closure $\overline{\Z_{fr}}$ and in particular is $\oplus_\D\Z$-invariant by Proposition \ref{prop:closurZfr}.
As in the proof of Corollary \ref{coro:ergodicZfr} we consider the factor $N_\beta:=L^\infty(\Z^\D, \mbD)\rtimes \oplus_\D\Z$ that is isomorphic to an infinite tensor product of type I factor equipped with states $\varphi_d$.
This factor is semi-finite since $\mbD$ is equivalent to a $\oplus_\D \Z$-invariant measure.
Denote by $\Tr(x):= \sum_{k\in\Z}  \langle x \delta_k,\delta_k\rangle$ the normal faithful tracial weight of $B(\ell^2(\Z))$ and observe that $\varphi_d(x) = \Tr( x h_d)$ where $h_d = \sum_{k\in\Z} m_{\beta(d)}(\{k\}) e_{k,k}$ and where $(e_{k,l},\ k,l\in\Z)$ is the classical system of matrix units of $B(\ell^2(\Z))$.
Moreover, by \cite[XIV, Theorem 1.14]{TakesakiIII}, this factor is semi-finite if and only if 
$$S(t):=\sum_{d\in\D} ( 1 - | \Tr( h_d^{1+it}) | ) <\infty \text{ for any } t\in\R.$$
We will show this series diverges for $t=1$. 
Observe that $$\Tr(h_d^{1+i}) = \dfrac{\sum_{k\in\Z}e^{-k^2 \beta(d) (1+i)/2}  }{Z_{\beta(d)}^{1+i}} = \dfrac{\sum_{k\in\Z}e^{-k^2 \beta(d) (1+i)/2}  }{(\sum_{\ell\in\Z}e^{-\ell^2 \beta(d)/2})^{1+i}}.$$
The Jacobi-Poisson formula \eqref{equa:theta} implies that 
$$\left( \sum_{k\in\Z} e^{-k^2 z } \right)^2 \sim_{z\to 0} \dfrac{\pi}{z}$$
for a complex number $z$ with strictly positive real part.
We obtain that when $d\in\D$ goes to infinity (and thus $\beta(d)$ tends to zero) we have that
$$|\Tr(h_d^{1+i})|=|\dfrac{\sum_{k\in\Z}e^{-k^2 \beta(d) (1+i)/2}  }{\sum_{\ell\in\Z}e^{-\ell^2 \beta(d)/2} }| \to_{d\to\infty} |\dfrac{1}{1+i}|^{1/2} \neq 1$$ implying that the series $S(1)$ diverges.
Therefore, $N_\beta$ is not semi-finite, a contradiction.
Therefore, $\mbD$ does not admit any equivalent $\Z_{fr}$-invariant measure and thus the action $\theta$ is of type III.
Since $\scrM$ is isomorphic to the crossed-product $L^\infty(\Z^\D,\mbD)\rtimes \Z_{fr}$ we obtain that $\scrM$ is a type III factor.
Moreover, it is hyperfinite since $\Z_{fr}$ is abelian.
\end{proof}

We provide a class of geometric examples of $\beta$ satisfying the hypothesis of the last theorem.

\begin{example}\label{ex:beta}
Consider $\beta:=\beta_\tau(d)= (d'-d)^\tau$ where $\tau>2$ is fixed and $(d,d')$ is the largest s.d.i.~starting at $d.$
We claim that this choice of $\beta$ provides a nonsingular free ergodic action $\alpha_\tau:\Z_{fr}\act (\Z^\D,m_\beta^\D)$ of type III.
To prove it it is sufficient to show that $\beta$ is in $\ell^{p}(\D)$ for some $0<p<1/2$ by Theorem \ref{theo:type}.
Write $\D_n:=\{ d\in \D: \beta_{1}(d)=2^{-n}\}$ and observe that $| D_n | = 2^{n-1}$ if $n\geq 1.$
Since $\tau>2$, there exists $0<p<1/2$ such that $p \tau>1.$
We obtain 
\begin{align*}
\sum_{d\in\D} \beta(d)^p & = \sum_{n=0}^\infty \sum_{d\in \D_n} \beta(d)^p \\
& = 1 +  \sum_{n=1}^\infty |D_n| \left(\dfrac{1}{2^n}\right)^{ p \tau}\\
& = 1+ \sum_{n=1}^\infty \dfrac{ 2^{ n-1 } }{ 2^{ n p \tau } } = 1 + 1/2 \sum_{n=1}^\infty (2^n)^{1- p \tau}\\
\end{align*}
that is summable since $p \tau>1$.
\end{example}

\begin{remark}
Theorem \ref{theo:type} provides a type III factor $(\scrM,\varpi)$ equipped with a faithful state.
By Tomita-Takesaki theory we obtain a nontrivial modular action $\R\act \scrM$ that we interpret as a time-evolution on our continuum limit algebra $\scrM$.
\end{remark}

The next proposition shows that the choice of $\beta_\tau$ above gives a measure $\mbD$ that is not quasi-invariant under the action of Thompson's groups.

\begin{prop}\label{prop:SingularF}
Consider $\beta:=\beta_\tau(d)= (d'-d)^\tau$ where $\tau>1$ is fixed and $(d,d')$ is the largest s.d.i.~starting at $d.$
Let $\mbD:=\ot_{d\in\D} m_{\beta(d)}$ be the associated measure on $\Z^\D.$
Then the generalized Bernoulli action of Thompson's group $F$
$$\kappa: F\act \Z^\D, \kappa_v (x)(d) = x(v^{-1} d),\ v\in V, x\in\Z^\D, d\in \D$$
is singular w.r.t.~the measure $\mbD.$

In particular, the Jones' action $\alpha:F\act \scrM_0$ does not extends to an action on the von Neumann algebra $\scrM$.
\end{prop}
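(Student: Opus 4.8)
The plan is to produce a single nontrivial $v\in F$ for which the pushforward measure $\kappa_{v*}\mbD$ is mutually singular with $\mbD$; since a nonsingular action would require \emph{every} element to preserve the measure class, this already shows that $\kappa:F\act\Z^\D$ fails to be nonsingular. Observe first that $\kappa_v$ merely permutes coordinates, so $\kappa_{v*}\mbD=\ot_{d\in\D}m_{\beta(v^{-1}d)}$, and that for each $d$ the two heat-kernel measures $m_{\beta(v^{-1}d)}$ and $m_{\beta(d)}$ are equivalent (both are fully supported on $\Z$). Kakutani's theorem (Theorem \ref{theo:Kakutani}) therefore applies, and it suffices to show that $-\sum_{d\in\D}\log\rho\big(m_{\beta(v^{-1}d)},m_{\beta(d)}\big)=+\infty$.

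Next I would track how $\beta$ transforms under $v^{-1}$. Writing $\ell(d)$ for the length of the largest s.d.i.\ starting at $d$, one has $\beta(d)=\ell(d)^\tau$ and $\ell(d)=2^{-n}$, where $2^n$ is the exact denominator of $d$ (i.e.\ $d=a/2^n$ with $a$ odd). Any non-identity $v\in F$ is affine with slope $2^{k}$, $k\neq0$, on some maximal interval $[p,q]$ with dyadic endpoints; hence $v^{-1}$ is affine of slope $2^{-k}$ on $(v(p),v(q))$, say $v^{-1}(x)=2^{-k}x+c$ with $c$ dyadic. A $2$-adic valuation count then shows that for every dyadic $d\in(v(p),v(q))$ of denominator $2^{n}$ with $n$ large enough (so that $c$ becomes even after clearing denominators), the exact denominator of $v^{-1}d$ is $2^{n+k}$, whence $\ell(v^{-1}d)=2^{-k}\ell(d)$ and
\begin{equation*}
\beta(v^{-1}d)=2^{-k\tau}\beta(d).
\end{equation*}
There are infinitely many such ``good'' $d$, and along them $\beta(d)=2^{-n\tau}\to0$.

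It remains to estimate the Hellinger affinity along these good coordinates. A direct computation gives $\rho(m_b,m_{b'})=Z_{(b+b')/2}\,(Z_bZ_{b'})^{-1/2}$, and the Jacobi--Poisson asymptotic $Z_b\sim\sqrt{2\pi/b}$ used in the proof of Proposition \ref{prop:nonsingular} yields, with $b'=\lambda b$ and $b\to0$,
\begin{equation*}
\rho(m_b,m_{\lambda b})\longrightarrow \frac{\sqrt2\,\lambda^{1/4}}{\sqrt{1+\lambda}},
\end{equation*}
which is $<1$ precisely when $\lambda\neq1$ (since $1+\lambda\ge 2\sqrt\lambda$, with equality only at $\lambda=1$). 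Taking $\lambda=2^{-k\tau}\neq1$, the terms $-\log\rho\big(m_{\beta(v^{-1}d)},m_{\beta(d)}\big)$ are bounded below by a fixed positive constant along the infinitely many good $d$, so the sum diverges; the remaining coordinates only contribute non-negative terms. By Kakutani's theorem $\kappa_{v*}\mbD$ and $\mbD$ are singular, proving the first assertion.

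Finally, for the non-extension statement I would argue by contradiction. Since $\tau>1$ gives $\beta\in\ell^{1}(\D)$, the crossed-product description $\scrM\cong L^\infty(\Z^\D,\mbD)\rtimes\Z_{fr}$ is valid, with $L^\infty(\Z^\D,\mbD)$ equal to the weak closure $\scrQ_0''$ of the commutative part. The Jones' action preserves $\scrQ_0$ and acts there by the homeomorphisms $\kappa_v$; if $\al(v)$ extended to a normal automorphism of $\scrM$, weak continuity would force it to restrict to a normal automorphism of $L^\infty(\Z^\D,\mbD)$ implementing $\kappa_v$, hence $\kappa_v$ would preserve the class of $\mbD$ --- contradicting the singularity just established. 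The main obstacle is the scale computation of the second paragraph: pinning down that a single non-unit slope of $v$ forces $\ell(v^{-1}d)=2^{-k}\ell(d)$ for infinitely many $d$, so that the measure is genuinely distorted on an infinite set of coordinates. The measure-theoretic estimates are then routine given the theta-function bounds already recorded in the proof of Proposition \ref{prop:nonsingular}.
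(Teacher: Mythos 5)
Your proof is correct and follows essentially the same route as the paper's: Kakutani's criterion (Theorem \ref{theo:Kakutani}) applied to the product measures, the Jacobi--Poisson asymptotics $Z_b\sim\sqrt{2\pi/b}$ giving $\rho(m_b,m_{\lambda b})\to \sqrt{2}\,\lambda^{1/4}/\sqrt{1+\lambda}<1$ for $\lambda\neq 1$, and the observation that a normal extension of $\al(v)$ would force $\kappa_v$ to preserve the class of $\mbD$ since $\varpi$ restricts to $\mbD$ on $L^\infty(\Z^\D,\mbD)$. The only genuine difference is that you prove singularity for \emph{every} non-identity $v\in F$, using the 2-adic valuation count to show that a slope $2^k\neq 1$ forces $\beta(v^{-1}d)=2^{-k\tau}\beta(d)$ on infinitely many coordinates, whereas the paper exhibits the single element $v(x)=x/2$ on $(0,1/2)$ and sums along $d=2^{-n}$; your stronger statement is precisely what the paper's remark following the proposition claims can be obtained ``by a similar proof.''
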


\begin{proof}
Consider $v\in F$ and observe that ${\kappa_v^{-1}}_*\mbD = \ot_{d\in\D} m_{\beta(vd)}.$
By Kakutani's theorem (see \cite{Kakutani48}) we have that ${\kappa_v^{-1}}_*\mbD$ and $\mbD$ are equivalent (i.e.~are in the same measure class) if and only if $$\sigma^D({\kappa_v^{-1}}_*\mbD , \mbD) := \sum_{d\in \D} \sigma( m_{\beta(vd)} , m_{\beta(d)} ) <\infty,$$
where $$\sigma( m_{\beta(vd)} , m_{\beta(d)} ) = -\log (\rho( m_{\beta(vd)} , m_{\beta(d)} )) = -\log(\sum_{n\in\Z} \sqrt{ m_{ \beta(vd) } ( \{ n \} ) m_{ \beta(vd) } ( \{ n \} ) }).$$
Recall that $m_b\in\Prob(\Z)$ is defined as $m_b(\{n\}) = \tfrac{e^{-n^2 b/2}}{Z_b}$ where $Z_b = \sum_{k\in\Z} e^{-k^2 b/2} $ and that $Z_b\sim_{b\to 0} \sqrt{\tfrac{2\pi}{b}}.$
Moreover, if $a,b\in (0,1)$, then we have that 
$$\rho(m_a,m_b) = \dfrac{Z_{\frac{a+b}{2}}}{\sqrt{Z_aZ_b}} \sim_{a,b\to 0} \left(\dfrac{2\sqrt{ab}}{a+b} \right)^{1/2}.$$
Since $\beta(d)$ tends to zero when $d$ tends to infinity we obtain that 
$$\sigma( m_{\beta(vd)} , m_{\beta(d)} ) \sim_{d\to\infty} -\dfrac{1}{2}\log\left(\dfrac{2\sqrt{\beta(d)\beta(vd)}}{\beta(d)+\beta(vd) }\right).$$

Consider an element $v\in F$ satisfying $v(x)=x/2$ for any $x\in (0,1/2)$ and observe that $\beta(\tfrac{1}{2^n}) = \tfrac{1}{2^{\tau n}}$ for any $n\geq 1.$
Therefore, 
\begin{align*}
\sigma^D({\kappa_v^{-1}}_*\mbD , \mbD) & \geq \sum_{n=2}^\infty  -\log (\rho( m_{\beta(\frac{1}{2^{n+1} } )} , m_{\beta(\frac{1}{2^{n} })} )) \\
& = \sum_{n=2}^\infty  -\log (\rho( m_{\frac{1}{2^{(n+1)\tau} } } , m_{\frac{1}{2^{n \tau} }} ))\\
\end{align*}
Moreover, the term of this series is equivalent when $n$ tends to infinity to 
$$-\dfrac{1}{2}\log\left(\dfrac{2\sqrt{ 2^{-(2n+1)\tau}    } }{ 2^{-n\tau} + 2^{-(n+1)\tau} }\right) = -\dfrac{1}{2} \log\left( \dfrac{2}{2^{\tau/2} +2^{-\tau/2}}\right)>0.$$
This implies that the series diverges and thus the two measures ${\kappa_v^{-1}}_*\mbD$ and $\mbD$ are not equivalent.
Since the state $\varpi$ restricts to $\mbD$ on the subalgebra $L^\infty(\Z^\D,\mbD)$, we obtain that the map $\alpha(v)$ defines as an automorphism of the C*-algebra $\scrM_0$ is not continuous w.r.t.~weak topology of $\scrM$ and thus does not extend in a normal way on it.
\end{proof}

A similar proof shows that any element $v\in V$ for which there exists an interval $I$ such that $\Leb(vI)\neq \Leb(I)$ defines a singular transformation $\kappa_v$ w.r.t.~the measure $m_\beta$ with $\beta$ as above.
Therefore, any elements that do not only permute intervals of same length act in a singular way and fail to provide an automorphism of $\scrM$.

\subsubsection{Action of the rotations}\label{sec:rotact}
Consider the rooted binary complete tree $t_n$ with $2^n$ leaves all at distance $n$ from the root.
Let $r_n=\tfrac{(t_n,1)}{(t_n,0)}$ be the element of $T$ that permutes by one the leaves of the tree $t_n$.
It corresponds to the rotation by angle $2^{-n}$ when $T$ acts on the torus $\R/\Z.$
Write $\Rot_n$ the subgroup of $T$ generated by $r_n$ that is isomorphic to the cyclic group $\Z/ 2^n\Z$ and note that $r_n^2 =r_{n-1}$ implying that $\Rot_{n-1}$ is a subgroup of $\Rot_n.$
Let $\Rot:=\bigcup_{n\geq 1}\Rot_n$ be the union of all those groups that is a subgroup of $T$ and which is isomorphic to $\varinjlim_{n\in\N} \Z/ 2^n\Z$ for the system of inclusions $k+2^n\Z \mapsto 2k+2^{n+1}\Z$.

We wonder if the action of $\Rot$ on $\Z^\D$ is nonsingular for some choice of measure $\mbD$ with $\beta:\D\to\R_{>0}$.  This would provide actions of $\Rot$ on the von Neumann algebra $\scrM$.
We will show that it is the case for a large class of $\beta$.

Consider the function $\ell:\D\to \R, d\mapsto -\log_2(d'-d)$ where $(d,d')$ is the largest s.d.i.~starting at $d$ and $\log_2$ is the logarithm in base 2.
For example, $\ell(0)=0, \ell(1/2)=1, \ell(1/4)=\ell(3/4)=2,$ etc.

Note that the s.d.p associated to $t_n$ is $\{(\tfrac{j}{2^n},\tfrac{j+1}{2^n}) : 0\leq j \leq 2^n-1\}$.
Hence, $r_n$ sends $\tfrac{j}{2^n}$ to $\tfrac{j+1}{2^n}.$
The next lemma concerns the value of $\ell(\tfrac{j}{2^n})$ which will allow us to compare $\ell(\tfrac{j}{2^n})$ and $\ell(r(\tfrac{j}{2^n}))=\ell(\tfrac{j+1}{2^n}).$ 
The proof is a routine computation that is left to the reader.

\begin{lemma}\label{lem:ell-function}
Consider the function $\ell:\D\to \R$ and the $2^n$-tuple 
$$X_n=(\ell(d_1), \ell(d_2),\cdots, \ell(d_{2^n}))$$ where $d_j:=\tfrac{j-1}{2^n}$.
Then $X_{n+1}$ is equal to $$(\ell(d_1) , n+1 , \ell(d_2), n+1 , \cdots, \ell(d_{2^n}), n+1 ).$$
\end{lemma}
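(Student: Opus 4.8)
The plan is to first obtain a closed description of the function $\ell$, and then to read off the interleaving pattern from the way the level-$(n+1)$ dyadic rationals sit among the level-$n$ ones.

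First I would establish the formula for $\ell$: writing $d\in\D$ in lowest terms as $d=\tfrac{a}{2^m}$ (with $a$ odd, or $d=0$ and $m=0$), the largest s.d.i.~starting at $d$ has length $2^{-m}$, so that $\ell(d)=m$; equivalently, $\ell(d)$ is the least integer $k\geq 0$ with $2^k d\in\Z$. Indeed, an s.d.i.~of the form $(\tfrac{b}{2^k},\tfrac{b+1}{2^k})$ starts at $d$ precisely when $2^k d=b\in\Z$, which forces $k\geq m$; the smallest admissible value $k=m$ produces the genuine s.d.i.~$(\tfrac{a}{2^m},\tfrac{a+1}{2^m})\subset[0,1)$, and since the length $2^{-k}$ decreases in $k$, this is the largest one. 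This is the routine computation alluded to in the statement, and it is the only place where the definition of $\ell$ genuinely enters.

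Next I would exploit that the points $\tfrac{j-1}{2^{n+1}}$, $1\leq j\leq 2^{n+1}$, underlying $X_{n+1}$ split according to the parity of the numerator $j-1$. When $j-1=2k$ is even one has $\tfrac{j-1}{2^{n+1}}=\tfrac{k}{2^n}$, which is exactly the $(k+1)$-th point of the level-$n$ partition; since $\ell$ depends only on the rational and not on any chosen presentation, the corresponding entry of $X_{n+1}$ is $\ell(d_{k+1})$ with $d_{k+1}=\tfrac{k}{2^n}$ read at level $n$. When $j-1=2k+1$ is odd the numerator is odd, so $\tfrac{2k+1}{2^{n+1}}$ is already in lowest terms and the first step gives $\ell\big(\tfrac{2k+1}{2^{n+1}}\big)=n+1$.

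Finally, reading the entries of $X_{n+1}$ in increasing order of $j$, the odd positions (those with $j-1$ even) yield $\ell(d_1),\ell(d_2),\dots,\ell(d_{2^n})$ in order, while the even positions (those with $j-1$ odd) each yield the constant $n+1$; interleaving these gives precisely $(\ell(d_1),n+1,\ell(d_2),n+1,\dots,\ell(d_{2^n}),n+1)$, as claimed. The whole argument is elementary and proceeds by inspection of numerator parities; the only step requiring care, and the one I would state explicitly, is the lowest-terms characterization of $\ell$ in the first paragraph, as everything else follows immediately from it.
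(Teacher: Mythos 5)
Your proof is correct: the lowest-terms characterization $\ell(a/2^m)=m$ (equivalently, $\ell(d)$ is the least $k\geq 0$ with $2^k d\in\Z$) is exactly the content of the "routine computation" the paper leaves to the reader, and the parity split of the numerators $j-1$ at level $n+1$ then yields the interleaved tuple immediately. Since the paper omits the proof entirely, your argument is precisely the intended one, with the one step deserving care (that the largest s.d.i.~starting at $d=a/2^m$ has length $2^{-m}$) stated and justified explicitly.
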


This implies the following proposition.

\begin{prop}
If $n\geq 1$, $d\in\D$ and $\ell(d)\neq \ell(r_n(d))$, then $d=\tfrac{j}{2^n}$ for some $j.$ 
In particular, for any rotation $r\in \Rot$ there exists only finitely many $d\in\D$ for which $\ell(d)\neq \ell(r(d)).$
\end{prop}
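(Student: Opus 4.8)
The plan is to first replace the geometric definition of $\ell$ by a purely arithmetic one. Writing a nonzero $d\in\D$ in lowest terms as $d=p/2^m$ with $p$ odd, the largest s.d.i.\ starting at $d$ is exactly the level-$m$ interval $(p/2^m,(p+1)/2^m)$: indeed $d$ is the left endpoint of a level-$k$ s.d.i.\ precisely when $2^k d\in\Z$, which forces $k\ge m$, and at $k=m$ the interval fits inside $[0,1]$ because $p\le 2^m-1$. Hence $\ell(d)=m$; that is, $\ell(d)$ is the $2$-adic denominator exponent $\min\{m\ge 0:\ 2^m d\in\Z\}$ (with $\ell(0)=0$). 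I would record this reformulation first, since it turns the whole statement into elementary parity bookkeeping.

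Next I would prove the first assertion by contraposition. Suppose $d$ is \emph{not} of the form $j/2^n$, equivalently $2^n d\notin\Z$, equivalently $\ell(d)=m>n$. Write $d=p/2^m$ with $p$ odd. Since $r_n$ is the rotation by $2^{-n}$ on the torus, $r_n(d)=\bigl(d+2^{-n}\bigr)\bmod 1=\frac{(p+2^{m-n})\bmod 2^m}{2^m}$. Because $m>n$ the integer $2^{m-n}$ is even, so $p+2^{m-n}$ is odd, and reducing modulo the even number $2^m$ preserves oddness; thus the numerator of $r_n(d)$ is odd and $\ell(r_n(d))=m=\ell(d)$. This shows that $\ell(d)\neq\ell(r_n(d))$ forces $2^n d\in\Z$, i.e.\ $d=j/2^n$.

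For the ``in particular'' statement I would note that every $r\in\Rot$ lies in some $\Rot_n$ and is therefore the rotation by a dyadic angle $k\,2^{-n}$. The cleanest route is to observe that the multiples of $2^{-n}$ form a finite set that $r_n$ cyclically permutes, so writing $r=r_n^k$ and applying the first part at each intermediate point $r_n^i(d)$ gives $\{d:\ \ell(r_n^i(d))\neq\ell(r_n^{i+1}(d))\}\subseteq\{j/2^n:\ 0\le j<2^n\}$; hence $\{d:\ \ell(d)\neq\ell(r(d))\}$ is contained in this same finite set. Alternatively one repeats the parity computation verbatim with $2^{-n}$ replaced by $k\,2^{-n}$, which again changes nothing when $\ell(d)>n$.

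There is no serious obstacle here: the statement is genuinely a bookkeeping exercise once $\ell$ is identified with the denominator exponent. The only points requiring care are the reduction modulo $1$ on the torus---one must verify that the parity argument survives reduction modulo $2^m$---and the passage from the single generator $r_n$ to an arbitrary element of $\Rot$, where $\ell$ could a priori change and change back along the orbit; confining all changes to the finite set of multiples of $2^{-n}$ resolves this.
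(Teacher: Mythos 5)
Your proof is correct, but it takes a genuinely different route from the paper's. The paper stays inside its tree formalism: it writes the complete tree $t_{n+k}$ as $(s_1\ot\cdots\ot s_{2^n})\circ t_n$ with each $s_m$ a copy of $t_k$, identifies $d$ with the $(j+1)$th leaf of the subtree $s_{i+1}$ and $r_n(d)$ with the $(j+1)$th leaf of $s_{i+2}$, and then invokes its Lemma 3.18 (the interleaving structure of the tuple $X_{n+1}$) to conclude that $\ell$ at a non-first leaf does not depend on which subtree it sits in; the case of a general $r=r_m^q\in\Rot$ follows by the same leaf-shifting argument. You instead replace the geometric definition of $\ell$ by the arithmetic characterization $\ell(p/2^m)=m$ for $p$ odd (the $2$-adic denominator exponent), after which the first assertion is a parity computation: for $m>n$ the shift by $2^{-n}$ adds the even number $2^{m-n}$ to an odd numerator and reduction mod $2^m$ preserves oddness, so $\ell$ is unchanged. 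Your reduction of the general case to the generator (either by telescoping $\ell$ along the $r_n$-orbit, using that $r_n$ permutes the finite set of multiples of $2^{-n}$, or by rerunning the parity argument with $k\,2^{-n}$) is also sound. What each approach buys: yours is more elementary and self-contained --- it does not need Lemma 3.18 at all, and the explicit identification of $\ell$ with the denominator exponent is a reusable fact that in effect subsumes that lemma; the paper's argument, by contrast, reuses machinery already established and remains phrased in the forest/leaf language that the surrounding sections (and the definition of $r_n$ as a leaf permutation) are written in.
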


\begin{proof}
Consider $n\geq 1$ and $d\in \D$.
There exists $k\geq 0$,  $0\leq i\leq 2^n -1$ and $0\leq j\leq 2^k-1$ such that $d = \tfrac{i}{2^n} + \tfrac{j}{2^{k+n}}.$
Consider the tree $t_{n+k}$ that we write as the composition $t_{n+k} = (s_1\ot s_2\ot\cdots \ot s_{2^n}) \circ t_n$ where $s_m$ is a copy of $t_k$ for any $1\leq m\leq 2^n.$
Then $d$ corresponds to the $j+1$th leaf of the tree $s_{i+1}$ where we index the trees cyclicly in $\Z/2^n\Z$. 
Observe that $r_n(d)$ corresponds to the $j+1$th leaf of the tree $s_{i+2}$.
Let $Y_q=(y_1^q,\cdots, y_{2^k}^q)$ be the $2^k$-tuple such that $y_m^q=\ell(d_m)$ and $d_m$ corresponds to the $m$th leaf of $s_q$.
Hence, $\ell(d) = y_{j+1}^{i+1}$ and $\ell(r_n(d)) = y_{j+1}^{i+2}.$
Lemma \ref{lem:ell-function} implies that $y_{m}^\iota$ does not depend on $\iota$ if $m\neq 1.$
Therefore, if $\ell(d)\neq \ell(r_n(d))$, then necessarily $d$ is on the first leave of $s_{i+1}$ which means that $d=\tfrac{a}{2^n}$ for some $a\geq 0.$
Consider any rotation $r\in \Rot$. 
There exists $m\geq 1$ and $q\geq 0$ such that $r= r_m^q$.
The first part of the proof shows that if $\ell(d)\neq \ell(r(d))$, then $d=\tfrac{p}{2^m}$ for some $p\geq 0$ and thus there are only finitely many such $d\in \D.$
\end{proof}

This implies the following corollary about nonsingularity.

\begin{corollary}\label{cor:Rot}
Consider a map $\beta:\D\to\R_{>0}$ such that $\beta(d)$ only depends on $\ell(d)$, i.e.~$\beta=b\circ\ell$ for some function $b$.
Let $\mbD$ be the associated probability measure on $\Z^\D$ and consider the action of $V\act (\Z^\D,\mbD)$ that we restrict to the rotation subgroup $\Rot$.

Then the action $\Rot\act (\Z^\D,\mbD)$ is nonsingular and thus the restriction to $\Rot$ of the Jones' action $\Rot\act \scrM_0$ acting on the C*-algebra $\scrM_0$ extends to an action by automorphisms on the von Neumann algebra $\scrM.$

More generally, the action $\Rot\act (\Z^\D, \otimes_{d\in\D} \nu_d)$ is nonsingular if $(\nu_d,\ d\in\D)$ is any family of probability measures that are all mutually equivalent and such that $\nu_d = \nu_{d'}$ if $\ell(d)=\ell(d').$
\end{corollary}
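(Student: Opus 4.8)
The plan is to show first that each element of $\Rot$ acts on $(\Z^\D,\mbD)$ as a nonsingular transformation, and then to upgrade this measure-theoretic statement to a normal action on $\scrM$ by exploiting the crossed-product description $\scrM\cong L^\infty(\Z^\D,\mbD)\rtimes\Z_{fr}$ together with the functoriality of crossed products under nonsingular transformations. The genuine content of nonsingularity has already been isolated by the preceding proposition, which tells us that a rotation alters the value of $\ell$ at only finitely many coordinates; the rest is Kakutani's criterion plus bookkeeping.

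First I would fix $v\in\Rot$ and compute the pushforward of $\mbD$ under the Bernoulli shift $\kappa_v$. Since $\kappa_v(x)(d)=x(v^{-1}d)$, the $d$-th one-dimensional marginal of $\kappa_{v*}\mbD$ is $m_{\beta(v^{-1}d)}$, so $\kappa_{v*}\mbD=\ot_{d\in\D}m_{\beta(v^{-1}d)}$, whereas $\mbD=\ot_{d\in\D}m_{\beta(d)}$. Because $\beta=b\circ\ell$ and every heat-kernel measure $m_{\beta(d)}$ has full support on $\Z$, all these marginals are mutually equivalent, so the hypothesis of Kakutani's theorem (Theorem \ref{theo:Kakutani}) is met: the two infinite products are equivalent precisely when $-\sum_{d\in\D}\log\rho(m_{\beta(v^{-1}d)},m_{\beta(d)})<\infty$.

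The key simplification comes from applying the preceding proposition to the rotation $r=v^{-1}\in\Rot$: there are only finitely many $d\in\D$ with $\ell(d)\neq\ell(v^{-1}d)$. For every other $d$ one has $\beta(v^{-1}d)=b(\ell(v^{-1}d))=b(\ell(d))=\beta(d)$, hence $m_{\beta(v^{-1}d)}=m_{\beta(d)}$ and $\rho=1$, contributing $0$ to the sum. For each of the finitely many exceptional $d$, mutual equivalence of the two full-support marginals gives $\rho\in(0,1]$, so $-\log\rho<\infty$. The series is therefore a finite sum of finite terms, Kakutani yields $\kappa_{v*}\mbD\sim\mbD$, and hence $\Rot\act(\Z^\D,\mbD)$ is nonsingular. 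The ``more generally'' statement is proved verbatim with $m_{\beta(d)}$ replaced by $\nu_d$: mutual equivalence of the $\nu_d$ makes every $\rho(\nu_{v^{-1}d},\nu_d)\in(0,1]$, while $\nu_d=\nu_{d'}$ whenever $\ell(d)=\ell(d')$ forces all but the finitely many exceptional terms to vanish.

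To pass to the von Neumann algebra, I would recall from Proposition \ref{prop:AlgCP} that the Jones' action of $v$ acts on the crossed product by $f\mapsto f\circ\kappa_v^{-1}$ on $L^\infty(\Z^\D,\mbD)$ and by $u_g\mapsto u_{vg}$ on the canonical unitaries, with the covariance relation $\kappa_v\circ\theta_g=\theta_{vg}\circ\kappa_v$ holding for the translation action $\theta:\Z_{fr}\act\Z^\D$ of \eqref{equa:thetaZ}. The nonsingularity of $\kappa_v$ established above guarantees that $f\mapsto f\circ\kappa_v^{-1}$ is a normal $*$-automorphism of $L^\infty(\Z^\D,\mbD)$; combined with the covariance relation, this data assembles by the standard functoriality of crossed products into a normal automorphism of $L^\infty(\Z^\D,\mbD)\rtimes\Z_{fr}\cong\scrM$ extending $\al(v)$, and since $\Rot$ is a group we obtain an action. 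I expect this last step — justifying the normal extension from measure-theoretic nonsingularity, in exact opposition to the failure recorded in Proposition \ref{prop:SingularF} — to be the only point requiring genuine care, the measure-theoretic heart of the argument being immediate once the preceding proposition is invoked.
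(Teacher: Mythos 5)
Your proof of nonsingularity is correct and is precisely the argument the paper intends: the paper offers no separate proof of Corollary \ref{cor:Rot}, presenting it as an immediate consequence of the preceding proposition, and your route (pushforward has $d$-th marginal $m_{\beta(v^{-1}d)}$; the proposition applied to $v^{-1}\in\Rot$ leaves only finitely many exceptional coordinates; Kakutani's criterion then gives a finite sum) is exactly that argument, as is your verbatim treatment of the ``more generally'' clause. A small remark: Kakutani is even overkill here, since two product measures that differ in only finitely many factors, each replaced by an equivalent one, are equivalent outright, the Radon--Nikodym derivative being the finite product of the factor-wise derivatives.

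One point of scope in your final step deserves flagging. The identification $\scrM\cong L^\infty(\Z^\D,\mbD)\rtimes\Z_{fr}$ you invoke is established in the paper only under the summability hypothesis $\beta\in\ell^1(\D)$ (Proposition \ref{prop:nonsingular} and the theorem following it); without nonsingularity of $\theta:\Z_{fr}\act(\Z^\D,\mbD)$ the crossed product is not even defined. The hypothesis $\beta=b\circ\ell$ alone does not imply summability (constant $\beta$ is the simplest example), so as written your extension argument covers only the summable case. This is admittedly the case of interest --- the introduction's theorem on rotations adopts the $\ell^p$ hypothesis of Theorem \ref{theo:intro-type}, Example \ref{ex:beta} is of this type, and the paper's own ``thus'' is equally loose --- but a hypothesis-free argument is available and worth knowing: work directly with the GNS decomposition $\fH\cong\oplus_{k\in\Z_{fr}}L^2(\Z^\D,k_*\mbD)$ of Lemma \ref{lem:GNSmodule} (adapted to the measures $m_{\beta(d)}$). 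Nonsingularity of $\kappa_v$ together with the covariance relation $\kappa_v\circ\theta_k=\theta_{vk}\circ\kappa_v$ gives $\kappa_{v*}(k_*\mbD)=(vk)_*(\kappa_{v*}\mbD)\sim(vk)_*\mbD$, so the maps $\xi\mapsto(\xi\circ\kappa_v^{-1})\bigl(d(\kappa_{v*}k_*\mbD)/d((vk)_*\mbD)\bigr)^{1/2}$ assemble into a unitary $U_v$ of $\fH$ carrying the $k$-th summand onto the $vk$-th and satisfying $U_v\pi(x)U_v^*=\pi(\al(v)x)$ on the dense subalgebra; then $\Ad(U_v)$ restricts to the desired normal automorphism of $\scrM=\pi(\scrM_0)''$ with no assumption on the $\Z_{fr}$-action.
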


\begin{remark}
\label{rem:tgauge}
Although, it is natural to have an action of the rotation subgroup $\Rot\subset T$ on our field algebra $\scrM$, we expect that it is not possible to define the generator of rotations as a strong limit,
$$s-\lim_{n\rightarrow\infty}\tfrac{1}{2^{n}}(U_{r_{n}}-1)\xi = L_{0}\xi,$$ 
as a consequence of gauge invariance, cp.~Remark \ref{rem:gausslaw} and \cite{Loeffelholz-03-Temporal-Gauge}, see also \cite{Jones16-Thompson, Brot-Jones18-2}. In the conformal field theory interpretation of Proposition \ref{prop:observables}, $L_{0}$ would be the conformal Hamiltonian as opposed to the Kogut-Susskind Hamiltonian which motivates our definition of the heat-kernel states. $L_{0}$ might exist on the level of observables, if we considered the gauge invariant subsector of our model. But, in the case of YM$_{1+1}$ this would result in a topological theory without local degrees of freedom, see e.g.~\cite{Dimock-96-Yang-Mills}.
\end{remark}

Example \ref{ex:beta} provides such $\beta$ and thus some examples of models $\scrM$ admitting the action of the rotation subgroup $\Rot.$

\begin{remark}\label{rem:generalM}
Most of the proofs provided in the special case of the circle group can be adapted to the case of a general abelian group $G$ and some measures $m_d, d\in\D$ on its Pontryagin dual.
However, we do not have any general criteria for nonsingularity and ergodicity of the action.
But, if the action $\wh G_{fr}\act ({\wh G}^\D, \ot_{d\in\D} m_d)$ is nonsingular, then $(\scrM,\varpi)$ is isomorphic to the crossed-product von Neumann algebra $L^\infty({\wh G}^\D, \ot_{d\in\D} m_d)\rtimes \wh G_{fr}$ equipped with the state given by the measure $\ot_{d\in\D} m_d$ that is thus faithful.
If the action is ergodic (which can be proved by showing that $\oplus_\D\wh G$ belongs to the weak completion of $\wh G_{fr}$ inside $\Aut({\wh G}^\D, \ot_{d\in\D} m_d)$), then $\scrM$ is always a hyperfinite type III factor.
\end{remark}

\subsection*{Acknowledgement}
Part of this project was done when both authors were working at the University of Rome, Tor Vergata thanks to the very generous support of Roberto Longo.
We are very grateful to Roberto for giving us this great opportunity besides his constant encouragement and support during various stages of the project. AS thanks the Alexander-von-Humboldt Foundation for generous financial support during his stay at the University of Rome, Tor Vergata. Moreover, AS acknowledges financial support and kind hospitality by the Isaac Newton Institute and the Banff International Research Station where parts of this work were developed.
Furthermore, we are grateful to Thomas Thiemann, Yoh Tanimoto, Luca Giorgetti and Vincenzo Morinelli for comments and discussions during various stages of this work.
{Finally, we are grateful to the referee for constructive comments which improved the clarity of this manuscript.}

\end{document}